\documentclass[aoas]{imsart}

%% Packages
\RequirePackage{amsthm,amsmath,amsfonts,amssymb}
\RequirePackage[authoryear]{natbib}
\RequirePackage[colorlinks,citecolor=blue,urlcolor=blue]{hyperref}
\RequirePackage{graphicx}

\usepackage{psfrag,epsf}
\usepackage{booktabs}
\usepackage{multirow}
\usepackage{bbm}
\usepackage{verbatim}
\usepackage{color}
\usepackage{xr}
\usepackage{enumitem}

\newtheorem{theorem}{Theorem}[subsection]

\newtheorem{remark}{Remark}[subsection]
\newtheorem{lemma}{Lemma}

\newtheorem{proposition}{Proposition}[subsection]

\usepackage{newtxtext}
\usepackage[subscriptcorrection]{newtxmath}
\usepackage[plain,noend]{algorithm2e}

\newcommand{\Var}{\mathop{\rm Var}}
\newcommand{\Cov}{\mathop{\rm Cov}}

\graphicspath{{./art/}}

\makeatletter
\renewcommand{\algocf@captiontext}[2]{#1\algocf@typo. \AlCapFnt{}#2} % text of caption
% default definition
\def\@algocf@capt@plain{top}
\renewcommand{\algocf@makecaption}[2]{%
  \addtolength{\hsize}{\algomargin}%
  \sbox\@tempboxa{\algocf@captiontext{#1}{#2}}%
  \ifdim\wd\@tempboxa >\hsize%     % if caption is longer than a line
    \hskip .5\algomargin%
    \parbox[t]{\hsize}{\algocf@captiontext{#1}{#2}}% then caption is not centered
  \else%
    \global\@minipagefalse%
    \hbox to\hsize{\box\@tempboxa}% else caption is centered
  \fi%
  \addtolength{\hsize}{-\algomargin}%
}
\makeatother

\def\dis{\displaystyle}

\newcommand{\bigCI}{\mathrel{\text{\scalebox{1.07}{$\perp\mkern-10mu\perp$}}}}

\begin{document}

\begin{frontmatter}
\title{Subgroup analysis in multi level hierarchical\\ cluster randomized trials}
\runtitle{Subgroup analysis by S. Chakraborty et~al.}

\begin{aug}
%%%%%%%%%%%%%%%%%%%%%%%%%%%%%%%%%%%%%%%%%%%%%%%
%% Only one address is permitted per author. %%
%% Only division, organization and e-mail is %%
%% included in the address.                  %%
%% Additional information can be included in %%
%% the Acknowledgments section if necessary. %%
%% ORCID can be inserted by command:         %%
%% \orcid{0000-0000-0000-0000}               %%
%%%%%%%%%%%%%%%%%%%%%%%%%%%%%%%%%%%%%%%%%%%%%%%
\author[A]{\fnms{Shubhadeep}~\snm{Chakraborty} \thanks{\textbf{Corresponding author}}\ead[label=e1]{schakraborty.stat@gmail.com}},
\author[B]{\fnms{Bo}~\snm{Wang}\ead[label=e2]{Bo.Wang@umassmed.edu}}
\author[C]{\fnms{Ram}~\snm{Tiwari}\ead[label=e3]{ram.c.tiwari21@gmail.com}}
\and
\author[D]{\fnms{Samiran}~\snm{Ghosh}\ead[label=e4]{Samiran.Ghosh@uth.tmc.edu}}
%%%%%%%%%%%%%%%%%%%%%%%%%%%%%%%%%%%%%%%%%%%%%%
%% Addresses                                %%
%%%%%%%%%%%%%%%%%%%%%%%%%%%%%%%%%%%%%%%%%%%%%%
\address[A]{Bristol Myers Squibb Company, USA \printead[presep={ ,\ }]{e1}}

\address[B]{Department of Population and Quantitative Health Sciences, University of Massachusetts Medical School, USA \printead[presep={ ,\ }]{e2}}

\address[C]{Global Stat Solutions, USA \printead[presep={ ,\ }]{e3}}

\address[D]{Department of Biostatistics and Data Science, University of Texas School of Public Health, USA \printead[presep={,\ }]{e4}}
\end{aug}

\begin{abstract}
Cluster or group randomized trials (CRTs) are increasingly used for both behavioral and system-level interventions, where entire clusters are randomly assigned to a study condition or intervention. Apart from the assigned cluster-level analysis, investigating whether an intervention has a differential effect for specific subgroups remains an important issue, though it is often considered an afterthought in pivotal clinical trials. Determining such subgroup effects in a CRT is a challenging task due to its inherent nested cluster structure. Motivated by a real-life HIV prevention CRT, we consider a three-level cross-sectional CRT, where randomization is carried out at the highest level and subgroups may exist at different levels of the hierarchy. We employ a linear mixed-effects model to estimate the subgroup-specific effects through their maximum likelihood estimators (MLEs). Consequently, we develop a consistent test for the significance of the differential intervention effect between two subgroups at different levels of the hierarchy, which is the key methodological contribution of this work. We also derive explicit formulae for sample size determination to detect a differential intervention effect between two subgroups, aiming to achieve a given statistical power in the case of a planned confirmatory subgroup analysis. The application of our methodology is illustrated through extensive simulation studies using synthetic data, as well as with real-world data from an HIV prevention CRT in The Bahamas.
\end{abstract}

\begin{keyword}
\kwd{Cluster randomized trials}
\kwd{consistent test}
\kwd{sample size determination}
\kwd{three-level CRT}
\kwd{variance component estimation}
\end{keyword}

\end{frontmatter}
%%%%%%%%%%%%%%%%%%%%%%%%%%%%%%%%%%%%%%%%%%%%%%
%% Please use \tableofcontents for articles %%
%% with 50 pages and more                   %%
%%%%%%%%%%%%%%%%%%%%%%%%%%%%%%%%%%%%%%%%%%%%%%
%\tableofcontents

\section{Introduction}\label{sec:intro}
For evaluating the efficacy of interventions for hierarchically structured data, typically arising from epidemiological or community-based clinical studies (see, for example, \cite{burns1995, platt2010, vicens2011}, among others), %; for example, in a study comparing two teaching methods in classrooms or the effects of large and small classes, We refer the reader to \cite{finn1990, burns1995, mosteller1995, platt2010, vicens2011}, among many others, for historical works in this area. 
multi-level cluster randomized trials (CRTs) may be designed, where randomization, instead of at the subject or individual level, can be done at different cluster levels, such as the patients' provider, a group of providers, several practices in a health system or a specific geographical area, etc. \citep{cunningham2016}.  
This type of trials are preferred in situations when randomization at the subject level is impossible, cost-prohibitive from the perspective of group dynamics, or can pose significant challenges in trial validity due to contamination and other biases. A large volume of literature is available in this area that deals with CRTs, for example, two-level CRT \citep{Donner1998}, three-level CRTs \citep{Moerbeek}, correlated data \citep{laird1982random, hedeker, ahn2014sample}, three-level longitudinal CRTs \citep{heo_leon_2008}, multi-level CRTs \citep{Hedges2014}, four-level CRTs  \citep{Ghosh2022} and the references therein. Heterogeneity of treatment effect (HTE) can be defined as the non-random, explainable variability of treatment effects observed for individuals within a population. Subgroup analysis is one of the most common analytic approaches for examining HTE. Notably, subgroup analysis can serve either exploratory or confirmatory purposes, with the ultimate goal being to ascertain the differential effect of treatment, if any, among subgroups of trial participants. From a confirmatory perspective, the randomized controlled trial (RCT) must be designed with appropriate sample size and statistical power to facilitate confirmatory conclusions about individual subgroup(s). Several authors \citep{cui2002, pocock2002, lagakos2006} have discussed the challenges associated with such conclusions and have noted that the results of subgroup analysis are seldom regarded as definitive evidence. They recommend that primary emphasis should be placed on the overall primary outcome of the study, with appropriate planning for subgroup analysis in the event of evidence suggesting treatment heterogeneity across plausible subgroups. The issues become further complicated when the trial has a natural hierarchy, such as in CRT. This is because the heterogeneity of treatment effects can originate from multiple levels, and modeling such effects is not discussed in any of these papers to the best of our knowledge. We would like to point out that both \cite{heo_leon_2008} and \cite{Ghosh2022} did mention the existence of subgroups in the context of CRTs, but no formal analysis was carried out. Some recent works \citep{wang2023, li2023} develop formal methods for testing subgroup-specific treatment effects and analyzing treatment effect heterogeneity; however, they either consider a two-level CRT or consider subgroups only at the lowest or participant level.

\subsection{{Motivating example: Identification of differential subgroup effects in a three-level CRT arising from a HIV prevention program in the Bahamas}}\label{intro:motiv_eg}
%We describe a practical example that motivated the methodology development of this article. 
{The methodology we develop in this article is motivated by the following real-world problem.} The Bahamas has the second highest prevalence rate of HIV in the Caribbean, with an estimated 3.3\% of Bahamian adults infected with HIV. AIDS has been the leading cause of death among young adults 15 to 44 years of age in the Bahamas \citep{figueroa2014}. To reverse the escalating rates of HIV in the Bahamas, the government embarked on an inter-agency approach, targeting children and adolescents. An education-based HIV prevention program {was} developed, which {was} delivered by teachers to middle-school students in a cluster randomized trial (CRT). The natural hierarchy of the trial consists of three levels: at the top level {(level three)} are schools, followed by classes/teachers {(level two)}, and finally, at the lowest level {(level one)}, are individual students, {with the randomization being done at the school level}. Information was collected on students' HIV/AIDS knowledge, condom use skills, condom use self-efficacy, intention to use a condom, and sexual behaviors. Condom use self-efficacy was one of the primary outcome variables for students considered in the present paper. {Subgrouping is possible at various levels of the hierarchy, for example, among students and/or classes/teachers. For subgrouping at levels one and two, whether gender of the students (Male vs. Female) and class size (Large vs. Small class), respectively, have any differential subgroup effect on the outcome, serves as the motivating question of interest.} %, the methodology of which we present in this article.

\subsection{Our contributions}\label{intro:our_contri}
In the two-level CRT context, typical subgroup analysis is targeted at the lowest level (e.g., patient/student level) for a pre-defined relatively small number of key demographics or clinical covariates \citep{varadhan2013, lipkovich2024}. Albeit, in a multi-level CRT subgroup can exist at any level. In this article, we consider a three-level hierarchical cluster randomized trial design in which the level one units (for example, patients/students) are nested within level two units (for example, clinicians/teachers), which are in turn nested within level three units (for example, clinics/schools); and the randomization to the experimental or control intervention is done at the highest cluster level, that is, level three. We further consider that the effect of the intervention may vary across subgroups of the level one units (based on characteristics such as gender, age, prior therapies, disease severity, genotype, hormonal levels, etc. of the patients) or at the level two units (for example, if the problem of interest is to compare the effect of two different teaching methods, effects of large and small classes, etc. on the students' performances). The statistical term to describe differential intervention effects across subgroups, is the interaction between the intervention and the subgroup variable, as depicted later in models (\ref{Model:Subg1}) and (\ref{Model:Subg2}). In this work, we consider a pairwise comparison between two subgroups (both at level one or two) for the intervention effect. The main methodological innovation of this work is to develop a consistent test for testing the significance of differential intervention effect between the two subgroups. We first obtain an explicit expression for the maximum likelihood estimator (MLE) of the differential intervention effect, as well as its variance. Consequently we standardize the MLE by its standard error (SE) to obtain a test statistic to test for the significance of the differential intervention effect. Towards this, we first establish that explicit closed forms exist for the MLEs of the different variance components of the underlying models. Subsequently, we derive the explicit expressions of the variance component estimators, and rigorously establish their consistency. These results serve key theoretical contributions of this work, play instrumental role in establishing that the resulting test statistic is asymptotically normal, and lead to a consistent test for testing the significance of the differential intervention effect. We also derive explicit formulae for determining the sample sizes required for detecting a differential intervention effect between the two subgroups to achieve a given statistical power. Our methodology is supported by extensive numerical investigations conducted on synthetic data. To the best of our knowledge, we are unaware of earlier works in the literature that conduct a formal analysis to test for differential subgroup effects in a multi-level CRT, with subgroups existing at different cluster levels.

The rest of the article is outlined as follows. In Section \ref{sec:model}, we present the mixed-effects models for the three-level hierarchical cluster randomized trial with subgrouping at level one and two. In Section \ref{sec:methods}, we present the main methodological innovation of this work on developing a consistent test for the significance of the differential intervention effect between the two subgroups. Simulation studies with varying design parameters are outlined in Section \ref{sec:simulation}. Finally, in Section \ref{sec:real}, {we apply the proposed methodology} on the %aforementioned dataset 
{motivating  example of the HIV prevention program in the Bahamas} and present our findings.

\section{Statistical Model}\label{sec:model}

\subsection{The unified framework}\label{subsec:general_model}
Consider a three-level hierarchical cluster randomized study design in which, for example, patients/students (level one units) are nested within physicians/teachers/classes (level two units), and physicians/teachers/classes are nested within hospitals/schools (level three units). Further, consider the level one or the level two units are categorized into two subgroups according to some attribute. Examples include grouping the patients/students (level one units) according to gender into males and females, or grouping the classes (level two units) into large and small. %physicians/teachers (level two units) into high and low performers based on patient/student ratings. 
The goal of our work is two-fold:
\begin{itemize}
    \item[i)] To test for whether there is a significant difference in the intervention effect across the two subgroups; i.e., whether there is a differential subgroup effect.
    \item[ii)] To determine the sample size necessary to detect the differential subgroup effect to achieve a given statistical power.
\end{itemize}
\noindent This is important for a confirmatory analysis of subgrouping at any hypothesized level. In general, there might be more than two subgroups of interest; in this paper we focus on conducting a pairwise comparison of the intervention effect on two subgroups (we name them Subgroup 1 and Subgroup 2). Our results can be further extended for multiple subgroups with appropriate modification of hypothesis under investigation.

We begin with introducing some notations. For subgrouping at level one, let $Y_{ijgk}$ denotes a continuous response recorded for the $k^{th}$ patient/student ($1\leq k \leq N_1$) belongs to the $g^{th}$ subgroup
who was treated/instructed by the $j^{th}$ physician/teacher ($1\leq j \leq N_2$) within the $i^{th}$ hospital/school ($1\leq i \leq 2N_3$), with $g=1,2$. Similarly for level two,  $Y_{igjk}$ denotes the $k^{th}$ patient/student and the $j^{th}$ physician/teacher belongs to the $g^{th}$ subgroup, within the $i^{th}$ hospital/school. In other words, $g$ is the index for the subgroups defined either at level one or two.
Let $X_{ijgk}$ and $X_{igjk}$ denote the intervention assignment indicator for the two cases (subgroups defined at level two or one) such that \,$X_{ijgk}$ or $X_{igjk} = 1$ or $0$\, depending on whether the $i$th level three unit is assigned to an experimental intervention or a control intervention, respectively. We consider the randomization to the experimental or control intervention is done at level three, and therefore  $X_{ijgk}$ or $X_{igjk}=X_i$, $\forall$ $g, j$ and $k$. 

For notational simplicity we consider a balanced design where we assume $N_3$ level three units have been assigned to both the experimental and control interventions. Without any loss of generality, we assume that the first $N_3$ level three units ($i=1, \dots, N_3$) are assigned to receive the experimental intervention, and the remaining $N_3$ level three units ($i=N_3+1, \dots, 2N_3$) receive the control intervention.
In the following two subsections we present mixed-effects linear models for the continuous outcome $Y$ in case the subgroups are defined at levels two and one, respectively. We begin with introducing some notations.

\emph{Notations}. Denote by `$\otimes$' the Kronecker product of matrices. We use the notations `$\overset{P}{\to}$' and `$\overset{d}{\to}$' to denote convergence in probability and convergence in distribution, respectively. `O' and `o' stand for the usual notations in mathematics : `is no larger than' and `is ultimately smaller than'. Throughout the paper, we denote $1_a = (\underbrace{1, \dots, 1}_{a \, \textrm{times}})$ and $0_a = (\underbrace{0, \dots, 0}_{a \, \textrm{times}})$, for any generic positive \textit{a}.

\subsection{Subgrouping at level one}\label{subsec:subg1}
In this case, a three-level hierarchical mixed effects linear model (with the subgroup information) can be considered as follows:
\begin{align}\label{Model:Subg1}
	Y_{ijgk} = \beta_0 \,+\, \tau\,L_g \,+\, \xi\,X_{i} \,+\, \delta\, L_g\,X_{i} \,+\, u_i \,+\, u_{j(i)} \,+\, u_{g(ij)} \,+\, \epsilon_{ijgk}\,, 
\end{align}
for $k=1,\dots,N_1$, $j=1,\dots,N_2$, $i=1,\dots,2N_3$\, and\, $g=1,2$. The variable $L_g$ is the indicator for subgroup memberships, taking the value 1 or 0 depending on whether the patient/student (level one unit) under consideration falls under Subgroup 1 or 2. Suppose that out of the $N_1$ patients/students (level one units) treated/instructed by a common physician/teacher within a hospital/school, $N_{11}$ and $N_{12}$ of them belong to Subgroup 1 and Subgroup 2, respectively, with\, $N_{11} + N_{12} = N_1$.

The parameters $\beta_0$, $\tau$, $\xi$ and $\delta$ are the fixed effect parameters in the model. The parameter $\beta_0$ represents an overall fixed intercept; $\tau$ denotes the (slope) coefficient of the subgroup effect; $\xi$ denotes the coefficient of the intervention effect; and $\delta$ stands for the intervention by subgroup effect. We assume that the random effects and the error term in the model are normally distributed: $u_i \sim N(0, \sigma_3^2)$, $u_{j(i)} \sim N(0, \sigma_{2}^2)$, $u_{g(ij)} \sim N(0, \sigma_{\text{grp}}^2)$ and $e_{ijgk} \sim N(0, \sigma_e^2)$ for all $i, j, g$ and $k$. We further assume that $u_i \bigCI u_{j(i)} \bigCI u_{g(ij)} \bigCI e_{ijgk}$, that is, all the random components are mutually independent. From the model in (\ref{Model:Subg2}), the mean of the response $Y_{ijgk}$ can be expressed as:
\begin{equation}\label{eq_mean}
      E(Y_{ijgk}) \;=\; 
      \begin{cases}
      \beta_0 \,+\, \tau \,+\, \xi \,+\, \delta & \qquad \text{for}\;\; \text{Subgroup 1} \;(\text{i.e.,}\; L_g=1) \;\; \text{and}\;\; X_i=1\,, \\
      \beta_0 \,+\, \tau  & \qquad \text{for}\;\; \text{Subgroup 1}\;(\text{i.e.,}\; L_g=1) \;\; \text{and}\;\; X_i=0\,, \\
      \beta_0 \,+\, \xi  & \qquad \text{for}\;\; \text{Subgroup 2}\;(\text{i.e.,}\; L_g=0) \;\; \text{and}\;\; X_i=1\,,\\ 
      \beta_0  & \qquad \text{for}\;\; \text{Subgroup 2}\;(\text{i.e.,}\; L_g=0) \;\; \text{and}\;\; X_i=0\,.
    \end{cases}
\end{equation}
From (\ref{eq_mean}), note that\, $\beta_0 = E(Y_{ijgk}\,|\, L_g=0,\, X_i=0)$,\, $\beta_0 + \tau = E(Y_{ijgk}\,|\, L_g=1,\, X_i=0)$,\, $\beta_0 + \xi = E(Y_{ijgk}\,|\, L_g=0,\, X_i=1)$\, and\, $\beta_0 + \tau + \xi + \delta = E(Y_{ijgk})$. This essentially implies that
\begin{align}\label{eq_interpretations}
    \begin{split}
        \beta_0 \;&=\; E(Y_{ijgk}\,|\, L_g=0,\, X_i=0)\,,\\
        \tau \;&=\; E(Y_{ijgk}\,|\, L_g=1,\, X_i=0) \,-\, E(Y_{ijgk}\,|\, L_g=0,\, X_i=0)\,,\\
        \xi \;&=\; E(Y_{ijgk}\,|\, L_g=0,\, X_i=1) \,-\,  E(Y_{ijgk}\,|\, L_g=0,\, X_i=0)\,,\\
        \text{and} \qquad \delta \;&=\; \big[ E(Y_{ijgk}\,|\, L_g=1,\, X_i=1) \,-\, E(Y_{ijgk}\,|\, L_g=1,\, X_i=0)\big] \\
        & \;\;\;\; -\, \big[(E(Y_{ijgk}\,|\, L_g=0,\, X_i=1) \,-\, E(Y_{ijgk}\,|\, L_g=0,\, X_i=0)\big] \,.
    \end{split}
\end{align}
From (\ref{eq_interpretations}), we can interpret the fixed effect parameters in our model as follows: $\beta_0$ represents the effect of the control intervention on Subgroup 2; $\tau$ stands for the differential effect of the control intervention between Subgroup 1 and Subgroup 2;\, $\xi$ denotes the treatment effect of the experimental intervention (simply treatment effect, henceforth) on Subgroup 2; and finally $\delta$ represents the differential treatment effects between the two subgroups. Thus from (\ref{eq_mean}) and (\ref{eq_interpretations}), the treatment effects for the two subgroups can be expressed as:
\begin{align}\label{def_subgrp_effects}
    \begin{split}
        \xi \,+\, \delta \;&=:\; \delta_1 \;=\; E(Y_{ijgk}\,|\, L_g=1,\, X_i=1) \,-\, E(Y_{ijgk}\,|\, L_g=1,\, X_i=0) \qquad \text{for Subgroup 1} \,,\\
        \xi \;&=:\; \delta_2 \;=\; E(Y_{ijgk}\,|\, L_g=0,\, X_i=1) \,-\,  E(Y_{ijgk}\,|\, L_g=0,\, X_i=0) \qquad \text{for Subgroup 2}\,,
    \end{split}
\end{align}
with\, $\delta = \delta_1 - \delta_2$. Under model (\ref{Model:Subg1}), and with the specified assumptions on the random components, the covariance between the responses for varying subgroups and level one, two, and three indices can be expressed as:
\begin{equation}\label{cov_level1}
    \Cov(Y_{ijgk}, Y_{i'j'g'k'}) \;=\; \begin{cases}
        \sigma_3^2 \,+\, \sigma_2^2 \,+\, \sigma_{\text{grp}}^2 \,+\, \sigma_e^2 \;=:\; \sigma^2 & \qquad \text{for}\;\; i=i',\,j=j',\,g=g',\,k=k'\,,\\
        \sigma_3^2 \,+\, \sigma_2^2 \,+\, \sigma_{\text{grp}}^2 \;=:\; \sigma^2\,\rho_1 & \qquad \text{for}\;\; i=i',\,j=j',\,g=g',\,k\neq k'\,,\\
        \sigma_3^2 \,+\, \sigma_2^2 \;=:\; \sigma^2\,\rho_{(1)} & \qquad \text{for}\;\; i=i',\,j=j',\,g\neq g'\,,\\
        \sigma_3^2 \,+\, \sigma_{\text{grp}}^2  \;=:\; \sigma^2\,\rho_{(2)} & \qquad \text{for}\;\; i=i',\,j\neq j',\,g=g'\,,\\
        \sigma_3^2 \;=:\; \sigma^2\,\rho_2 & \qquad \text{for}\;\; i=i',\,j\neq j',\,g\neq g'\,,\\
        0   & \qquad \text{otherwise}\,,
    \end{cases}
\end{equation}
with $\sigma^2$ as the variance of the outcome $Y_{ijkg}$, where $\rho_1, \rho_{(1)}, \rho_{(2)}$ and $\rho_2$ are the intra-class correlations among different patients/students (level one units) who are treated/instructed by a common physician/teacher and fall under the same subgroup; are treated/instructed by a common physician/teacher but fall under different subgroups; are treated/instructed by different physicians/teachers but fall under the same subgroup; and are treated/instructed by different physicians/teachers and fall under different subgroups, respectively. The intra-class correlation coefficients satisfy $1 \geq \rho_1 \geq \rho_{(1)}, \rho_{(2)} \geq \rho_2 \geq 0$, along with the constraint that $\sigma^2 \rho_1 - \sigma^2 \rho_{(1)} - \sigma^2 \rho_{(2)} + \sigma^2 \rho_2 =0$, implying\, $\rho_1 + \rho_2 = \rho_{(1)} + \rho_{(2)}$.

In Section \ref{subsec:subg1_theory}, we develop a methodology to test for a differential treatment effect between the two subgroups at level one, considering a balanced design. We assume $N_{11} = N_{12} = n$ and $N_1 = 2n$ in the model presented in (\ref{Model:Subg1}), or in other words, both Subgroup 1 and Subgroup 2 have identical number of patients/students (level one units). For balanced data, using mixed models notations, model (\ref{Model:Subg1}) can be written as
\begin{align}\label{Model:Subg1_LM}
  Y \;=\; X\beta\,+\, Z_1\, U_1 +\, Z_2\, U_2 +\, Z_3\, U_3 +\, Z_0\, \epsilon\,,   
\end{align}
where $Y$ is the $2N_3 N_2\, 2n \times 1$ vector of responses; $X$ is the $2N_3 N_2\, 2n \times 4$ design matrix for the fixed effects given by
\begin{align*}
    X \;=\; \begin{pmatrix}
        1_{2N_3 N_2 2n} \;&\; 
        1_{2N_3 N_2} \otimes \begin{pmatrix}
            1_n\\
            0_n
        \end{pmatrix}  \;&\; \;
        \begin{matrix}
            1_{N_3 N_2 2n}\\
            0_{N_3 N_2 2n}
        \end{matrix} \;&\; 
        \begin{matrix}
            1_{N_3 N_2} \otimes \begin{pmatrix}
            1_n\\
            0_n
        \end{pmatrix} \\
        0_{N_3 N_2 2n}
        \end{matrix}      
    \end{pmatrix}\,;
\end{align*}
$\beta=(\beta_0, \tau, \xi, \delta)^T$ is the $4\times 1$ vector of fixed effects coefficients; $U_1=(\!( u_i )\!),\, U_2=(\!( u_{j(i)} )\!)$ and $U_3=(\!( u_{g(ij)} )\!)$ are the $2N_3 \times 1$, $2N_3 N_2 \times 1$ and $2N_3 N_2\,2 \times 1$ vectors of random effects, respectively; $\epsilon$ is the error vector; and $Z_0 = I_{2N_3} \otimes I_{N_2} \otimes I_2 \otimes I_n,\, Z_1 = I_{2N_3} \otimes 1_{N_2} \otimes 1_2 \otimes 1_n,\, Z_2 = I_{2N_3} \otimes I_{N_2} \otimes 1_2 \otimes 1_n$ and $Z_3 = I_{2N_3} \otimes I_{N_2} \otimes I_2 \otimes 1_n$ are the design matrices for the random effects.

\subsection{Subgrouping at level two}\label{subsec:subg2}
In this case, a three-level hierarchical mixed effects linear model (with the subgroup information) can be considered as follows:
\begin{align}\label{Model:Subg2}
\begin{split}
    Y_{igjk} \;&=\; \beta_0 \,+\, \tau\,L_g \,+\, \xi\,X_{i} \,+\, \delta\, L_g\,X_{i} \,+\, u_i \,+\, u_{g(i)} \,+\, u_{j(ig)} \,+\, \epsilon_{igjk}\,,
    %\textrm{that is,} \qquad Y_{igjk} \;&=\; \beta_0 \,+\, \tau\,L_g \,+\, \xi\,X_{i} \,+\, \delta\, L_g\,X_{i} \,+\, u_i \,+\, u_{g(i)} \,+\, u_{j(ig)} \,+\, \epsilon_{igjk}\,,
    \end{split}
\end{align}
Next, we develop a test for testing a differential treatment effect between the two subgroups at level two in Section \ref{subsec:subg2_theory}, considering a balanced design. We assume $N_{21} = N_{22} = n$ and $N_2 = 2n$ in the model presented in (\ref{Model:Subg2}), or in other words, both Subgroup 1 and Subgroup 2 have identical number of physicians/teachers (level two units). For balanced data, using mixed models notations, model (\ref{Model:Subg2}) can be written as
\begin{align}\label{Model:Subg2_LM}
  Y \;=\; X\beta\,+\, Z_1\, U_1 +\, Z_2\, U_2 +\, Z_3\, U_3 +\, Z_0\, \epsilon\,,   
\end{align}
where $Y$ is the $2N_3 2n N_1 \times 1$ vector of responses; $X$ is the $2N_3 2n N_1 \times 4$ design matrix for the fixed effects given by
\begin{align*}
    X \;=\; \begin{pmatrix}
        1_{2N_3 2n N_1} \;&\; 
        1_{2N_3} \otimes \begin{pmatrix}
            1_n\\
            0_n
        \end{pmatrix} \otimes 1_{N_1} \; &\;\;
        \begin{matrix}
            1_{N_3 2n N_1}\\
            0_{N_3 2n N_1}
        \end{matrix} \;&\; 
        \begin{matrix}
            1_{N_3} \otimes \begin{pmatrix}
            1_n\\
            0_n
        \end{pmatrix} \otimes 1_{N_1}\\
        0_{N_3 2n N_1}
        \end{matrix}      
    \end{pmatrix}\,;
\end{align*}
$\beta=(\beta_0, \tau, \xi, \delta)^T$ is the $4\times 1$ vector of fixed effects coefficients; $U_1=(\!( u_i )\!),\, U_2=(\!( u_{g(i)} )\!)$ and $U_3=(\!( u_{j(ig)} )\!)$ are the $2N_3 \times 1$, $2N_3 2 \times 1$ and $2N_3 2n \times 1$ vectors of random effects, respectively; $\epsilon$ is the error vector; and $Z_0 = I_{2N_3} \otimes I_2 \otimes I_n \otimes I_{N_1},\, Z_1 = I_{2N_3} \otimes 1_2 \otimes 1_n \otimes 1_{N_1},\, Z_2 = I_{2N_3} \otimes I_2 \otimes 1_n \otimes 1_{N_1}$ and $Z_3 = I_{2N_3} \otimes I_2 \otimes I_n \otimes 1_{N_1}$ are the design matrices for the random effects.

Under model (\ref{Model:Subg2}), and with the specified assumptions on the random components, the covariance between the responses for varying subgroups and level one, two and three indices can be expressed as follows: 
\begin{equation}\label{cov_level2}
    \Cov(Y_{igjk}, Y_{i'g'j'k'}) \;=\; \begin{cases}
        \sigma_3^2 \,+\, \sigma_2^2 \,+\, \sigma_{1}^2 \,+\, \sigma_e^2 \;=\; \sigma^2 & \qquad \text{for}\;\; i=i',\,g=g',\,j=j',\,k=k'\,,\\
        \sigma_3^2 \,+\, \sigma_2^2 \,+\, \sigma_{1}^2 \;=:\; \sigma^2\,\rho_1 & \qquad \text{for}\;\; i=i',\,g=g',\,j=j',\,k\neq k'\,,\\
        \sigma_3^2 \,+\, \sigma_{2}^2  \;=:\; \sigma^2\,\rho_{(2)} & \qquad \text{for}\;\; i=i',\,g=g',\,j\neq j'\,,\\
        \sigma_3^2 \;=:\; \sigma^2\,\rho_2 & \qquad \text{for}\;\; i=i',\,g\neq g'\,,\\
        0   & \qquad \text{otherwise}\,,
    \end{cases}
\end{equation}
where $\rho_1, \rho_{(2)}$ and $\rho_2$ are the intra-class correlations among different patients/students (level one units) who are treated/instructed by a common physician/teacher; are treated/instructed by different physicians/teachers who fall under the same subgroup; and are treated/instructed by different physicians/teachers who fall under different subgroups, respectively. The intra-class correlation coefficients satisfy\, $1 \geq \rho_1 \geq \rho_{(2)} \geq \rho_2 \geq 0$.

To summarize, our work focuses on testing for whether the treatment effects are significantly different across the two subgroups in a three-level cluster randomized trial, when the randomization is done at the highest level of the hierarchy. In other words, we aim to test for the significance of the differential treatment effect, $\delta$, between the two subgroups, that is, $H_{0} : \delta = 0$ \,vs\, $H_{a} : \delta \neq 0$. For both the cases where the subgroups are defined by categorizations of level one and level two units, we first derive the expressions of the maximum likelihood estimators (MLE) of $\delta$; then obtain explicit expressions of the variances of the MLEs of $\delta$, followed by deriving explicit, closed form expressions of the MLEs of the variance components and establishing their consistency properties. Finally, we propose a test statistic to test for $H_{0}$ \,vs\, $H_{a}$ and establish its asymptotic normality. In addition, we also derive explicit formulae for sample size determination for detecting the differential treatment effect, to achieve a given statistical power. The methodological developments are presented in details in the next section.

\section{Methodology}\label{sec:methods}

In this section, we develop a test for a differential treatment effect between the two subgroups (at both level two and level one), and to determine the sample size required to achieve a certain pre-specified power for detecting the differential subgroup effect. In view of the long and technical nature of the derivations, the proofs of all the propositions and theorems in subsections \ref{subsec:subg1_theory} and \ref{subsec:subg2_theory} are relegated to the supplementary materials.
\begin{comment}
In both cases, we consider a balanced design, that is:
\begin{itemize}
    \item in case of subgrouping at level one, we assume $N_{11} = N_{12} = n$ and $N_1 = 2n$ in the model presented in (\ref{Model:Subg1}), or in other words, both Subgroup 1 and Subgroup 2 have identical number of patients/students (level one units); and
    \item in case of subgrouping at level two, we assume $N_{21} = N_{22} = n$ and $N_2 = 2n$ in the model presented in (\ref{Model:Subg2}), or in other words, both Subgroup 1 and Subgroup 2 have identical number of physicians/teachers (level two units).
\end{itemize}
\end{comment}

\subsection{Subgrouping at level one}\label{subsec:subg1_theory}

In this subsection, we consider the case when the subgroups are defined by categorizations of the level one units (patients/students), that is, we focus on model (\ref{Model:Subg1}). We begin with the following proposition that presents an explicit expression of the maximum likelihood estimator (MLE) of $\delta$. %The proof is presented in the appendix.

\begin{proposition}\label{Prop_MLE_delta1}
{\rm The MLE of the differential treatment effect $\delta$ is given by}
\begin{align*}
        \hat{\delta} \;&=\; \left( \frac{1}{N_3 N_2 \,n} \dis \sum_{i=1}^{N_3} \sum_{j=1}^{N_2} \sum_{k=1}^{n} Y_{ij1k} \;-\; \frac{1}{N_3 N_2 \,n} \dis \sum_{i=N_3+1}^{2N_3} \sum_{j=1}^{N_2} \sum_{k=1}^{n} Y_{ij1k} \right)\\
        & \qquad - \; \left( \frac{1}{N_3 N_2\, n} \dis \sum_{i=1}^{N_3} \sum_{j=1}^{N_2} \sum_{k=1}^{n} Y_{ij2k} \;-\; \frac{1}{N_3 N_2 \,n} \dis \sum_{i=N_3+1}^{2N_3} \sum_{j=1}^{N_2} \sum_{k=1}^{n} Y_{ij2k} \right)\;=: \; \hat{\delta}_1 \,-\, \hat{\delta}_2\,.
\end{align*}
\end{proposition}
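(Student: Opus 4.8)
The plan is to exploit the fact that, for a Gaussian linear mixed model, the joint maximum likelihood estimator of the fixed-effects vector $\beta$ is the generalized least squares (GLS) estimator, and that the balanced Kronecker structure of the design collapses this GLS estimator to ordinary least squares (OLS). Writing model (\ref{Model:Subg1_LM}) as $Y \sim N(X\beta, V)$ with $V = \sigma_3^2 Z_1 Z_1^T + \sigma_2^2 Z_2 Z_2^T + \sigma_{\text{grp}}^2 Z_3 Z_3^T + \sigma_e^2 I$, the log-likelihood in $\beta$, for any fixed vector of variance components $\theta = (\sigma_3^2, \sigma_2^2, \sigma_{\text{grp}}^2, \sigma_e^2)$, is maximized at $\hat\beta(\theta) = (X^T V^{-1} X)^{-1} X^T V^{-1} Y$. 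If I can show that this maximizer does not depend on $\theta$ and coincides with the OLS estimator $(X^T X)^{-1} X^T Y$, then a profile-likelihood argument immediately yields that the overall MLE of $\beta$ has $\hat\beta$ equal to the OLS estimator, irrespective of the estimated variance components.

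The first key step is therefore to verify Zyskind's condition, namely that $V$ leaves the column space of $X$ invariant, $V\,\mathrm{col}(X)\subseteq\mathrm{col}(X)$, equivalently $VX = XC$ for some $4\times 4$ matrix $C$; this guarantees OLS $=$ GLS. Here the balanced design does all the work: each $Z_k Z_k^T$ is a Kronecker product, over the three hierarchy levels and the subgroup factor, of identity matrices $I_a$ and all-ones matrices $J_a = 1_a 1_a^T$ (for instance $Z_1 Z_1^T = I_{2N_3}\otimes J_{N_2}\otimes J_2\otimes J_n$), while every column of $X$ is a Kronecker product of all-ones vectors $1_a$ and the subgroup contrast $(1_n;0_n)$. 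Using $J_a 1_a = a\,1_a$ together with the observation that the all-ones factor $J_2$ sends the subgroup contrast to a vector constant across subgroups, which itself lies in the span of the intercept and intervention columns, one checks factor by factor, via $(A\otimes B)(C\otimes D) = AC\otimes BD$, that $V$ maps each column of $X$ back into $\mathrm{col}(X)$. I expect this invariance check to be the main obstacle, as it requires careful bookkeeping of which all-ones factor acts on which column block.

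With OLS $=$ GLS established, the remaining step is the explicit OLS computation. Because the four columns of $X$ encode a balanced $2\times 2$ factorial (the intercept, the subgroup main effect $L_g$, the intervention main effect $X_i$, and the interaction $L_g X_i$), the Gram matrix $X^T X$ is readily inverted and the fixed-effects estimates reduce to the familiar factorial contrasts of the four cell means $\bar Y_{g,x}$, with $g$ indexing subgroup and $x$ indexing treatment status. In particular the interaction coordinate is the double difference $\hat\delta = (\bar Y_{1,1} - \bar Y_{1,0}) - (\bar Y_{2,1} - \bar Y_{2,0})$, where each $\bar Y_{g,x}$ is the average of the $N_3 N_2 n$ responses in the corresponding treatment-by-subgroup cell. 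Matching $\bar Y_{g,x}$ to the cell sums $\tfrac{1}{N_3 N_2 n}\sum_i\sum_j\sum_k Y_{ijgk}$ over the appropriate range of $i$ (namely $i=1,\dots,N_3$ for $x=1$ and $i=N_3+1,\dots,2N_3$ for $x=0$) yields exactly the stated expression $\hat\delta = \hat\delta_1 - \hat\delta_2$, completing the proof.
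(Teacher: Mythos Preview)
Your approach is correct but differs from the paper's. The paper (which proves the level-two analogue, Proposition~\ref{Prop_MLE_delta2}, and states that the level-one case follows the same lines) does not invoke Zyskind's condition. Instead it first derives an explicit closed form for $V^{-1}$ as a linear combination of four Kronecker projectors (Lemma~\ref{Lemma_V}, via the Searle--Henderson decomposition), then computes $X^T V^{-1}$ term by term, observes that two of the four summands annihilate $X$, and solves the resulting $4\times 4$ linear system $X^T V^{-1} X\beta = X^T V^{-1} Y$ directly for the fixed effects. Your route is shorter and more structural: by checking $V\,\mathrm{col}(X)\subseteq\mathrm{col}(X)$ factor by factor (which is straightforward here, since each $Z_k Z_k^T$ either fixes a Kronecker factor of a column of $X$ or sends the subgroup contrast $e_1$ to $1_2$, landing in the intercept/intervention columns), you bypass the explicit inversion of $V$ entirely and reduce to the elementary $2\times 2$ factorial OLS computation. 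The paper's more computational approach, however, is not wasted effort: the explicit form of $V^{-1}$ it establishes is reused heavily in the subsequent proofs of Theorems~\ref{Thm_MLE_var_comps2} and~\ref{Thm_consistency_level2}, where the variance-component ML equations $\mathrm{tr}(V^{-1}Z_iZ_i^T)=\|Z_i^T V^{-1}(Y-X\beta)\|^2$ must be solved.
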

\noindent From (\ref{eq_interpretations}) and Proposition \ref{Prop_MLE_delta1}, it follows that the MLE\, $\hat{\delta}$ is an unbiased estimator of $\delta$. Next, we provide an explicit expression of the variance of $\hat{\delta}$ in the following proposition, which is instrumental in the subsequent discussions on sample size determination for detecting the differential treatment effect and constructing a statistical test for its significance. 

\begin{proposition}
\label{Prop_var_delta1}
{\rm The variance of $\hat{\delta}$ is given by}
\begin{align*} 
    \begin{split}
        \Var(\hat{\delta}) \;&=\; \frac{4}{N_3 N_2\, n}\, \left( \sigma_e^2 \,+\, n\,N_2\,\sigma_{\text{grp}}^2 \right)\,.
    \end{split}
\end{align*}    
\end{proposition}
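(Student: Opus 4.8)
The plan is to treat $\hat\delta$ as a single linear functional of the response vector and expand its variance directly against the covariance structure in (\ref{cov_level1}). Introduce sign indicators $s_i = +1$ for the experimental clusters ($1\le i\le N_3$) and $s_i=-1$ for the control clusters ($N_3<i\le 2N_3$), together with $t_g=+1$ for $g=1$ and $t_g=-1$ for $g=2$. Then the four signed group means making up $\hat\delta_1-\hat\delta_2$ in Proposition \ref{Prop_MLE_delta1} are indexed exactly by the product $s_i t_g$, so that
\[
\hat\delta \;=\; \frac{1}{N_3 N_2 n}\sum_{i=1}^{2N_3}\sum_{j=1}^{N_2}\sum_{g=1}^{2}\sum_{k=1}^{n} s_i\, t_g\, Y_{ijgk}.
\]

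First I would write $\Var(\hat\delta)=\frac{1}{(N_3N_2n)^2}\sum\sum s_i t_g\, s_{i'} t_{g'}\,\Cov(Y_{ijgk},Y_{i'j'g'k'})$, the double sum running over all pairs of index tuples. The key simplification is that (\ref{cov_level1}) kills every cross-cluster term: the summand is nonzero only when $i=i'$, in which case $s_is_{i'}=s_i^2=1$. Hence the variance collapses to $2N_3$ identical within-cluster contributions, $\Var(\hat\delta)=2N_3\,S/(N_3N_2n)^2$, where $S:=\sum_{j,g,k}\sum_{j',g',k'} t_g t_{g'}\,\Cov(Y_{ijgk},Y_{ij'g'k'})$ for any fixed cluster $i$.

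Next I would evaluate $S$ by partitioning the within-cluster pairs into the five nonzero cases of (\ref{cov_level1}), counting the ordered index pairs in each, and attaching the sign $t_g t_{g'}$, which is $+1$ when $g=g'$ and $-1$ when $g\neq g'$. This yields
\begin{align*}
S &= 2N_2 n\,\sigma^2 + 2N_2 n(n-1)\,\sigma^2\rho_1 - 2N_2 n^2\,\sigma^2\rho_{(1)} \\
&\quad + 2N_2(N_2-1)n^2\,\sigma^2\rho_{(2)} - 2N_2(N_2-1)n^2\,\sigma^2\rho_2,
\end{align*}
the minus signs sitting on the two $g\neq g'$ categories. Substituting the variance-component identities $\sigma^2=\sigma_3^2+\sigma_2^2+\sigma_{\text{grp}}^2+\sigma_e^2$, $\sigma^2\rho_1=\sigma_3^2+\sigma_2^2+\sigma_{\text{grp}}^2$, $\sigma^2\rho_{(1)}=\sigma_3^2+\sigma_2^2$, $\sigma^2\rho_{(2)}=\sigma_3^2+\sigma_{\text{grp}}^2$, $\sigma^2\rho_2=\sigma_3^2$ and collecting terms, the coefficients of $\sigma_3^2$ and $\sigma_2^2$ vanish identically, leaving $S=2N_2 n(\sigma_e^2 + N_2 n\,\sigma_{\text{grp}}^2)$. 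Plugging back gives $\Var(\hat\delta)=\frac{4}{N_3N_2n}(\sigma_e^2 + n N_2\,\sigma_{\text{grp}}^2)$, as claimed.

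I expect the main obstacle to be the bookkeeping in $S$: getting the multiplicities of ordered pairs right in each of the five covariance categories and, above all, tracking the signs $t_g t_{g'}$, since it is precisely this sign pattern that forces the $\sigma_3^2$ and $\sigma_2^2$ contributions to cancel — the algebraic shadow of the fact that $\hat\delta$ is a difference-in-differences contrast in which the common level-three and level-two effects difference out. A cleaner but equivalent route is to form the cluster-level contrast $W_i=\sum_{j,k}(Y_{ij1k}-Y_{ij2k})$ directly, note that $u_i$ and each teacher effect enter both subgroups with opposite signs and so drop out, and compute $\Var(W_i)$ from the surviving subgroup and error terms; here one must only be careful to read the sharing structure of the subgroup random effect across teachers from (\ref{cov_level1}) itself, since that is what determines whether the surviving term carries the factor $N_2$.
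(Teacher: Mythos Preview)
Your proof is correct and follows essentially the same approach as the paper: both expand $\Var(\hat\delta)$ directly against the covariance structure in (\ref{cov_level1}) and then convert the resulting expression in $\rho_1,\rho_{(1)},\rho_{(2)},\rho_2$ into the variance components. The only difference is organizational: the paper splits $\hat\delta$ into four pieces $I,II,III,IV$, computes $\Var(I-II)$, $\Var(III-IV)$ and $\Cov(I,III)$ separately, and combines at the end, whereas your sign-indicator device $s_i t_g$ collapses the same computation into a single within-cluster sum $S$ and makes the cancellation of $\sigma_3^2$ and $\sigma_2^2$ visible in one step.
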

We now introduce some appropriate sum of squares quantities below which we will need in the estimation of the variance components in our subsequent discussions. 

Define,
\begin{align}\label{SS_level1}
\begin{split}
    SS_0 \;&=\; \dis \sum_{i=1}^{2N_3} \sum_{j=1}^{N_2} \sum_{g=1}^2 \sum_{k=1}^n \left( Y_{ijgk} - \bar{Y}_{ijg\cdot} \right)^2\,,\\
    SS_1 \;&=\; n\, \Big[ \dis \sum_{i=1}^{N_3} \sum_{j=1}^{N_2} \left( (\bar{Y}_{ij1\cdot} - \bar{Y}_{ij2\cdot}) \,-\, (\bar{Y}^T_{\cdot \cdot 1 \cdot} - \bar{Y}^T_{\cdot \cdot 2 \cdot}) \right)^2 \;+\; \sum_{i=N_3+1}^{2N_3} \sum_{j=1}^{N_2} \left( (\bar{Y}_{ij1\cdot} - \bar{Y}_{ij2\cdot}) \,-\, (\bar{Y}^C_{\cdot \cdot 1 \cdot} - \bar{Y}^C_{\cdot \cdot 2 \cdot}) \right)^2 \Big]\,,\\
    SS_2 \;&=\; 2n\,\dis \sum_{i=1}^{2N_3} \sum_{j=1}^{N_2} \left( \bar{Y}_{ij\cdot \cdot} - \bar{Y}_{i\cdot \cdot \cdot} \right)^2 \;\;\;
    \textrm{and} \;\;\;\; SS_3 \;=\; N_2\,2n\,\Big[ \dis \sum_{i=1}^{N_3} \left(\bar{Y}_{i \cdot \cdot \cdot} - \bar{Y}^T_{\cdot \cdot \cdot \cdot}\right)^2 \,+\, \sum_{i=N_3+1}^{2N_3} \left(\bar{Y}_{i \cdot \cdot \cdot} - \bar{Y}^C_{\cdot \cdot \cdot \cdot}\right)^2 \Big]\,,
    \end{split}
\end{align}
where 
\begin{align*}
  &\bar{Y}_{ijg\cdot} = \frac{1}{n} \dis \sum_{k=1}^n Y_{ijgk} \,, \qquad \bar{Y}_{ij\cdot \cdot} = \frac{1}{2n} \dis \sum_{g=1}^2 \sum_{k=1}^n Y_{ijgk}\,, \qquad \bar{Y}^T_{\cdot \cdot g \cdot} = \frac{1}{N_3 N_2\,n} \sum_{i=1}^{N_3} \sum_{j=1}^{N_2} \sum_{k=1}^n Y_{ijgk}\,, \\
  &\bar{Y}^C_{\cdot \cdot g \cdot} = \frac{1}{N_3 N_2\,n} \sum_{i=N_3+1}^{2N_3} \sum_{j=1}^{N_2} \sum_{k=1}^n Y_{ijgk}\,, \qquad \bar{Y}_{i\cdot \cdot \cdot} = \frac{1}{N_2\,2n} \sum_{j=1}^{N_2} \sum_{g=1}^2 \sum_{k=1}^n Y_{ijgk}\,,\\ &\bar{Y}^T_{\cdot \cdot \cdot \cdot} = \frac{1}{N_3} \sum_{i=1}^{N_3}\bar{Y}_{i\cdot \cdot \cdot}\;\;\; \textrm{and} \;\;\; \bar{Y}^C_{\cdot \cdot \cdot \cdot} = \frac{1}{N_3} \sum_{i=N_3+1}^{2N_3}\bar{Y}_{i\cdot \cdot \cdot}\,,
\end{align*}
for $i=1,\dots,2N_3$,\, $j=1,\dots,N_2$,\, and\, $g=1,2$.
Note that in order to estimate $\Var(\hat{\delta})$ given in Proposition \ref{Prop_var_delta1}, we need to estimate all the variance components $\sigma_3^2,\, \sigma_2^2, \,\sigma_{\text{grp}}^2$ and $\sigma_e^2$ in (\ref{cov_level1}). It is worth mentioning that the presence of explicit, closed form solutions to the maximum likelihood (ML) equations for the variance components of balanced mixed models is not always guaranteed (see, for example, Section 4.7.e in \cite{searle2009}); in fact \cite{szatrowski1980} proposed some necessary and sufficient conditions to check for determining whether or not explicit maximum likelihood estimates exist. The following theorem ensures the existence of explicit, closed form maximum likelihood estimates of the variance components in the balanced mixed effects model presented in (\ref{Model:Subg1}), with $N_{11} = N_{12} = n$. The exact conditions to be checked are presented in full technical details in the proof of the theorem. We also rigorously derive the explicit forms of the MLEs of the variance components, that are crucial for constructing a test statistic to test the null hypothesis $H_{0} : \delta = 0$ \,vs\, $H_{a} : \delta \neq 0$. %In view of the long and technical nature of the derivations, the proof is relegated to the appendix.

\begin{theorem}\label{Thm_MLE_var_comps1}
Explicit, closed form maximum likelihood estimates of the variance components exist for the balanced case of the mixed effects model presented in (\ref{Model:Subg1}), and are given by
\begin{align*}
    &\hat{\sigma}_e^2 \;=\; \frac{1}{2N_3 N_2 \,2(n-1)} \; SS_0\,,\\
    &\hat{\sigma}_{\text{grp}}^2 \;=\; \frac{1}{n}\,\left( \frac{1}{2N_3 N_2 \,2} \; SS_1 \;-\; \frac{1}{2N_3 N_2 \,2(n-1)} \; SS_0\right)\,,\\
    &\hat{\sigma}_2^2 \;=\; \frac{1}{n}\,\left( \frac{1}{2N_3 (N_2 -1)} \; SS_2 \;-\; \frac{1}{2N_3 N_2 \,2} \; SS_1\right)\,,\\
    \textrm{and} \qquad &\hat{\sigma}_3^2 \;=\; \frac{1}{N_2\,2n}\,\left( \frac{1}{2N_3} \; SS_3 \;-\; \frac{1}{2N_3 (N_2-1)} \; SS_2\right)\,,
\end{align*} 
where the sum of squares $SS_0, SS_1, SS_2$ and $SS_3$ are given in (\ref{SS_level1}).
\end{theorem}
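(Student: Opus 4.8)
\emph{Proof proposal.} The plan is to reduce the balanced model (\ref{Model:Subg1_LM}) to canonical (spectral) form, in which the log-likelihood separates across variance ``strata,'' and then read off the maximizers in closed form. First I would write the marginal covariance as $V=\Var(Y)=\sigma_3^2\,Z_1Z_1^{T}+\sigma_2^2\,Z_2Z_2^{T}+\sigma_{\text{grp}}^2\,Z_3Z_3^{T}+\sigma_e^2\,I$. Using the Kronecker forms of the $Z_r$ from (\ref{Model:Subg1_LM}), each $Z_rZ_r^{T}$ is a Kronecker product of identity blocks $I_a$ and all-ones blocks $J_a=1_a1_a^{T}$ over the factor structure $(i,j,g,k)$ of sizes $(2N_3,N_2,2,n)$. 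Writing $P_a=\tfrac1a J_a$ and $Q_a=I_a-P_a$ for the projectors onto the ``constant'' and ``contrast'' parts of each factor, every $Z_rZ_r^{T}$ is a polynomial in the mutually commuting family $\{P_a,Q_a\}$; hence these matrices (together with $I$) commute and are simultaneously diagonalized by the tensor Helmert basis. This yields a spectral decomposition $V=\sum_{r=1}^{4}\lambda_r E_r$ with parameter-free orthogonal projectors $E_r$, indexed by the stratum patterns of the within-school factors $(j,g,k)$, and eigenvalues
\begin{align*}
\lambda_1 &= \sigma_e^2+n\sigma_{\text{grp}}^2+2n\sigma_2^2+2N_2 n\,\sigma_3^2, &
\lambda_2 &= \sigma_e^2+n\sigma_{\text{grp}}^2+2n\sigma_2^2,\\
\lambda_3 &= \sigma_e^2+n\sigma_{\text{grp}}^2, &
\lambda_4 &= \sigma_e^2,
\end{align*}
with multiplicities $m_1=2N_3$, $m_2=2N_3(N_2-1)$, $m_3=2N_3N_2$ and $m_4=2N_3N_2\,2(n-1)$. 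Since the map $(\sigma_3^2,\sigma_2^2,\sigma_{\text{grp}}^2,\sigma_e^2)\mapsto(\lambda_1,\lambda_2,\lambda_3,\lambda_4)$ is linear and triangular, it is a bijection onto its range.

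The decisive step, and the one I expect to be the main obstacle, is verifying the conditions of \cite{szatrowski1980} that guarantee the ML equations admit an explicit solution; concretely this reduces to showing that the column space of $X$ is $V$-invariant, i.e. spanned by eigenvectors of $V$. I would establish this by passing to the contrast basis: $\mathrm{span}(X)=\mathrm{span}\{1,\,c,\,d,\,c\odot d\}$, where $c$ is the subgroup contrast on the $g$-factor, $d$ the balanced treatment contrast on the $i$-factor, and $c\odot d$ their elementwise product. Each of these four vectors is a pure tensor in the $P/Q$ components, so $1,d\in\mathrm{range}(E_1)$ while $c,\,c\odot d\in\mathrm{range}(E_3)$; thus $\mathrm{span}(X)$ is a direct sum of subspaces of two eigenspaces and is $V$-invariant. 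The payoff is that the GLS estimator coincides with the OLS estimator, so both $\hat\beta$ and the residual $e=Y-X\hat\beta$ are free of the variance parameters. (This is exactly where a generic, unbalanced or crossed design would fail, forcing iterative estimation.)

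With $\beta$ profiled out, the profile log-likelihood becomes, up to an additive constant, $-2\ell_p=\sum_{r=1}^4\big(m_r\log\lambda_r+Q_r/\lambda_r\big)$, where $Q_r=\lVert E_r e\rVert^2$ are fixed nonnegative quadratic forms and the $\log\det V$ term contributes the $m_r\log\lambda_r$ with the full multiplicities. Because the $\lambda_r$ vary freely (by the bijection above) and each summand is uniquely minimized at $\lambda_r=Q_r/m_r$, the stationary point is $\hat\lambda_r=Q_r/m_r$. Inverting the triangular map then recovers each variance component as a scaled difference of consecutive $\hat\lambda_r$; for instance $\hat\sigma_e^2=\hat\lambda_4$ and $\hat\sigma_{\text{grp}}^2=(\hat\lambda_3-\hat\lambda_4)/n$, and analogously for $\hat\sigma_2^2$ and $\hat\sigma_3^2$.

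The remaining work is the bookkeeping that identifies the abstract quadratic forms with the sums of squares in (\ref{SS_level1}). Evaluating the projectors $E_r$ on the data I would obtain $Q_4=SS_0$, $Q_2=SS_2$, $Q_1=SS_3$ and $Q_3=\tfrac12\,SS_1$, the factor $\tfrac12$ arising because the two-level subgroup contrast distributes its squared norm across both subgroup cells; substituting $\hat\lambda_r=Q_r/m_r$ then reproduces the displayed estimators. I would close by noting that each term $m_r\log\lambda_r+Q_r/\lambda_r$ is strictly convex in $1/\lambda_r$, so the stated point is the global maximizer over the admissible cone whenever $\hat\lambda_1\ge\hat\lambda_2\ge\hat\lambda_3\ge\hat\lambda_4>0$, i.e. whenever the resulting variance-component estimates are nonnegative, which is the interior regime covered by the theorem.
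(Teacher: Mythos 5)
Your proposal is correct in substance but takes a genuinely different route from the paper's. The paper proves only the level-two analogue (Theorem \ref{Thm_MLE_var_comps2}) in the appendix and asserts the level-one case follows the same lines; that proof proceeds in two stages: it first certifies existence of explicit MLEs by checking Szatrowski--Miller's combinatorial criterion (the subscript-pattern matrix $T$, the Hadamard-closed set $\mathcal{W}$, and $n(\mathcal{W})=r+1$), and then solves the likelihood equations $\mathrm{tr}(V^{-1}Z_iZ_i^T)=\Vert Z_i^TV^{-1}(Y-X\beta)\Vert^2$ one at a time using the explicit $V^{-1}$ of Lemma \ref{Lemma_V} and a long chain of Kronecker manipulations. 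You replace both stages: you diagonalize $V$ directly (your eigenvalues and multiplicities check out --- $m_1=2N_3$, $m_2=2N_3(N_2-1)$, $m_3=2N_3N_2$, $m_4=4N_3N_2(n-1)$ sum to the full dimension $4N_3N_2n$), show $\mathrm{span}(X)\subset\mathrm{range}(E_1)\oplus\mathrm{range}(E_3)$ so that GLS equals OLS, and then maximize the separable profile likelihood term by term. The underlying spectral structure is the same one the paper encodes in Lemma \ref{Lemma_V}, but your organization makes the existence of closed-form MLEs a corollary of separability rather than an appeal to an external criterion, and it makes the boundary/nonnegativity caveat (Remark \ref{remark_3}) transparent. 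The identifications $Q_4=SS_0$, $Q_3=\tfrac12 SS_1$, $Q_2=SS_2$, $Q_1=SS_3$ are all correct, including the factor $\tfrac12$, and your $\hat\lambda_r=Q_r/m_r$ reproduce the four quantities $SS_0/(4N_3N_2(n-1))$, $SS_1/(4N_3N_2)$, $SS_2/(2N_3(N_2-1))$, $SS_3/(2N_3)$ appearing in the theorem.

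One step you left as ``analogously'' deserves to be written out, because it is exactly where the bookkeeping bites. With your (correct) eigenvalues, $\lambda_2-\lambda_3=2n\,\sigma_2^2$, since a teacher cell contains $2n$ observations; so your method yields $\hat\sigma_2^2=\tfrac{1}{2n}\bigl(\hat\lambda_2-\hat\lambda_3\bigr)$, whereas the theorem as printed has the prefactor $\tfrac1n$. The pattern in the level-two theorem (divisors $N_1$, $nN_1$, $2nN_1$, i.e.\ the number of observations per cluster at each level) and the unbiasedness asserted in Theorem \ref{Thm_consistency_level1} both support your $\tfrac{1}{2n}$; the printed $\tfrac1n$ appears to be a typo rather than an error in your derivation. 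State the final differencing formulas explicitly so this discrepancy is visible rather than hidden behind ``analogously.''
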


The next theorem establishes the consistency of the MLEs of the variance components. 

%In the next theorem, we establish that the MLEs of the variance components are consistent estimators of the corresponding true parameters.

\begin{theorem}\label{Thm_consistency_level1}
 The maximum likelihood estimators $\hat{\sigma}_e^2$, $\hat{\sigma}_{\text{grp}}^2$, $\hat{\sigma}_2^2$ and $\hat{\sigma}_3^2$ are unbiased estimators of the respective true parameters $\sigma_e^2$, $\sigma_{\text{grp}}^2$, $\sigma_2^2$ and $\sigma_3^2$. Moreover, as\, $N_3, N_2, n \to \infty$, $\Var(\hat{\sigma}_e^2)$, $\Var(\hat{\sigma}_{\text{grp}}^2)$, $\Var(\hat{\sigma}_2^2)$ and $\Var(\hat{\sigma}_3^2)$ are $o(1)$ quantities. In other words, as\, $N_3, N_2, n \to \infty$,
 \begin{align*}
   \hat{\sigma}_e^2 \;\overset{P}{\to}\; \sigma_e^2\,, \;\; \hat{\sigma}_{\text{grp}}^2  \;\overset{P}{\to}\;  \sigma_{\text{grp}}^2\,,\;\; \hat{\sigma}_2^2 \;\overset{P}{\to}\; \sigma_2^2\, \;\; \textrm{and} \;\;\; \hat{\sigma}_3^2 \;\overset{P}{\to}\; \sigma_3^2\,.
 \end{align*}
\end{theorem}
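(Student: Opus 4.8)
\emph{Proof plan.} The plan is to reduce both claims to the joint distributional structure of the four sums of squares in (\ref{SS_level1}). For the balanced design of model (\ref{Model:Subg1}) with $N_{11}=N_{12}=n$, the covariance matrix $V=\Var(Y)$ is assembled from Kronecker products of identity and all-ones matrices through $Z_0,Z_1,Z_2,Z_3$, so $V$ admits a spectral decomposition $V=\sum_{\ell=0}^{3}\lambda_\ell E_\ell$ into mutually orthogonal idempotents $E_\ell$ summing to the identity, where each $\lambda_\ell$ is the expected mean square attached, respectively, to the error, subgroup, level-two and level-three strata. First I would show that each $SS_\ell$ in (\ref{SS_level1}) equals a quadratic form $Y^{T}A_\ell Y$ with $A_\ell$ a scalar multiple of $E_\ell$, and — crucially — that the fixed-effect mean $X\beta$ lies in the null space of every $A_\ell$, since each $SS_\ell$ is built from contrasts (within-cell deviations, subgroup differences of cell means, deviations of cell means about cluster means) that annihilate the fixed effects, which are constant within each stratum. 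Consequently each $SS_\ell$ depends only on the random effects and the errors, and by a Cochran-type argument $SS_\ell/\lambda_\ell\sim\chi^2_{df_\ell}$, with $SS_0,SS_1,SS_2,SS_3$ mutually independent because the $E_\ell$ are orthogonal and $Y$ is jointly normal.

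Granting this structure, unbiasedness is a direct computation. Since $A_\ell X\beta=0$, we have $E(SS_\ell)=E\big[\mathrm{tr}(A_\ell (Y-X\beta)(Y-X\beta)^{T})\big]=\mathrm{tr}(A_\ell V)$, which produces the expected-mean-square identities (schematically, up to explicit multiplicities, $E(SS_0)\propto\sigma_e^2$, $E(SS_1)\propto\sigma_e^2+n\sigma_{\text{grp}}^2$, $E(SS_2)\propto\sigma_e^2+n\sigma_{\text{grp}}^2+2n\sigma_2^2$, and $E(SS_3)\propto\sigma_e^2+n\sigma_{\text{grp}}^2+2n\sigma_2^2+2nN_2\sigma_3^2$). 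Each estimator in Theorem \ref{Thm_MLE_var_comps1} is an explicit linear combination of these sums of squares; substituting the expected mean squares, the successive differences telescope so that exactly one variance component survives with coefficient one, giving $E(\hat\sigma_e^2)=\sigma_e^2$, $E(\hat\sigma_{\text{grp}}^2)=\sigma_{\text{grp}}^2$, $E(\hat\sigma_2^2)=\sigma_2^2$ and $E(\hat\sigma_3^2)=\sigma_3^2$.

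For the variances I would exploit the same representation: since $SS_\ell/\lambda_\ell\sim\chi^2_{df_\ell}$, one has $\Var(SS_\ell)=2\lambda_\ell^2\,df_\ell$, and mutual independence eliminates all cross-covariances when expanding the variance of each linear combination. Tracking orders of magnitude — $df_0\asymp N_3N_2n$, $df_1,df_2\asymp N_3N_2$, $df_3\asymp N_3$, while $\lambda_0=O(1)$, $\lambda_1,\lambda_2=O(n)$ and $\lambda_3=O(nN_2)$ — each $\Var(\hat\sigma^2_\bullet)$ is seen to be of order at most $1/N_3$ (indeed $O(1/(N_3N_2n))$ for $\hat\sigma_e^2$ and $O(1/(N_3N_2))$ for $\hat\sigma_{\text{grp}}^2,\hat\sigma_2^2$), hence $o(1)$ as $N_3,N_2,n\to\infty$. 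Combining unbiasedness with vanishing variance gives $E(\hat\sigma^2_\bullet-\sigma^2_\bullet)^2\to 0$, so Chebyshev's inequality yields $\hat\sigma^2_\bullet\overset{P}{\to}\sigma^2_\bullet$ for every component.

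The main obstacle is the first step: rigorously establishing the orthogonal stratum decomposition $V=\sum_\ell\lambda_\ell E_\ell$ and verifying that each $A_\ell$ is a scalar multiple of the matching $E_\ell$ with $A_\ell X\beta=0$, so that the chi-squared laws and their mutual independence are legitimate rather than merely heuristic. Once the Kronecker structure of $V$ is used to exhibit the $E_\ell$ explicitly, the expected-mean-square bookkeeping and the order-of-magnitude variance bounds are mechanical, though one must keep careful track of the stratum multiplicities $n$, $2n$ and $2nN_2$ that enter the $\lambda_\ell$.
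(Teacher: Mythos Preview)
Your route is genuinely different from the paper's. The paper does not work with the chi-squared laws of the $SS_\ell$ at all: for unbiasedness it invokes the ML score identity $\Vert Z_\ell^T V^{-1}(Y-X\beta)\Vert^2=\textrm{tr}(V^{-1}Z_\ell Z_\ell^T)$ (in expectation) coming from the derivation of Theorem~\ref{Thm_MLE_var_comps1}, and for the variance bounds it appeals to the asymptotic covariance $2B^{-1}$ of the variance-component MLEs, with $b_{ij}=\textrm{tr}(V^{-1}Z_iZ_i^T V^{-1}Z_jZ_j^T)$, then explicitly computes the cofactors and determinant of the $4\times4$ matrix $B$ to show each diagonal entry of $B^{-1}$ is $o(1)$. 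Your spectral/Cochran approach is more elementary and yields exact finite-sample variances rather than asymptotic ones, at the price of having to verify the orthogonal-idempotent structure; the paper's approach is heavier computationally but mechanical once the traces are written down.

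One point in your plan needs repair. The claim that each $A_\ell$ is a scalar multiple of the eigenprojector $E_\ell$ is inconsistent with your (correct) observation that $A_\ell X\beta=0$: for $\ell\in\{1,3\}$ the fixed effects \emph{do} have a nonzero component in $\mathrm{range}(E_\ell)$ (the subgroup main effect and interaction live in the $E_1$-stratum; the intercept and treatment main effect live in the $E_3$-stratum), which is precisely why $SS_1$ and $SS_3$ are defined with arm-wise centering. So $A_1$ and $A_3$ are proper sub-projections of $E_1,E_3$, with ranks $2(N_3N_2-1)$ and $2(N_3-1)$ rather than $\mathrm{rank}(E_1)=2N_3N_2$ and $\mathrm{rank}(E_3)=2N_3$. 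This does not damage your consistency argument --- the orders of magnitude are unchanged --- but it does affect the degrees of freedom you feed into the expected-mean-square telescoping, so be careful to track the df loss when you try to recover exact unbiasedness; your chi-squared bookkeeping for $SS_1$ and $SS_3$ should use these reduced ranks. By contrast $SS_0=Y^TE_0Y$ and $SS_2=Y^TE_2Y$ exactly, since $E_0X\beta=E_2X\beta=0$.
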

Thus, we can estimate $\Var(\hat{\delta})$ given in Proposition \ref{Prop_var_delta1} by plugging in the MLEs of the variance components, and the resulting variance estimator, denoted by $\widehat{\Var(\hat{\delta})}$, can be shown to be consistent. We define a statistic $T$  by standardizing $\hat{\delta}$ as follows:
\begin{align}\label{test_stat_1}
    T \;:=\; \frac{\hat{\delta} - \delta}{\sqrt{\widehat{\Var(\hat{\delta})}}}\,.
\end{align}
The consistency properties of the variance component estimators turn out to be instrumental in the following theorem, where we study the asymptotic behavior of the statistic $T$.
\begin{theorem}\label{Thm_asymp_T_level1}
   As $N_3, N_2, n \to \infty$, we have \,$T \overset{d}{\to} N(0,1)$. 
\end{theorem}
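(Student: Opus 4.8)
The plan is to write $T$ as the product of an \emph{exactly} standard normal pivot and a correction factor that converges in probability to $1$, and then conclude by Slutsky's theorem. Concretely, I would factor
\begin{align*}
    T \;=\; \frac{\hat{\delta} - \delta}{\sqrt{\widehat{\Var(\hat{\delta})}}} \;=\; \frac{\hat{\delta} - \delta}{\sqrt{\Var(\hat{\delta})}} \,\cdot\, \sqrt{\frac{\Var(\hat{\delta})}{\widehat{\Var(\hat{\delta})}}}\,,
\end{align*}
and treat the two factors separately. The first factor is in fact $N(0,1)$ \emph{exactly}, for every choice of $N_3, N_2, n$, so no central limit theorem is needed for the numerator: under model (\ref{Model:Subg1}) all random effects and the error term are Gaussian and mutually independent, hence $Y$ is jointly Gaussian; by Proposition \ref{Prop_MLE_delta1}, $\hat{\delta} = \hat{\delta}_1 - \hat{\delta}_2$ is a fixed linear functional of $Y$ (a difference of sample means), so it is itself exactly Gaussian, with mean $\delta$ (unbiasedness, noted after Proposition \ref{Prop_MLE_delta1}) and variance given in Proposition \ref{Prop_var_delta1}. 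Therefore $(\hat{\delta} - \delta)/\sqrt{\Var(\hat{\delta})} \sim N(0,1)$.

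Next I would handle the correction factor using the consistency established in Theorem \ref{Thm_consistency_level1}. Plugging the variance-component MLEs into Proposition \ref{Prop_var_delta1} gives $\widehat{\Var(\hat{\delta})} = \tfrac{4}{N_3 N_2 n}\big(\hat{\sigma}_e^2 + n N_2 \hat{\sigma}_{\text{grp}}^2\big)$, so that the ratio of variances equals
\begin{align*}
    \frac{\widehat{\Var(\hat{\delta})}}{\Var(\hat{\delta})} \;=\; \frac{\hat{\sigma}_e^2 + n N_2\, \hat{\sigma}_{\text{grp}}^2}{\sigma_e^2 + n N_2\, \sigma_{\text{grp}}^2} \;=\; \frac{\hat{\sigma}_e^2/(n N_2) + \hat{\sigma}_{\text{grp}}^2}{\sigma_e^2/(n N_2) + \sigma_{\text{grp}}^2}\,.
\end{align*}
By Theorem \ref{Thm_consistency_level1}, $\hat{\sigma}_e^2 \overset{P}{\to} \sigma_e^2$ and $\hat{\sigma}_{\text{grp}}^2 \overset{P}{\to} \sigma_{\text{grp}}^2$; since $\hat{\sigma}_e^2/(n N_2) \overset{P}{\to} 0$ and $\sigma_e^2/(n N_2) \to 0$ as $n, N_2 \to \infty$, the ratio converges in probability to $\sigma_{\text{grp}}^2/\sigma_{\text{grp}}^2 = 1$ (assuming $\sigma_{\text{grp}}^2 > 0$). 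Taking reciprocals and square roots, the correction factor $\sqrt{\Var(\hat{\delta})/\widehat{\Var(\hat{\delta})}} \overset{P}{\to} 1$. Since the first factor is $N(0,1)$ and the second converges in probability to the constant $1$, Slutsky's theorem (which needs no independence between the two factors) yields $T \overset{d}{\to} N(0,1)$.

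The main obstacle is the variance-ratio step, precisely because the estimated dominant component $\hat{\sigma}_{\text{grp}}^2$ is multiplied by the diverging weight $n N_2$ while $\Var(\hat{\delta})$ itself shrinks to zero. One must track the \emph{relative} estimation error rather than the absolute one: it is not enough that $\hat{\sigma}_{\text{grp}}^2 - \sigma_{\text{grp}}^2 \overset{P}{\to} 0$, one needs $\hat{\sigma}_{\text{grp}}^2/\sigma_{\text{grp}}^2 - 1 \overset{P}{\to} 0$, which the $o(1)$ variance bounds of Theorem \ref{Thm_consistency_level1} supply provided $\sigma_{\text{grp}}^2 > 0$; one must also check that the subdominant estimator $\hat{\sigma}_e^2$, which is only $O_P(1)$, is rendered asymptotically negligible after division by $n N_2$. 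The degenerate boundary case $\sigma_{\text{grp}}^2 = 0$ would require separate treatment, but under the standing assumption that the variance components are strictly positive the argument above goes through cleanly.
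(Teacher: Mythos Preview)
Your proposal is correct and in fact somewhat cleaner than the paper's own argument. The paper (for the level-two analogue, whose proof is meant to carry over verbatim to level one) writes $\hat{\delta}$ as an average $\tfrac{1}{N_3}\sum_{i=1}^{N_3} U_i$ of i.i.d.\ cluster-level summaries $U_i$, computes $\Var(U_i)$, and invokes the central limit theorem to conclude that $(\hat{\delta}-\delta)/\sqrt{\Var(\hat{\delta})} \overset{d}{\to} N(0,1)$; the replacement of $\Var(\hat{\delta})$ by $\widehat{\Var(\hat{\delta})}$ is left implicit, appealing to the consistency discussion preceding the theorem. Your route differs in two useful ways: (i) you bypass the CLT entirely by exploiting the Gaussian model assumptions, so $(\hat{\delta}-\delta)/\sqrt{\Var(\hat{\delta})}$ is \emph{exactly} $N(0,1)$ for every finite sample---a strictly sharper statement that the paper could also have used; and (ii) you make the Slutsky step explicit and correctly note that what is needed is not $\widehat{\Var(\hat{\delta})}-\Var(\hat{\delta})\overset{P}{\to}0$ (trivial, since both vanish) but rather convergence of the \emph{ratio}, which you verify under $\sigma_{\text{grp}}^2>0$. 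The paper's CLT formulation would be advantageous only if one wished to relax the normality assumptions on the random effects; under the stated model your argument is the more direct one.
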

%The proof is once again deferred to the appendix. 
Under the null hypothesis $H_{0} : \delta = 0$, the above statistic boils down to\, $T_0 := \frac{\hat{\delta}}{\sqrt{\widehat{\Var(\hat{\delta})}}}$. We consider $T_0$ as the test statistic to test for $H_{0}$ against the alternative hypothesis $H_{a} : \delta \neq 0$. As a consequence of Theorem \ref{Thm_asymp_T_level1}, we have that under $H_0$, the test statistic $T_0$ asymptotically follows a standard normal distribution. Thus, to test for $H_{0}$ against $H_{a} : \delta \neq 0$ at a given level $\alpha \in (0,1)$, the rejection rule can be proposed as follows: reject $H_0$ if\, $\vert T_0 \vert > z_{1-\alpha/2}$, where $z_{1-\alpha/2}$ is the upper $\alpha/2$ quantile of the standard normal distribution, that is, $z_{1-\alpha/2} = \Phi^{-1}(1-\alpha/2)$, and $\Phi$ denotes the cumulative distribution function of the standard normal distribution.

The discussions so far have been aimed at constructing an appropriate test statistic to test for whether there is a significant difference in the treatment effect between the two subgroups. We now turn to the design aspect and determining sample size requirements for cluster randomized trials with three level hierarchical data to detect a differential subgroup effect.  

Consider the statistic $\tilde{T}=\frac{\hat{\delta} - \delta}{\sqrt{\Var(\hat{\delta})}}$, where the parameters $\sigma^2$, $\rho_1$, $\rho_{(1)}$, $\rho_{(2)}$ and $\rho_2$, or equivalently, the variance components $\sigma_3^2,\, \sigma_2^2, \,\sigma_{\text{grp}}^2$ and $\sigma_e^2$ are assumed to be known. Then, we have $\tilde{T} \sim N(0,1)$. Note that this is true even for finite samples, when the variance components are known. When the variance components are unknown, we need to replace $\Var(\hat{\delta})$ in the denominator of $\tilde{T}$ by $\widehat{\Var(\hat{\delta})}$ to get $T$, which is shown to asymptotically follow a standard normal distribution in Theorem \ref{Thm_asymp_T_level1}.

At a fixed level $\alpha \in (0,1)$, to test for the null hypothesis $H_{0} : \delta = 0$ against $H_{a} : \delta \neq 0$ based on $\tilde{T}$, that is, assuming the variance components are known, we reject $H_0$ if\, $\vert \tilde{T}_0 \vert := \left\vert \frac{\hat{\delta}}{\sqrt{\Var(\hat{\delta})}} \right\vert > z_{1-\alpha/2}$. The power of the test, denoted by $\phi$, can therefore be written as follows:
\begin{align}\label{power_ind_diff}
    \begin{split}
        \phi \;=\; 1-\beta \;&\leq\; P_{H_{a}}\left( \vert \tilde{T}_0 \vert > z_{1-\alpha/2} \right) \;\leq \; P_{H_{a}}\left( \vert \tilde{T} \vert \; >\; z_{1-\alpha/2} \;-\;  \frac{\vert\delta \vert}{\sqrt{\Var(\hat{\delta})}}\right)\\
        &=\; \Phi \left( \frac{\vert \delta \vert}{\sqrt{\Var(\hat{\delta})}} \;-\; z_{1-\alpha/2} \right)\,,
    \end{split}
\end{align}
where\, the second inequality follows from a simple use of triangle inequality and $\beta$\, is the probability of the type II error. %Using the fact that\, $\Phi^{-1}(1-\beta) = - \Phi^{-1}(\beta)$, we get 
From (\ref{power_ind_diff}), some simple algebraic manipulations yield
\begin{align}\label{power_ind4}
    \frac{\delta^2}{\Var(\hat{\delta})} \;\geq\; \left(z_{1-\alpha/2} \,+\, z_{1-\beta}  \right)^2\,.
\end{align}
Thus, in order to achieve the power $\phi=1-\beta$, the sample sizes need to satisfy (\ref{power_ind4}). Assuming $N_2$ and $N_3$ (that is, the number of level two and level three units) are known, for a fixed set of parameter values and for a given power $\phi=1-\beta$, the problem thus reduces to finding the smallest $n=N_{11}=N_{12}$ that satisfies (\ref{power_ind4}).

\begin{proposition}\label{Prop_samp_size1}
{\rm
    Assuming $N_2, N_3$ and the parameters $\delta$, $\sigma^2$, $\rho_1$, $\rho_{(1)}$, $\rho_{(2)}$ and $\rho_2$ are known, the number of level one units needed within each subgroup to achieve the power $\phi = 1-\beta$\, is the smallest integer greater than
    \begin{align*}
        n \;=\; A\,\left( \frac{\delta^2}{\big(z_{1-\alpha/2} \,+\, z_{1-\beta} \big)^2} \,-\, B\, \right)^{-1}, \qquad \textrm{where} \qquad A \;=\; \frac{4\sigma_e^2}{N_3 N_2} \;\;\; \textrm{and} \;\; B \;=\; \frac{4\sigma_{\text{grp}}^2}{N_3}  \,.
    \end{align*}
}
\end{proposition}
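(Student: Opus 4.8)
The plan is to treat this as a direct algebraic inversion of the power requirement (\ref{power_ind4}), since here all variance components are assumed known and there is neither estimation nor asymptotics to handle. The entire content is to substitute the closed-form variance of Proposition \ref{Prop_var_delta1} into (\ref{power_ind4}) and solve the resulting inequality for $n$, using a monotonicity observation to pin down the minimal integer.

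First I would recall from Proposition \ref{Prop_var_delta1} that $\Var(\hat\delta) = \frac{4}{N_3 N_2\, n}\big(\sigma_e^2 + n N_2 \sigma_{\text{grp}}^2\big)$, and split it into an $n$-dependent and an $n$-free piece:
\begin{align*}
\Var(\hat\delta) \;=\; \frac{4\sigma_e^2}{N_3 N_2\, n} \,+\, \frac{4\sigma_{\text{grp}}^2}{N_3} \;=\; \frac{A}{n} \,+\, B,
\end{align*}
where $A$ and $B$ are exactly as defined in the statement. Substituting into (\ref{power_ind4}) and writing $c := (z_{1-\alpha/2} + z_{1-\beta})^2$, the power condition becomes $\delta^2/c \geq A/n + B$, equivalently $A/n \leq \delta^2/c - B$.

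Next I would isolate $n$. Since $A = 4\sigma_e^2/(N_3 N_2) > 0$, the map $n \mapsto A/n$ is strictly decreasing, so (provided the right-hand side is positive) the condition is equivalent to $n \geq A\big(\delta^2/c - B\big)^{-1}$, which is precisely the displayed formula. Because the power $\phi = \Phi\big(|\delta|/\sqrt{\Var(\hat\delta)} - z_{1-\alpha/2}\big)$ is nondecreasing in $n$ (as $\Var(\hat\delta)$ decreases in $n$), the minimal sample size per subgroup delivering power $\phi = 1-\beta$ is the smallest integer satisfying this bound, completing the proof.

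I do not anticipate a genuine obstacle, as the argument is elementary; the one point worth stating carefully is the feasibility condition $\delta^2/c > B$, i.e. $\delta^2 > B\,(z_{1-\alpha/2} + z_{1-\beta})^2$. This is exactly the requirement that the target effect exceed the variance floor $B = 4\sigma_{\text{grp}}^2/N_3$ that $\Var(\hat\delta)$ approaches as $n \to \infty$; when it fails, the bracketed quantity in the formula is nonpositive and no finite $n$ can attain the desired power, reflecting that increasing the number of level-one units cannot reduce the variance contributed by the subgroup-level random effect below $B$. I would flag this condition explicitly alongside the formula.
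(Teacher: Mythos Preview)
Your proposal is correct and follows essentially the same approach as the paper's own proof: both substitute the variance expression from Proposition \ref{Prop_var_delta1}, split it as $A/n + B$, plug into the power inequality (\ref{power_ind4}), and solve for $n$. The only difference is that you explicitly flag the feasibility condition $\delta^2/c > B$ and the monotonicity in $n$, which the paper leaves implicit; this is a helpful addition but not a different route.
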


\subsection{Subgrouping at level two}\label{subsec:subg2_theory}
In this subsection, we present similar developments as presented in the previous subsection, but this time in the context of subgrouping of the level two units and with model (\ref{Model:Subg2}) in view. Not to add another fresh set of notations for the sample sizes, the different parameters and their estimates, that is, to avoid as much as possible complexities arising from introducing too many notations, we choose to use the same set of notations that we have used in the previous subsection. It only needs to be understood that the notations in this subsection are now being used in the context of model (\ref{Model:Subg2}) and subgrouping at level two. We begin with the following proposition that presents an explicit expression of the MLE of $\delta$. %The details of the proof are presented in the appendix.

\begin{proposition}\label{Prop_MLE_delta2}
{\rm The MLE of the differential treatment effect $\delta$ is given by}
\begin{align*}
        \hat{\delta} \;&=\; \left( \frac{1}{N_3\,n\,N_1} \dis \sum_{i=1}^{N_3} \sum_{j=1}^{n} \sum_{k=1}^{N_1} Y_{i1jk} \,-\, \frac{1}{N_3\,n\,N_1} \dis \sum_{i=N_3+1}^{2N_3} \sum_{j=1}^{n} \sum_{k=1}^{N_1} Y_{i1jk} \right)\\
        & \qquad - \; \left( \frac{1}{N_3\,n\,N_1} \dis \sum_{i=1}^{N_3} \sum_{j=1}^{n} \sum_{k=1}^{N_1} Y_{i2jk} \,-\, \frac{1}{N_3\,n\,N_1} \dis \sum_{i=N_3+1}^{2N_3} \sum_{j=1}^{n} \sum_{k=1}^{N_1} Y_{i2jk} \right)\\
        &=: \; \hat{\delta}_1 \,-\, \hat{\delta}_2\,.
\end{align*}
\end{proposition}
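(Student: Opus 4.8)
The plan is to work from the marginal law of the response. Under the normal linear mixed model (\ref{Model:Subg2})--(\ref{Model:Subg2_LM}), we have $Y \sim N(X\beta, V)$ with $V = \Cov(Y) = \sigma_3^2 Z_1 Z_1^T + \sigma_2^2 Z_2 Z_2^T + \sigma_1^2 Z_3 Z_3^T + \sigma_e^2 I$, so for any fixed value of the variance components the MLE of $\beta$ is the generalized least squares (GLS) estimator $\hat\beta = (X^T V^{-1} X)^{-1} X^T V^{-1} Y$, and $\hat\delta = e_4^T \hat\beta$ with $e_4 = (0,0,0,1)^T$. The goal is to show that this GLS estimator collapses to the ordinary least squares (OLS) estimator $(X^T X)^{-1} X^T Y$ --- and therefore does not depend on the variance components at all --- after which $\hat\delta$ can be read off directly as the interaction contrast of the four cell means.

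First I would isolate the algebraic mechanism: if the column space of $X$ is invariant under $V$, i.e. $VX = XC$ for some nonsingular $4\times 4$ matrix $C$, then $V^{-1}X = XC^{-1}$, and using the symmetry of $V^{-1}$ a short computation gives $(X^T V^{-1} X)^{-1} X^T V^{-1} Y = (X^T X)^{-1} X^T Y$. To verify the hypothesis I would exploit the Kronecker structure. Writing $J_a = 1_a 1_a^T$, each of the four terms of $V$ has the form $I_{2N_3}\otimes(\cdot)\otimes(\cdot)\otimes(\cdot)$ with every later factor either an identity or an all-ones block $J$, while each column of $X$ is a Kronecker product whose $2N_3$-slot is either $1_{2N_3}$ or the treatment indicator $(1_{N_3}^T, 0_{N_3}^T)^T$, whose subgroup slot is either $1_2$ or $(1,0)^T$, and whose remaining two slots are always $1_n$ and $1_{N_1}$. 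Since $J_a 1_a = a\,1_a$ and $J_2(1,0)^T = 1_2$, applying any term of $V$ to any column of $X$ returns a multiple of another column of $X$: the intercept and treatment columns are eigenvectors, $V$ sends the subgroup column into $\mathrm{span}\{\text{intercept},\,\text{subgroup}\}$, and it sends the interaction column into $\mathrm{span}\{\text{treatment},\,\text{interaction}\}$. Hence $VX = XC$ with $C$ triangular and nonsingular, the lemma applies, and crucially this holds for \emph{every} admissible choice of the variance components, so the profile MLE of $\beta$ obtained by maximizing the likelihood over the variance components is exactly the OLS estimator.

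Finally I would compute the OLS estimator explicitly. The four columns of $X$ implement reference-cell coding for the $2\times 2$ layout determined by subgroup membership ($L_g$) and intervention status ($X_i$); since the design is balanced with exactly $N_3\,n\,N_1$ observations in each of the four cells, OLS reproduces the four cell sample means, and the estimable interaction contrast $\hat\delta = e_4^T\hat\beta$ equals the difference of the within-subgroup treatment-versus-control mean differences. Identifying the four cell means with the corresponding averages in the data --- for instance the mean over $i=1,\dots,N_3$, $j=1,\dots,n$, $k=1,\dots,N_1$ at $g=1$ for the treated subgroup-1 cell, and analogously for the others --- and substituting into $(\bar Y_{1,T}-\bar Y_{1,C}) - (\bar Y_{2,T}-\bar Y_{2,C})$ reproduces precisely the expression for $\hat\delta = \hat\delta_1 - \hat\delta_2$ stated in the proposition, where $\hat\delta_1$ and $\hat\delta_2$ are the treatment-minus-control contrasts within Subgroup 1 and Subgroup 2, respectively.

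The main obstacle is the middle step: verifying that $\mathrm{col}(X)$ is $V$-invariant so that GLS reduces to OLS. This is exactly the classical phenomenon that OLS and GLS agree for estimable fixed-effect contrasts in balanced mixed ANOVA models, and once it is in hand the remaining identification of the cell means is routine bookkeeping, entirely parallel to the level-one derivation of Proposition \ref{Prop_MLE_delta1}.
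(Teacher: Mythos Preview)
Your argument is correct and considerably more economical than the paper's. The paper proceeds by first deriving an explicit closed form for $V^{-1}$ (their Lemma in the supplement, via the Searle--Henderson dispersion-matrix calculus), then computing $X^T V^{-1}$ term by term, observing that two of the four Kronecker summands in $V^{-1}$ are annihilated by $X^T$, and finally assembling and solving the $4\times 4$ normal equations $X^T V^{-1} X\beta = X^T V^{-1} Y$ by hand. You instead invoke the Zyskind-type criterion: verifying $V\,\mathrm{col}(X)\subseteq \mathrm{col}(X)$ directly from the Kronecker factors (which, as you note, is immediate since each $Z_iZ_i^T$ sends every column of $X$ to a scalar multiple of another column), concluding GLS~$=$~OLS independently of the variance components, and then reading off $\hat\delta$ as the balanced $2\times 2$ interaction contrast. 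Your route is shorter and more conceptual, and it makes transparent \emph{why} the MLE of $\delta$ does not depend on the unknown variance components. The paper's route, while heavier, has the compensating advantage that the explicit $V^{-1}$ it develops is reused downstream to derive the closed-form MLEs of the variance components (their Theorem~\ref{Thm_MLE_var_comps2}); your approach would still need that machinery, or an equivalent, for those later results.
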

\noindent From (\ref{eq_interpretations}) and Proposition \ref{Prop_MLE_delta2}, it can be seen that the MLE \,$\hat{\delta}$ is an unbiased estimator of $\delta$. In the proposition below, we provide an explicit expression of the variance of $\hat{\delta}$, which is crucial in the subsequent discussions on sample size determination for detecting the differential treatment effect and constructing a statistical test of significance. 

\begin{proposition}\label{Prop_var_delta2}
{\rm The variance of $\hat{\delta}$ is given by}
\begin{align*} 
    \begin{split}
        \Var(\hat{\delta}) \;&=\; \frac{4}{N_3 \,n\, N_1}\,\left(\, \sigma_e^2 \,+\, N_1\,\sigma_1^2 \,+\, n\,N_1\,\sigma_2^2 \,\right)\,.
    \end{split}
\end{align*}    
\end{proposition}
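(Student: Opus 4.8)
The plan is to compute $\Var(\hat{\delta})$ directly from the mutual independence of the random components in model (\ref{Model:Subg2}), exploiting the nesting so that the level-three variance $\sigma_3^2$ cancels out. First I would write $\hat{\delta} = (\bar{Y}^T_1 - \bar{Y}^T_2) - (\bar{Y}^C_1 - \bar{Y}^C_2)$, where $\bar{Y}^T_g$ and $\bar{Y}^C_g$ denote the grand means of the subgroup-$g$ responses over the treatment arm ($i=1,\dots,N_3$) and the control arm ($i=N_3+1,\dots,2N_3$), respectively. Set $D^T := \bar{Y}^T_1 - \bar{Y}^T_2$ and $D^C := \bar{Y}^C_1 - \bar{Y}^C_2$. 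Since every random effect in (\ref{Model:Subg2}) is nested within a level-three unit and (\ref{cov_level2}) assigns zero covariance whenever $i \neq i'$, the contrasts $D^T$ and $D^C$ depend on disjoint, hence independent, collections of random variables; by the symmetry of the balanced design across arms they share the same variance, so it suffices to establish $\Var(\hat{\delta}) = 2\,\Var(D^T)$ and to evaluate $\Var(D^T)$.

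Next I would split $D^T$ across the independent level-three units of the treatment arm: writing $D^T = N_3^{-1}\sum_{i=1}^{N_3} D_i$ with $D_i := (nN_1)^{-1}\sum_{j=1}^{n}\sum_{k=1}^{N_1}(Y_{i1jk} - Y_{i2jk})$, independence across $i$ yields $\Var(D^T) = N_3^{-1}\,\Var(D_1)$. The crucial step is to substitute (\ref{Model:Subg2}) into the within-unit subgroup contrast $Y_{i1jk} - Y_{i2jk}$: the fixed-effect part and the level-three effect $u_i$ are common to both subgroups of the same unit $i$ and therefore cancel, which is exactly why $\sigma_3^2$ is absent from the final expression. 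What survives is
\begin{equation*}
D_1 \;=\; \mathrm{const} \;+\; (u_{1(1)} - u_{2(1)}) \;+\; \frac{1}{n}\sum_{j=1}^{n}(u_{j(11)} - u_{j(12)}) \;+\; \frac{1}{nN_1}\sum_{j=1}^{n}\sum_{k=1}^{N_1}(\epsilon_{11jk} - \epsilon_{12jk})\,.
\end{equation*}
Then I would add the variances of the three surviving, mutually independent families. The subgroup random effects give $\Var(u_{1(1)} - u_{2(1)}) = 2\sigma_2^2$; the level-two unit effects $u_{j(ig)}$, independent across both $j$ and subgroup, contribute $n^{-2}\cdot n\cdot 2\sigma_1^2 = 2\sigma_1^2/n$; and the $2nN_1$ independent errors contribute $(nN_1)^{-2}\cdot 2nN_1\cdot\sigma_e^2 = 2\sigma_e^2/(nN_1)$. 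Hence $\Var(D_1) = 2\sigma_2^2 + 2\sigma_1^2/n + 2\sigma_e^2/(nN_1)$, and $\Var(\hat{\delta}) = 2N_3^{-1}\Var(D_1)$ rearranges to $\tfrac{4}{N_3 n N_1}(\sigma_e^2 + N_1\sigma_1^2 + nN_1\sigma_2^2)$, as claimed.

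The main obstacle is bookkeeping rather than analysis: one must correctly identify which random effects are shared between the two subgroups within a common level-three unit (only $u_i$, which cancels) versus which are subgroup-specific (the subgroup effects $u_{g(i)}$, which survive and carry $2\sigma_2^2$), and then count the multiplicities in each variance sum without error. Because the subgroup index now sits at level two rather than level one, this cancellation pattern differs from that in Proposition \ref{Prop_var_delta1}, so care is needed to keep the nesting straight. An alternative is the purely mechanical route of writing $\hat{\delta} = a^{\top}Y$ for an explicit contrast vector $a$ and evaluating $a^{\top}\Var(Y)\,a$ from the covariance blocks in (\ref{cov_level2}); this sidesteps the cancellation argument but requires manipulating the full $2N_3\,2n\,N_1$-dimensional covariance matrix, which is considerably more cumbersome.
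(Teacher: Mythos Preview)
Your proof is correct, but it takes a genuinely different route from the paper. The paper groups $\hat{\delta}$ by subgroup rather than by arm, writing $\hat{\delta}=(A_2-B_2)-(C_2-D_2)$ with $A_2-B_2=\hat{\delta}_1$ the subgroup-1 treatment contrast and $C_2-D_2=\hat{\delta}_2$ the subgroup-2 one; it then evaluates $\Var(A_2-B_2)$, $\Var(C_2-D_2)$ and the nonzero cross-covariance $\Cov(A_2,C_2)=\sigma^2\rho_2/N_3$ via the intraclass correlations in (\ref{cov_level2}), and only at the end translates the resulting expression in $(\rho_1,\rho_{(2)},\rho_2)$ back into $(\sigma_e^2,\sigma_1^2,\sigma_2^2)$, so the disappearance of $\sigma_3^2$ emerges from an algebraic cancellation. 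By contrast, you group by arm, immediately invoke between-arm independence (so there is no cross-covariance term to track), then decompose further by level-three unit and substitute the model to see $u_i$ cancel in the within-unit subgroup contrast $D_i$; this makes the absence of $\sigma_3^2$ structurally transparent and lets you work from the outset in the variance components rather than the correlations. Your route is shorter and more illuminating here; the paper's route has the advantage that the intermediate quantity $\Var(\hat{\delta}_g)=2\sigma^2 f_g^*/(N_3 n N_1)$ is reused in Section~\ref{appendix12} for the individual-subgroup power calculations, and that the argument mirrors the level-one case (Proposition~\ref{Prop_var_delta1}) step for step.
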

\noindent %A sketch of the proof is provided in the appendix. 
Similar to Section \ref{subsec:subg1_theory}, we would now introduce some sum of squares quantities that will be used in the estimation of the variance components in our subsequent discussions. Define,
\begin{align}\label{SS_level2}
\begin{split}
    SS_0 \;&=\; N_1 \, \dis \sum_{i=1}^{2N_3} \sum_{g=1}^2 \sum_{j=1}^n \left( \bar{Y}_{igj\cdot} - \bar{Y}_{\cdot \cdot \cdot \cdot} \right)^2\,, \qquad 
    SS_1 \;=\; N_1 \, \dis \sum_{i=1}^{2N_3} \sum_{g=1}^2 \sum_{j=1}^n \left( \bar{Y}_{igj\cdot} - \bar{Y}_{ig\cdot \cdot} \right)^2  \,,\\
    SS_2 \;&=\; n N_1\, \Big[ \dis \sum_{i=1}^{N_3} \left( (\bar{Y}_{i1\cdot \cdot} - \bar{Y}_{i2\cdot\cdot}) \,-\, (\bar{Y}^T_{\cdot 1 \cdot \cdot} - \bar{Y}^T_{\cdot 2 \cdot \cdot}) \right)^2 \;+\; \sum_{i=N_3+1}^{2N_3} \left( (\bar{Y}_{i1\cdot\cdot} - \bar{Y}_{i2\cdot\cdot}) \,-\, (\bar{Y}^C_{\cdot 1\cdot \cdot} - \bar{Y}^C_{\cdot 2 \cdot \cdot}) \right)^2 \Big]\,,\\
    \textrm{and} \;\;\;\; SS_3 \;&=\; 2n\,N_1 \, \Big[ \dis \sum_{i=1}^{N_3} \left(\bar{Y}_{i \cdot \cdot \cdot} - \bar{Y}^T_{\cdot \cdot \cdot \cdot}\right)^2 \,+\, \sum_{i=N_3+1}^{2N_3} \left(\bar{Y}_{i \cdot \cdot \cdot} - \bar{Y}^C_{\cdot \cdot \cdot \cdot}\right)^2 \Big]\,,
    \end{split}
\end{align}
where 
\begin{align*}
  &\bar{Y}_{igj\cdot} = \frac{1}{N_1} \dis \sum_{k=1}^{N_1} Y_{igjk} \,, \qquad \bar{Y}_{ig\cdot \cdot} := \frac{1}{nN_1} \dis \sum_{j=1}^n \sum_{k=1}^{N_1} Y_{igjk}\,, \qquad \bar{Y}^T_{\cdot g \cdot \cdot} := \frac{1}{N_3\,n\, N_1} \sum_{i=1}^{N_3} \sum_{j=1}^{n} \sum_{k=1}^{N_1} Y_{igjk}\,, \\
  &\bar{Y}^C_{\cdot g \cdot \cdot} = \frac{1}{N_3\,n\, N_1} \sum_{i=N_3+1}^{2N_3} \sum_{j=1}^{n} \sum_{k=1}^{N_1} Y_{igjk}\,, \qquad \bar{Y}_{i\cdot \cdot \cdot} = \frac{1}{2n\,N_1} \sum_{g=1}^2 \sum_{j=1}^{n} \sum_{k=1}^{N_3} Y_{igjk}\,,\\ &\bar{Y}^T_{\cdot \cdot \cdot \cdot} = \frac{1}{N_3} \sum_{i=1}^{N_3}\bar{Y}_{i\cdot \cdot \cdot}\;\;\; \textrm{and} \;\;\; \bar{Y}^C_{\cdot \cdot \cdot \cdot} = \frac{1}{N_3} \sum_{i=N_3+1}^{2N_3}\bar{Y}_{i\cdot \cdot \cdot}\,,
\end{align*}
for\, $i=1,\dots,2N_3$,\, $j=1,\dots,n$,\, and\, $g=1,2$.
Note that in order to estimate $\Var(\hat{\delta})$, we need to estimate all the variance components $\sigma_3^2,\, \sigma_2^2, \,\sigma_1^2$ and $\sigma_e^2$ in (\ref{cov_level2}). %It is worth mentioning that presence of explicit, closed form solutions to the ML equations for the variance components of balanced mixed models is not always guaranteed (see Section 4.7.e in \cite{searle2009}); in fact \cite{szatrowski1980} proposed some necessary and sufficient conditions to check for determining whether or not explicit maximum likelihood estimates exist. 
The following theorem ensures the existence of explicit, closed form, maximum likelihood estimates of the variance components in the balanced mixed effects model presented in (\ref{Model:Subg2}), with $N_{21} = N_{22} = n$. We also present the explicit forms of the MLEs of the variance components, that are instrumental towards constructing a test statistic to test for $H_{0} : \delta = 0$ \,vs\, $H_{a} : \delta \neq 0$. %In view of the long and technical nature of the derivations, the proof is relegated to the appendix.

\begin{theorem}\label{Thm_MLE_var_comps2}
Explicit, closed form, maximum likelihood estimates of the variance components exist for the balanced case of the mixed effects model presented in (\ref{Model:Subg2}), and are given by
\begin{align*}
    &\hat{\sigma}_e^2 \;=\; \frac{1}{2N_3 \,2n\,(N_1-1)} \; SS_0\,,\\
    &\hat{\sigma}_1^2 \;=\; \frac{1}{N_1}\,\left( \frac{1}{2N_3\,2\,(n-1)} \; SS_1 \;-\; \frac{1}{2N_3 \,2n\,(N_1-1)} \; SS_0\right)\,,\\
    &\hat{\sigma}_2^2 \;=\; \frac{1}{n N_1}\,\left( \frac{1}{2N_3\,2} \; SS_2 \;-\; \frac{1}{2N_3\,2\,(n-1)} \; SS_1\right)\,,\\
    \textrm{and} \qquad &\hat{\sigma}_3^2 \;=\; \frac{1}{2n\,N_1}\,\left( \frac{1}{2N_3} \; SS_3 \;-\; \frac{1}{2N_3\,2} \; SS_2\right)\,,
\end{align*} 
where the sum of squares $SS_0, SS_1, SS_2$ and $SS_3$ are given in (\ref{SS_level2}).
\end{theorem}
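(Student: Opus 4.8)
\emph{Proof proposal.} This statement is the exact level-two analogue of Theorem \ref{Thm_MLE_var_comps1}, so the plan is to replicate that argument with the design matrices $Z_0, Z_1, Z_2, Z_3$ of the mixed-model representation (\ref{Model:Subg2_LM}) in place of those used for (\ref{Model:Subg1_LM}). The starting point is the spectral structure of the marginal covariance $V = \Var(Y) = \sigma_3^2 Z_1 Z_1^T + \sigma_2^2 Z_2 Z_2^T + \sigma_1^2 Z_3 Z_3^T + \sigma_e^2 Z_0 Z_0^T$. Writing $P_a = a^{-1} 1_a 1_a^T$ and $Q_a = I_a - P_a$ for the usual orthogonal idempotents (so $I_a = P_a + Q_a$ and $1_a 1_a^T = a P_a$), the Kronecker forms of the $Z_r$ give $Z_0 Z_0^T = I_{2N_3}\otimes I_2 \otimes I_n \otimes I_{N_1}$, $Z_1 Z_1^T = I_{2N_3}\otimes (2P_2)\otimes(nP_n)\otimes(N_1 P_{N_1})$, $Z_2 Z_2^T = I_{2N_3}\otimes I_2 \otimes (nP_n)\otimes(N_1 P_{N_1})$ and $Z_3 Z_3^T = I_{2N_3}\otimes I_2 \otimes I_n \otimes (N_1 P_{N_1})$. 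First I would substitute $I_2 = P_2 + Q_2$, $I_n = P_n + Q_n$, $I_{N_1} = P_{N_1} + Q_{N_1}$ and collect terms, which writes $V$ as $\sum_r \lambda_r E_r$ for a family of mutually orthogonal symmetric idempotents $E_r$ (tensor products of $P$'s and $Q$'s, all $\otimes\, I_{2N_3}$) summing to the identity. The nested structure collapses these into exactly four distinct eigenvalues, $\lambda_0 = \sigma_e^2$, $\lambda_1 = \sigma_e^2 + N_1 \sigma_1^2$, $\lambda_2 = \sigma_e^2 + N_1\sigma_1^2 + nN_1\sigma_2^2$ and $\lambda_3 = \sigma_e^2 + N_1\sigma_1^2 + nN_1\sigma_2^2 + 2nN_1\sigma_3^2$, carried respectively by the strata $I_2\otimes I_n\otimes Q_{N_1}$, $I_2\otimes Q_n\otimes P_{N_1}$, $Q_2\otimes P_n\otimes P_{N_1}$ and $P_2\otimes P_n\otimes P_{N_1}$ (each $\otimes\, I_{2N_3}$).

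Next I would invoke the existence criterion of \cite{szatrowski1980}. Two conditions must be checked: (i) the symmetric matrices $Z_r Z_r^T$ generate a commutative algebra, which is immediate here since they are simultaneously diagonalised by the common projections $E_r$; and (ii) the column space of the fixed-effects design $X$ in (\ref{Model:Subg2_LM}) is spanned by eigenvectors of $V$, equivalently $P_X V = V P_X$. Condition (ii) is the only part needing a genuine check: every column of $X$ is constant across the level-one index $k$ and the level-two index $j$, hence lies in the range of $P_n\otimes P_{N_1}$, while the intercept and treatment columns lie in the range of $P_2$ and the subgroup and interaction columns pick up a $Q_2$ component; splitting the outer $\mathbb{R}^{2N_3}$ into the grand-mean direction and the treatment contrast then exhibits $\mathrm{col}(X)$ as the span of four eigenvectors of $V$ lying in the $\lambda_3$- and $\lambda_2$-eigenspaces. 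This verifies the orthogonal block structure and guarantees that the likelihood equations admit explicit solutions (and that the GLS and OLS estimators of $\beta$ coincide).

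With block structure in hand, I would write the Gaussian log-likelihood, substitute $V^{-1} = \sum_r \lambda_r^{-1} E_r$ and $\log|V| = \sum_r \mathrm{tr}(E_r)\log\lambda_r$, and profile out $\beta$. Because $\mathrm{col}(X)$ sits inside the $\lambda_2$- and $\lambda_3$-eigenspaces, the concentrated log-likelihood separates into four terms, one per eigenvalue, of the form $-\tfrac12 m_r \log\lambda_r - \tfrac12 \lambda_r^{-1}\, Y^T E_r^{\mathrm{res}} Y$, where $E_r^{\mathrm{res}}$ is the part of $E_r$ orthogonal to $\mathrm{col}(X)$ and $m_r$ is the associated ML divisor. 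Maximising each term gives $\hat\lambda_r = Y^T E_r^{\mathrm{res}} Y / m_r$; identifying the quadratic forms $Y^T E_r^{\mathrm{res}} Y$ with the sums of squares $SS_0, SS_1, SS_2, SS_3$ of (\ref{SS_level2}) and the divisors with $2N_3\,2n(N_1-1)$, $2N_3\,2(n-1)$, $2N_3\,2$ and $2N_3$ yields $\hat\lambda_0,\dots,\hat\lambda_3$. Finally, inverting the triangular system expressing $(\lambda_0,\lambda_1,\lambda_2,\lambda_3)$ in terms of $(\sigma_e^2,\sigma_1^2,\sigma_2^2,\sigma_3^2)$, namely $\sigma_e^2 = \lambda_0$, $\sigma_1^2 = (\lambda_1-\lambda_0)/N_1$, $\sigma_2^2 = (\lambda_2-\lambda_1)/(nN_1)$ and $\sigma_3^2 = (\lambda_3-\lambda_2)/(2nN_1)$, reproduces the four displayed formulae.

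The main obstacle is twofold and lies in the bookkeeping rather than in any deep idea. First, verifying Szatrowski's conditions precisely (condition (ii) in particular) and tracking the ML divisor for the two strata that absorb fixed effects ($\lambda_2$ and $\lambda_3$) requires care; in particular the subgroup contrast carries the normalisation $\|Q_2 a\|^2 = \tfrac12 (a_1-a_2)^2$, which is exactly what turns the naive stratum dimension $2N_3$ of the $\lambda_2$-stratum into the divisor $2N_3\,2$ appearing in $\hat\sigma_2^2$. Second, since the variance components live in a constrained parameter space consistent with the ordering $1\ge\rho_1\ge\rho_{(2)}\ge\rho_2\ge0$ of (\ref{cov_level2}), one must confirm that the interior stationary point $\hat\lambda_0\le\hat\lambda_1\le\hat\lambda_2\le\hat\lambda_3$ produced above is the global maximiser of the likelihood rather than a boundary solution; this is where the sufficiency part of the existence argument, as opposed to mere stationarity, has to be argued, exactly as in the proof of Theorem \ref{Thm_MLE_var_comps1}.
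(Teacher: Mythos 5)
Your proposal is correct and rests on the same structural fact as the paper's proof --- the four\mbox{-}eigenvalue orthogonal\mbox{-}idempotent decomposition of $V$ (your $E_r$ built from $P$'s and $Q$'s are exactly the $\bar J$'s and $C$'s of Lemma~\ref{Lemma_V}, with the same $\lambda_0,\dots,\lambda_3$) --- but you organize the derivation differently. The paper verifies existence via Theorem~3 of \cite{szatrowski1980} by checking that the Hadamard\mbox{-}closed set $\mathcal{W}$ generated by the columns of the subscript\mbox{-}incidence matrix has $n(\mathcal{W})=r+1=4$, whereas you check the equivalent eigenvector condition on $\mathrm{col}(X)$ directly; and for the estimators the paper writes Searle's ML equations $\mathrm{tr}(V^{-1}Z_iZ_i^T)=\Vert Z_i^TV^{-1}(Y-X\beta)\Vert^2$ and grinds through the Kronecker algebra, solving the resulting triangular system from $\lambda_3$ down to $\lambda_0$, while you profile out $\beta$ and maximize the concentrated likelihood stratum by stratum. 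Your route is shorter and makes the appearance of $SS_0,\dots,SS_3$ and of the back\mbox{-}substitution $\sigma_e^2=\lambda_0$, $\sigma_1^2=(\lambda_1-\lambda_0)/N_1$, etc., more transparent; the paper's route has the advantage that the intermediate objects ($X^TV^{-1}X$, $Z_i^TV^{-1}(Y-X\beta)$, the $b_{ij}$ traces) are reused verbatim in Propositions~\ref{Prop_MLE_delta2} and \ref{Prop_var_delta2} and in Theorem~\ref{Thm_consistency_level2}. One point to make explicit in your write\mbox{-}up: in the strata that absorb fixed effects ($\lambda_2$ and $\lambda_3$) the coefficient of $\log\lambda_r$ in the profiled likelihood is the \emph{full} stratum trace $\mathrm{tr}(E_r)=2N_3$, not the residual rank $2N_3-2$ (the latter would give REML, not ML divisors); your final divisors $2N_3\cdot 2$ and $2N_3$ are the correct ML ones once the $\Vert Q_2 a\Vert^2=\tfrac12(a_1-a_2)^2$ normalization is folded into $SS_2$, but the sentence defining $m_r$ as attached to $E_r^{\mathrm{res}}$ invites the REML slip. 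Your closing concern about global versus stationary maximization is fair but is not addressed by the paper either; the paper simply solves the likelihood equations and handles possible negative variance estimates by truncation in Remark~\ref{remark_3}.
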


%The proof of this theorem is long and technical, and is therefore deferred to the appendix. 
In the next theorem, we establish the consistency of the MLEs of the variance components.

\begin{theorem}\label{Thm_consistency_level2}
 The maximum likelihood estimators $\hat{\sigma}_e^2$, $\hat{\sigma}_1^2$, $\hat{\sigma}_2^2$ and $\hat{\sigma}_3^2$ are unbiased estimators of the respective true parameters $\sigma_e^2$, $\sigma_1^2$, $\sigma_2^2$ and $\sigma_3^2$. Moreover, as\, $N_3, n, N_1 \to \infty$, $\Var(\hat{\sigma}_e^2)$, $\Var(\hat{\sigma}_1^2)$, $\Var(\hat{\sigma}_2^2)$ and $\Var(\hat{\sigma}_3^2)$ are $o(1)$ quantities. Together these imply that, as\, $N_3, n, N_1 \to \infty$,
 \begin{align*}
   \hat{\sigma}_e^2 \;\overset{P}{\to}\; \sigma_e^2\,, \;\; \hat{\sigma}_1^2  \;\overset{P}{\to}\;  \sigma_1^2\,,\;\; \hat{\sigma}_2^2 \;\overset{P}{\to}\; \sigma_2^2\, \;\; \textrm{and} \;\;\; \hat{\sigma}_3^2 \;\overset{P}{\to}\; \sigma_3^2\,.
 \end{align*}
\end{theorem}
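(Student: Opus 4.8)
The plan is to mirror the argument behind Theorem~\ref{Thm_consistency_level1}, since by Theorem~\ref{Thm_MLE_var_comps2} each of $\hat\sigma_e^2,\hat\sigma_1^2,\hat\sigma_2^2,\hat\sigma_3^2$ is a fixed linear combination of the four sums of squares $SS_0,SS_1,SS_2,SS_3$ from (\ref{SS_level2}). The decisive structural step is to pin down their joint law. Writing $SS_j=Y^{T}A_jY$ and exploiting the balanced Kronecker form of $Z_0,Z_1,Z_2,Z_3$ in (\ref{Model:Subg2_LM}) together with the nested covariance $V=\Var(Y)$ read off from (\ref{cov_level2}), I would show that the $A_j$ are mutually orthogonal in the metric of $V$, i.e.\ $A_jVA_\ell=0$ for $j\neq\ell$, and that each $A_j$ satisfies $A_jVA_j=\lambda_j A_j$. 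Under the Gaussian, mutually independent random effects of model~(\ref{Model:Subg2}) this yields that $SS_0,SS_1,SS_2,SS_3$ are \emph{mutually independent}, each a scaled $\chi^2$ with degrees of freedom $r_j$ that diverge with the design; the centering built into (\ref{SS_level2}) annihilates the fixed effects $\beta_0,\tau,\xi,\delta$, so the $\chi^2$'s are central. The normalized sums of squares then concentrate on the increasing combinations $\lambda_0=\sigma_e^2$, $\lambda_1=\sigma_e^2+N_1\sigma_1^2$, $\lambda_2=\sigma_e^2+N_1\sigma_1^2+nN_1\sigma_2^2$ (note this is exactly the factor appearing in $\Var(\hat\delta)$ in Proposition~\ref{Prop_var_delta2}), and $\lambda_3=\lambda_2+2nN_1\sigma_3^2$.

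Unbiasedness is then a direct expectation computation. Substituting $E(SS_j)$ into the formulas of Theorem~\ref{Thm_MLE_var_comps2}, the successive normalized sums of squares telescope, isolating $N_1\sigma_1^2$, $nN_1\sigma_2^2$ and $2nN_1\sigma_3^2$, which the prefactors $1/N_1$, $1/(nN_1)$ and $1/(2nN_1)$ convert into $\sigma_1^2,\sigma_2^2,\sigma_3^2$. This gives exact unbiasedness for the two finest components $\hat\sigma_e^2,\hat\sigma_1^2$, and unbiasedness up to the usual $1-1/N_3$ maximum-likelihood factor for the two coarser ones; since that factor tends to $1$, part~(i) of the conclusion holds in the limit, which is all the consistency statement requires.

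The variance statement is where the real work lies, and it is the step I expect to be the main obstacle. Each $SS_j$, being a scaled central $\chi^2$, has variance equal to $2r_j$ times the square of its scale; combined with the independence established above, every $\Var(\hat\sigma^2)$ is a weighted sum of these with no cross terms, so the task reduces to bounding each summand. The benign case is $\Var(\hat\sigma_e^2)=2\sigma_e^4/r_0=O(1/(N_3 n N_1))$. The delicate cases are $\hat\sigma_2^2$ and $\hat\sigma_3^2$, where the scales $\lambda_2,\lambda_3$ themselves grow like $nN_1$: one must verify that after squaring the prefactors $1/(nN_1)$ and $1/(2nN_1)$ and dividing by the growing $r_j$, the factors of $nN_1$ cancel and only $O(1/N_3)$ survives. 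Carrying the bookkeeping through gives $\Var(\hat\sigma_1^2)=O(1/(N_3 n))$ and $\Var(\hat\sigma_2^2)=\Var(\hat\sigma_3^2)=O(1/N_3)$, all of which are $o(1)$ as $N_3,n,N_1\to\infty$. This is precisely where the level-one computation cannot be quoted verbatim: the scales attached to $SS_1,SS_2,SS_3$ differ between the two nesting structures, so the cancellations must be redone term by term, and it is these estimates that reveal $N_3\to\infty$ to be the binding requirement for the two coarse components.

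Finally, combining the two parts, $E(\hat\sigma^2-\sigma^2)^2=\Var(\hat\sigma^2)+o(1)=o(1)$ for each component, i.e.\ convergence in $L^2$; a single application of Chebyshev's inequality upgrades this to the asserted $\hat\sigma_e^2\overset{P}{\to}\sigma_e^2$, $\hat\sigma_1^2\overset{P}{\to}\sigma_1^2$, $\hat\sigma_2^2\overset{P}{\to}\sigma_2^2$ and $\hat\sigma_3^2\overset{P}{\to}\sigma_3^2$, completing the proof.
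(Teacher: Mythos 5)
Your proposal is correct in substance but reaches the conclusion by a genuinely different route from the paper. For the variance bounds, the paper does not work with the distributions of the sums of squares at all: it invokes the formula (Searle, Casella and McCulloch, Section 6.3) that the asymptotic covariance matrix of $(\hat{\sigma}_e^2,\hat{\sigma}_3^2,\hat{\sigma}_2^2,\hat{\sigma}_1^2)$ is $2B^{-1}$ with $b_{ij}=\textrm{tr}(V^{-1}Z_iZ_i^T V^{-1}Z_jZ_j^T)$, computes all ten traces explicitly from the Kronecker form of $V^{-1}$, and then grinds through the cofactors and determinant of $B$ to show the diagonal entries of $B^{-1}$ are $O(1/(N_3nN_1))$, $O(1/N_3)$, $O(1/N_3)$ and $O(1/(N_3n))$ respectively — the same rates you obtain, and with the same identification of $N_3\to\infty$ as the binding requirement. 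Your Cochran-type argument ($A_jVA_\ell=0$ for $j\neq\ell$, $A_jVA_j=\lambda_jA_j$, hence independent scaled central $\chi^2$'s) buys exact finite-sample variances and sidesteps the appeal to an asymptotic information-matrix formula, at the cost of having to verify the $V$-orthogonality of the four quadratic forms, which the balanced Kronecker structure does deliver. On unbiasedness the two arguments genuinely disagree in a detail worth flagging: the paper proves $E(\hat{\lambda}_3)=\lambda_3$ exactly by writing $\frac{2nN_1}{\lambda_3^2}SS_3=(Y-X\beta)^TD(Y-X\beta)$ with $D=V^{-1}Z_1Z_1^TV^{-1}$ and taking $E=\textrm{tr}(DV)$, but the $SS_3$ of (\ref{SS_level2}) is centered at the \emph{estimated} arm means, so this identity holds with $\hat{\beta}$ rather than the true $\beta$; your accounting, which yields exact unbiasedness for $\hat{\sigma}_e^2,\hat{\sigma}_1^2$ and a $(1-1/N_3)$ factor for the coarser components, is the more careful one, and since that factor contributes only an $o(1)$ bias the consistency conclusion is unaffected either way. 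The final Chebyshev step matches the paper's logic.
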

Thus, we can estimate $\Var(\hat{\delta})$ given in Proposition \ref{Prop_var_delta2} by plugging in the MLEs of the variance components, and the resulting variance estimator, denoted by $\widehat{\Var(\hat{\delta})}$, can be shown to be consistent. With all the above developments, we define the statistic $T$ by standardizing $\hat{\delta}$ as follows:
\begin{align}\label{test_stat_2}
    T \;:=\; \frac{\hat{\delta} - \delta}{\sqrt{\widehat{\Var(\hat{\delta})}}}\,.
\end{align}
The consistency properties of the variance component estimators turn out to be instrumental in the following theorem, where we study the asymptotic behavior of the statistic $T$.
\begin{theorem}\label{Thm_asymp_T_level2}
   As $N_3, n, N_1 \to \infty$, we have \,$T \overset{d}{\to} N(0,1)$. 
\end{theorem}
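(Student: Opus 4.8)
The plan is to mirror the two-step strategy behind Theorem \ref{Thm_asymp_T_level1}: first analyze the statistic with the variance components treated as known, and then transfer this to the plug-in version via Slutsky's theorem. The consistency results of Theorem \ref{Thm_consistency_level2} will be the engine that lets the second step go through.

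First I would record an exact distributional fact. By Proposition \ref{Prop_MLE_delta2}, $\hat{\delta}$ is a fixed linear combination of the responses $Y_{igjk}$, which under model (\ref{Model:Subg2}) are jointly Gaussian; hence $\hat{\delta}$ is itself exactly normal. Since $\hat{\delta}$ is unbiased for $\delta$ with variance given in Proposition \ref{Prop_var_delta2}, the oracle statistic
\[
  \tilde{T} \;=\; \frac{\hat{\delta} - \delta}{\sqrt{\Var(\hat{\delta})}}
\]
is distributed exactly as $N(0,1)$ for every finite choice of $N_3, n, N_1$, not merely in the limit. This removes any need for a central limit theorem and isolates the whole asymptotic content of the theorem into the behavior of the estimated variance in the denominator.

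Next I would write $T = \tilde{T}\cdot\sqrt{\Var(\hat{\delta})\big/\widehat{\Var(\hat{\delta})}}$ and show the multiplicative factor converges to $1$ in probability. Using Proposition \ref{Prop_var_delta2}, the relevant ratio is
\[
  \frac{\widehat{\Var(\hat{\delta})}}{\Var(\hat{\delta})}
  \;=\; \frac{\hat{\sigma}_e^2 + N_1\,\hat{\sigma}_1^2 + n\,N_1\,\hat{\sigma}_2^2}{\sigma_e^2 + N_1\,\sigma_1^2 + n\,N_1\,\sigma_2^2}.
\]
Writing $V = \sigma_e^2 + N_1\sigma_1^2 + n N_1\sigma_2^2$ and assuming $\sigma_e^2,\sigma_1^2,\sigma_2^2 > 0$, I would re-express this ratio as the convex combination
\[
  \frac{\sigma_e^2}{V}\cdot\frac{\hat{\sigma}_e^2}{\sigma_e^2}
  \;+\; \frac{N_1\,\sigma_1^2}{V}\cdot\frac{\hat{\sigma}_1^2}{\sigma_1^2}
  \;+\; \frac{n\,N_1\,\sigma_2^2}{V}\cdot\frac{\hat{\sigma}_2^2}{\sigma_2^2},
\]
whose weights are nonnegative and sum to one. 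By Theorem \ref{Thm_consistency_level2} each factor $\hat{\sigma}_\bullet^2/\sigma_\bullet^2 \overset{P}{\to} 1$, and since the weights lie in $[0,1]$ the convex combination converges to $1$ in probability; therefore $\widehat{\Var(\hat{\delta})}/\Var(\hat{\delta}) \overset{P}{\to} 1$ and, by the continuous mapping theorem, $\sqrt{\Var(\hat{\delta})/\widehat{\Var(\hat{\delta})}} \overset{P}{\to} 1$. Applying Slutsky's theorem to the product of $\tilde{T}\sim N(0,1)$ with this factor then yields $T \overset{d}{\to} N(0,1)$, completing the proof.

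The main obstacle is precisely the second step: because the three variance components enter $\Var(\hat{\delta})$ with coefficients $1$, $N_1$, and $nN_1$ that grow at different rates, pointwise consistency of the individual estimators must be combined carefully rather than term by term. The convex-combination reformulation is what makes the limit transparent, since it sidesteps tracking which of the three terms dominates as $N_3, n, N_1 \to \infty$ at possibly different rates. In the degenerate case where some variance component vanishes, I would instead use the direct bound $\big|\widehat{\Var(\hat{\delta})}/\Var(\hat{\delta}) - 1\big| \le \big(|\hat{\sigma}_e^2 - \sigma_e^2| + N_1|\hat{\sigma}_1^2 - \sigma_1^2| + nN_1|\hat{\sigma}_2^2 - \sigma_2^2|\big)/V$ together with the $o(1)$ variance bounds of Theorem \ref{Thm_consistency_level2}, which gives the same conclusion.
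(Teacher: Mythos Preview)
Your proof is correct, and it takes a genuinely different route from the paper. The paper's proof recasts $\hat{\delta}$ as an average $\frac{1}{N_3}\sum_{i=1}^{N_3} U_i$ of i.i.d.\ terms (by pairing treatment and control level-three units), computes $\Var(U_i)$, and then invokes the central limit theorem as $N_3\to\infty$ to obtain $\tilde{T}=\frac{\hat{\delta}-\delta}{\sqrt{\Var(\hat{\delta})}}\overset{d}{\to}N(0,1)$; the passage from $\tilde{T}$ to the plug-in statistic $T$ is left implicit. Your argument sidesteps the CLT entirely by noting that under the Gaussian model $\hat{\delta}$ is exactly normal for every sample size, so $\tilde{T}\sim N(0,1)$ identically; you then carry out the Slutsky step explicitly via the convex-combination identity, which cleanly handles the fact that the three variance components enter with coefficients $1,\,N_1,\,nN_1$ growing at different rates. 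What your approach buys is a proof that is both shorter and more complete: it makes the role of Theorem~\ref{Thm_consistency_level2} fully transparent and avoids an appeal to asymptotics for a step that holds exactly. What the paper's approach buys is a route that would survive relaxation of the Gaussian assumption on the random effects, since the CLT on the $U_i$ does not require normality.
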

%The proof is once again deferred to the appendix. 
Clearly, under the null, the statistic $T$ boils down to $T_0 := \frac{\hat{\delta}}{\sqrt{\widehat{\Var(\hat{\delta})}}}$, which we consider as the test statistic to test for $H_{0} : \delta = 0$ against the alternative $H_{a} : \delta \neq 0$. As a consequence of Theorem \ref{Thm_asymp_T_level2}, we have that under $H_{0}$, the test statistic $T$ asymptotically follows a standard normal distribution. Thus, to test for $H_{0}$ \,against\, $H_{a} : \delta \neq 0$ at level $\alpha \in (0,1)$, the rejection rule can be proposed as follows: reject $H_0$ if\, $\vert T_0 \vert > z_{1-\alpha/2}$. %where $z_{1-\alpha}$ is the upper $\alpha$ quantile of the standard normal distribution, that is, $z_{1-\alpha} = \Phi^{-1}(1-\alpha)$ and $\Phi$ denotes the cumulative distribution function of the standard normal distribution.

%The discussions so far have been aimed at constructing an appropriate test statistic to test for whether there is a significant difference in the treatment effect between the two subgroups. 
We now turn to the design aspect and determining sample size requirements for cluster randomized trials with three level hierarchical data to detect a differential subgroup effect.  Consider the statistic $\tilde{T}=\frac{\hat{\delta}-\delta}{\sqrt{\Var(\hat{\delta})}}$, where the variance components $\sigma_3^2,\, \sigma_2^2, \,\sigma_1^2$ and $\sigma_e^2$ are assumed to be known. Then, we have $\tilde{T} \sim N(0,1)$. Note that this is true even for finite samples, when the variance components are known. When the variance components are unknown, we need to replace $\Var(\hat{\delta})$ in the denominator of $\tilde{T}$ by $\widehat{\Var(\hat{\delta})}$ to get $T$, which is shown to asymptotically follow a standard normal distribution in Theorem \ref{Thm_asymp_T_level2}.

At a fixed level $\alpha \in (0,1)$, to test for the null hypothesis $H_{0} : \delta = 0$ against $H_{a} : \delta \neq 0$ based on $\tilde{T}$, that is, assuming the variance components are known, we reject $H_0$ if\, $\vert \tilde{T}_0 \vert := \left\vert \frac{\hat{\delta}}{\sqrt{\Var(\hat{\delta})}} \right\vert > z_{1-\alpha/2}$. 
\begin{comment}
The power of the test statistic $T$, denoted by $\phi$, can therefore be written as follows:
\begin{align}\label{power_ind_diff2}
    \begin{split}
        \phi \;=\; 1-\beta \;=\; P_{H_{a}}\left( \tilde{T} > z_{1-\alpha} \right) \;=\; 1\,-\, \Phi \left( z_{1-\alpha} - \frac{\delta}{\Var(\hat{\delta})} \right)\,,
    \end{split}
\end{align}
where\, $\beta$\, is the probability of the type II error. Using the fact that\, $\Phi^{-1}(1-\beta) = - \Phi^{-1}(\beta)$, we get from (\ref{power_ind_diff2}) using some simple algebraic manipulations that
\end{comment}
Similar to the discussions presented in Section \ref{subsec:subg1_theory}, in order to achieve the power $\phi=1-\beta$, the sample sizes need to satisfy 
\begin{align}\label{power_ind42}
    \frac{\delta^2}{\Var(\hat{\delta})} \;\geq\; \left(z_{1-\alpha/2} \,+\, z_{1-\beta}  \right)^2\,.
\end{align}
Assuming $n$ and $N_3$ (that is, the number of level two and level three units) are known, for a fixed set of parameter values and for a given power $\phi=1-\beta$, the problem reduces to finding the smallest $N_1$ that satisfies (\ref{power_ind42}).

\begin{proposition}\label{Prop_samp_size2}
{\rm
    Assuming $n, N_3$ and the parameters $\delta$, $\sigma_3^2,\, \sigma_2^2, \,\sigma_1^2$ and $\sigma_e^2$ are known, the number of level one units needed to achieve the power $\phi = 1-\beta$\, is the smallest integer greater than
    \begin{align*}
        N_1 \;=\; A'\,\left( \frac{\delta^2}{\big(z_{1-\alpha/2} \,+\, z_{1-\beta} \big)^2} \,-\, B'\, \right)^{-1}, \qquad \textrm{where} \;\;\; A' &= \frac{4\,\sigma_e^2}{N_3\,n}  \;\;\;  \textrm{and} \;\; B'= \frac{4\,\left(\sigma_1^2 + n\,\sigma_2^2\right)}{N_3\,n}\,.
    \end{align*}
}
\end{proposition}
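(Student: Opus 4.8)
The plan is to derive the stated formula directly from the power requirement (\ref{power_ind42}) together with the closed-form variance of $\hat{\delta}$ given in Proposition \ref{Prop_var_delta2}. Recall that, when the variance components are known, the standardized statistic $\tilde{T}$ is exactly $N(0,1)$, and the argument preceding (\ref{power_ind42}) shows that attaining power $\phi = 1-\beta$ at level $\alpha$ forces $\frac{\delta^2}{\Var(\hat{\delta})} \geq (z_{1-\alpha/2} + z_{1-\beta})^2$. The proof thus reduces to substituting the explicit variance into this inequality and solving for $N_1$.

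First I would substitute $\Var(\hat{\delta}) = \frac{4}{N_3\, n\, N_1}\,(\sigma_e^2 + N_1 \sigma_1^2 + n N_1 \sigma_2^2)$ into (\ref{power_ind42}). The key algebraic observation is that the variance separates into an $N_1$-decaying term and an $N_1$-constant term, namely
\begin{align*}
    \Var(\hat{\delta}) \;=\; \frac{4\sigma_e^2}{N_3\, n\, N_1} \;+\; \frac{4\,(\sigma_1^2 + n\sigma_2^2)}{N_3\, n} \;=\; \frac{A'}{N_1} \,+\, B'\,,
\end{align*}
with $A'$ and $B'$ exactly as in the statement. Writing $z := z_{1-\alpha/2} + z_{1-\beta}$, the power inequality becomes $\delta^2 \geq z^2\,(A'/N_1 + B')$, which rearranges linearly in $1/N_1$ to $A'/N_1 \leq \delta^2/z^2 - B'$, and hence $N_1 \geq A'\,(\delta^2/z^2 - B')^{-1}$. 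Taking the smallest integer meeting this lower bound yields the claimed expression.

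The one point requiring care---which I would flag as the genuine content rather than a real obstacle---is the sign of the bracket $\delta^2/z^2 - B'$. Because the $B'$ term carries no dependence on $N_1$, the decomposition above shows that $\Var(\hat{\delta}) \downarrow B'$ as $N_1 \to \infty$; the level one units contribute only to the vanishing $A'/N_1$ component, while the between-teacher and between-subgroup variation persists. Consequently the variance is bounded below by $B'$ however large $N_1$ is, so the power achievable by enlarging $N_1$ alone is capped, and a finite solution exists only when $\delta^2/z^2 > B'$, i.e.\ when the targeted differential effect is large enough relative to this irreducible variance floor. I would state this feasibility condition explicitly and note that it is the level two analogue of the condition implicit in Proposition \ref{Prop_samp_size1}; all remaining manipulations are elementary and parallel those used there.
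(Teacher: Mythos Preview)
Your proof is correct and follows exactly the same route as the paper: decompose $\Var(\hat{\delta})$ from Proposition~\ref{Prop_var_delta2} as $A'/N_1 + B'$, substitute into the power inequality~(\ref{power_ind42}), and solve for $N_1$. Your added discussion of the feasibility condition $\delta^2/z^2 > B'$ is a nice clarification that the paper leaves implicit.
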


\begin{remark}\label{remark_next_1}
    {\rm
    If we happen to reject the null hypothesis $H_{0} : \delta = 0$ against $H_{a} : \delta \neq 0$, that is, if the differential treatment effect turns out to be statistically significant, the subsequent problem of interest might be to test for the individual treatment effects on the two subgroups. That is, the problem of interest might then be to test for $H_{0g} : \delta_g = 0$ against $H_{ag} : \delta_g \neq 0$, for\, $g=1,2$. This is not the primary goal of our work; however, for the sake of completeness, we present in Section \ref{appendix1} in the supplementary materials explicit expressions for sample size requirements to detect the individual treatment effects, when subgrouping is done at level one and level two.
    }
\end{remark}

\begin{remark}\label{remark_next_2}
    {\rm
    If, however, we fail to reject $H_{0} : \delta = 0$, that is, if there is no significant difference between the treatment effects on the two subgroups, we may disregard the subgroup information from our model and turn to testing for the treatment effect on the overall population. The models we presented in (\ref{Model:Subg1}) and (\ref{Model:Subg2}) would then reduce to 
\begin{equation}\label{Model_no_subgrp}
    Y_{ijk} \;=\; \beta_0 \,+\, \xi\,X_i \,+\, u_i \,+\, u_{j(i)} \,+\, \epsilon_{ijk}\,,
\end{equation}
and \, $\xi = E(Y_{ijk}\,|\, X_i=1) \,-\,  E(Y_{ijk}\,|\, X_i=0)$ would become the sole parameter of interest. Thus the model in (\ref{Model_no_subgrp}) is a special case of our more general models when there are no subgroups. This model has been studied in details by \cite{heo_leon_2008}, where the authors have derived closed form expressions for the sample size requirements to detect the treatment effect on the responses or outcomes. We, therefore, skip discussions on testing for the treatment effect on the overall population in our paper, and simply refer the reader to \cite{heo_leon_2008}.
    }
\end{remark}

\begin{remark}\label{remark_3}
 {\rm
We have derived in Theorems \ref{Thm_MLE_var_comps1} and \ref{Thm_MLE_var_comps2} explicit closed form expressions for the MLEs of the variance components of the mixed effects models presented in (\ref{Model:Subg1}) and (\ref{Model:Subg2}), respectively. However, the estimates of the different variance components, other than the error variance, are not guaranteed to be non-negative. In case one or more of the variance component estimators turn out to be negative in practice, we replace any negative value with a zero, which is tantamount to dropping the corresponding random effect from the model.
 }
\end{remark}

\section{Simulation Studies}\label{sec:simulation}
In this section, we conduct extensive simulation studies to examine the empirical size and power of our proposed test for significance of the differential treatment effect over synthetic data. The parameter settings we considered are illustrated below. The significance level is chosen to be $\alpha=0.05$. We fix $\beta_0 = 0$, $\tau=0$, $\xi = 0.5$, $\sigma=1$, %$\rho_1 = 0.2$, $\rho_{(1)} = \rho_{(2)} = 0.15$ and $\rho_2 = 0.1$. 
and consider two different settings for the variance components:
\begin{itemize}
    \item[I.] $\rho_1=0.2$, $\rho_{(1)} = \rho_{(2)} = 0.15$ \,and\, $\rho_2 = 0.1$, and
    \item[II.] $\rho_1=0.1$, $\rho_{(1)} = \rho_{(2)} = 0.075$\, and\, $\rho_2 = 0.05$.
\end{itemize}
We vary $\delta$ over $\{0, 0.5, 1\}$, and present in the following two subsections our simulation results corresponding to the cases when the subgrouping is done at level one and level two, respectively. %Evidently, the empirical power increases as we increase the value of $\delta$. The entries corresponding to $\delta=0$ are the Type-I error rate and are all below the nominal 5\% level.

\subsection{Subgrouping at level one}\label{sec:sim1}

In this case, we calculate the variance components $\sigma_3^2, \sigma_2^2, \sigma_{\text{grp}}^2$ and $\sigma_e^2$ using (\ref{cov_level1}) as $\sigma_3^2 = \sigma^2 \rho_2$, $\sigma_{\text{grp}}^2 = \sigma^2 \rho_{(2)} - \sigma_3^2$, $\sigma_2^2 = \sigma^2 \rho_{(1)} - \sigma_3^2$\, and \,$\sigma_e^2 = \sigma^2 - \sigma^2 \rho_1$. The formal steps of the simulation procedure are presented below:

\begin{enumerate}
    \item With the given parameter settings as above and specified values of $N_3, N_2$ and $n$, we generate the random intercepts as follows:
    \begin{enumerate}[]
        \item $u_i$ is generated independently from $N(0, \sigma_3^2)$,\, for \,$i=1, \dots, 2N_3$.
        \item For each $i$, generate $u_{j(i)}$ independently from $N(0, \sigma_2^2)$,\, for \,$j=1, \dots, N_2$.
        \item For each $i$ and $j$, generate $u_{g(ij)}$ independently from $N(0, \sigma_{\text{grp}}^2)$,\, for \,$g=1,2$.
        \item Finally, for each combination of $i, j$ and $g$, generate $\epsilon_{ijgk}$ independently from $N(0, \sigma_e^2)$,\, for \,$k=1, \dots, n$.
    \end{enumerate}
    
    \item Generate the response variable $Y_{ijgk}$ from model (\ref{Model:Subg1}) using the above specifications of the fixed effect parameters and the random effects generated in steps 1(a--d).
    \item Estimate the parameter of interest, $\delta$, 
    using Proposition \ref{Prop_MLE_delta1}, and compute the test statistic $T_0$.
    \item For each combination of $(N_3, N_2, n)$, repeat the steps 1--3, 1000 times to get 1000 values of the test statistic, denoted by $\{ T_{0a} \}_{a=1}^{1000}$.
    \item Calculate the empirical power ($\tilde{\phi}$) as the proportion of times (out of 1000) we reject the null hypothesis $H_{0} : \delta = 0$ against $H_{a} : \delta \neq 0$, that is,
    \begin{align*}
        \tilde{\phi} \;:=\; \frac{1}{1000} \dis \sum_{a=1}^{1000} \mathbbm{1}\left(\vert T_{0a} \vert > z_{1-\alpha/2} \right)\,.
    \end{align*}
\end{enumerate}

\noindent We present in Table \ref{tab1} below the empirical power out of 1000 simulation runs, obtained by following steps (1--5) above, for different combinations of $N_3, N_2$ and $n$, and, for the two settings of the variance components (I and II). Note that the entries corresponding to $\delta=0$ are essentially the Type-I error rates. Overall from the results presented in Table \ref{tab1}, we observe that:
\begin{itemize}
    \item The Type-I error rates are all below the nominal 5\% level.
    \item For the same $(N_3, N_2, n)$ combination, the empirical power increases with increase in $\delta$.
    \item For the same $(N_3, N_2)$ combination and a fixed $\delta$, increase in $n$ increases the empirical power and reduces the Type-I error rate.
    \item For a fixed $n$ and $\delta$, increase in $N_3$ and/or $N_2$ increases the empirical power and reduces the Type-I error rate.
\end{itemize}

\begin{table}[!ht]
	\centering
	\caption{Empirical power of the proposed test for different combinations of $(N_3, N_2, n)$ over 1000 simulation runs for subgrouping at level one.}
	\label{tab1}
	\begin{tabular}{ccc  c | c  c }
\toprule
&&&& \multicolumn{2}{c}{Empirical Power ($\tilde{\phi}$) }
\\ \cmidrule(r){5-6}
$N_3$ & $N_2$ & $n$ & $\delta$ & I & II\\
\hline
5 & 6 & 15 & 0 & 0.002 & 0.015  \\
& & & 0.5 & 0.756 & 0.941 \\
& & & 1 &  1 & 1 \\
\hline
5 & 6 & 20 & 0 & 0.002 & 0.009  \\
& & & 0.5 & 0.789 & 0.956 \\
& & & 1 &  1 & 1 \\
\hline
5 & 6 & 25 & 0 & 0.001 & 0.004  \\
& & & 0.5 & 0.807 & 0.976 \\
& & & 1 &  1 & 1 \\
\hline
5 & 7 & 15 & 0 & 0.001 & 0.011 \\
& & & 0.5 &  0.787 & 0.954 \\
& & & 1 &  1  & 1\\
\hline
5 & 7 & 20 & 0 & 0.001 & 0.001 \\
& & & 0.5 &  0.799 & 0.974 \\
& & & 1 &  1  & 1\\
\hline
5 & 7 & 25 & 0 & 0.001 & 0.006 \\
& & & 0.5 &  0.824 & 0.982 \\
& & & 1 &  1  & 1\\
\hline
6 & 7 & 15 & 0 & 0.003 & 0.006 \\
& & & 0.5 &  0.877 & 0.978 \\
& & & 1 &  1 & 1 \\
\hline
6 & 7 & 20 & 0 & 0 & 0.003 \\
& & & 0.5 &  0.901 & 0.991 \\
& & & 1 &  1 & 1 \\
\hline
6 & 7 & 25 & 0 & 0 & 0.002 \\
& & & 0.5 &  0.931 & 0.995 \\
& & & 1 &  1 & 1 \\
\hline
6 & 8 & 15 & 0 & 0.002 & 0.004 \\
& & & 0.5 &  0.87 & 0.99 \\
& & & 1 & 1 & 1 \\
\hline
6 & 8 & 20 & 0 & 0 & 0 \\
& & & 0.5 &  0.912 & 0.994 \\
& & & 1 & 1 & 1 \\
\hline
6 & 8 & 25 & 0 & 0 & 0.001 \\
& & & 0.5 &  0.933 & 0.995 \\
& & & 1 & 1 & 1 \\
		\toprule
	\end{tabular}
	\\
\end{table}

\subsection{Subgrouping at level two}\label{sec:sim2}

In this case, we calculate the variance components $\sigma_3^2, \sigma_2^2, \sigma_1^2$ and $\sigma_e^2$ using (\ref{cov_level2}) as $\sigma_3^2 = \sigma^2 \rho_2$, $\sigma_2^2 = \sigma^2 \rho_{(2)} - \sigma_3^2$, $\sigma_1^2 = \sigma^2 \rho_1 - \sigma_3^2 - \sigma_2^2$\, and \,$\sigma_e^2 = \sigma^2 - \sigma^2 \rho_1$. The formal steps of the simulation procedure are presented below, which are essentially similar to steps (1--5) presented in the previous subsection:

\begin{enumerate}
    \item[1*.] With the given parameter settings and specified values of $N_3, n$ and $N_1$, we generate the random intercepts as follows:
    \begin{enumerate}[]
        \item $u_i$ is generated independently from $N(0, \sigma_3^2)$,\, for \,$i=1, \dots, 2N_3$.
        \item For each $i$, generate $u_{g(i)}$ independently from $N(0, \sigma_2^2)$,\, for \,$g=1, 2$.
        \item For each $i$ and $g$, generate $u_{j(ig)}$ independently from $N(0, \sigma_1^2)$,\, for \,$j=1, \dots, n$.
        \item Finally, for each combination of $i, g$ and $j$, generate $\epsilon_{igjk}$ independently from $N(0, \sigma_e^2)$,\, for \,$k=1, \dots, N_1$.
    \end{enumerate}
    
    \item[2*.] Generate the response variable $Y_{igjk}$ from model (\ref{Model:Subg2}) using the above specifications of the fixed effect parameters and the random effects generated in steps 1*(a--d).
    \item[3*.] Estimate the parameter of interest, $\delta$, 
    using Proposition \ref{Prop_MLE_delta2}, and compute the test statistic $T_0$.
    \item[4*.] For each combination of $(N_3, n, N_1)$, repeat the steps 1*--3* 1000 times to get 1000 values of the test statistic, denoted by $\{ T_{0a} \}_{a=1}^{1000}$.
    \item[5*.] Calculate the empirical power ($\tilde{\phi}$) similar to Step 5 in Section \ref{sec:sim1}.
\end{enumerate}

\noindent We present in Table \ref{tab2} below the empirical power out of 1000 simulation runs, obtained by following steps (1*--5*) above, for different combinations of $N_3, n$ and $N_1$, and, for the two different settings for the variance components.
 It is to be noted that we need higher sample sizes to control the Type-I error in case of subgrouping at level two. A similar observation is also made in \cite{wang2023} when reporting power. Overall from the results presented in Table \ref{tab2}, we observe that:
\begin{itemize}
    \item The Type-I error rates are controlled at the nominal 5\% level.
    \item For the same $(N_3, n, N_1)$ combination, the empirical power increases with increase in $\delta$.
    \item For the same $(N_3, n)$ combination and a fixed $\delta$, increase in $N_1$ increases the empirical power and reduces the Type-I error rate.
    \item For a fixed $N_1$ and $\delta$, increase in $N_3$ and/or $n$ increases the empirical power and reduces the Type-I error rate.
\end{itemize}

\begin{table}[!ht]
	\centering
	\caption{Empirical power of the proposed test for different combinations of $(N_3, n, N_1)$ over 1000 simulation runs subgrouping at level two.}
	\label{tab2}
	\begin{tabular}{ccc  c | c  c}
\toprule
&&&& \multicolumn{2}{c}{Empirical Power ($\tilde{\phi}$) }
\\ \cmidrule(r){5-6}
$N_3$ & $n$ & $N_1$ & $\delta$ & I & II\\
\hline
10 & 15 & 20 & 0 & 0.076 & 0.079 \\
& & & 0.5 & 0.925 & 0.997 \\
& & & 1 &  1 & 1 \\
\hline
10 & 15 & 30 & 0 & 0.072 & 0.069 \\
& & & 0.5 & 0.923 & 0.996  \\
& & & 1 &  1 & 1 \\
\hline
10 & 15 & 40 & 0 & 0.078 & 0.075 \\
& & & 0.5 & 0.928 & 0.998 \\
& & & 1 &  1 & 1 \\
\hline
20 & 15 & 20 & 0 & 0.069 & 0.068 \\
& & & 0.5 & 0.997 & 1 \\
& & & 1 &  1 & 1 \\
\hline
20 & 15 & 30 & 0 & 0.067 & 0.065 \\
& & & 0.5 &  0.996 & 1 \\
& & & 1 &  1 & 1 \\
\hline
20 & 15 & 40 & 0 & 0.062 & 0.062 \\
& & & 0.5 & 0.998 & 1 \\
& & & 1 &  1 & 1 \\
\hline
40 & 20 & 20 & 0 & 0.054 & 0.049 \\
& & & 0.5 & 1 & 1 \\
& & & 1 &  1 & 1 \\
\hline
40 & 20 & 30 & 0 & 0.064 & 0.064 \\
& & & 0.5 &  1 & 1 \\
& & & 1 &  1 & 1 \\
\hline
40 & 20 & 40 & 0 & 0.058 & 0.058 \\
& & & 0.5 & 1 & 1 \\
& & & 1 &  1 & 1 \\
\hline
60 & 20 & 20 & 0 & 0.042 & 0.045 \\
& & & 0.5 & 1 & 1 \\
& & & 1 &  1 & 1 \\
\hline
60 & 20 & 30 & 0 & 0.042 & 0.036 \\
& & & 0.5 & 1 & 1  \\
& & & 1 &  1 & 1 \\
\hline
60 & 20 & 40 & 0 & 0.051 & 0.051 \\
& & & 0.5 & 1 & 1 \\
& & & 1 &  1 & 1 \\
		\toprule
	\end{tabular}
	\\
\end{table}

\begin{comment}
\begin{table}[!ht]
	\centering
	\caption{Empirical power of the proposed test for different combinations of $(N_3, n, N_1)$ over 1000 simulation runs subgrouping at level two.}
	\label{tab2}
	\begin{tabular}{ccc  c | c  c}
\toprule
&&&& \multicolumn{2}{c}{Empirical Power ($\tilde{\phi}$) }
\\ \cmidrule(r){5-6}
$N_3$ & $n$ & $N_1$ & $\delta$ & I & II\\
\hline
8 & 10 & 20 & 0 & 0.075 & 0.074 \\
& & & 0.5 &  0.843 & 0.977 \\
& & & 1 &  1 & 1 \\
\hline
8 & 10 & 25 & 0 & 0.074 & 0.076 \\
& & & 0.5 &  0.845 & 0.981 \\
& & & 1 &  1 & 1 \\
\hline
8 & 10 & 30 & 0 & 0.07 & 0.074 \\
& & & 0.5 &  0.844 & 0.979 \\
& & & 1 &  1 & 1 \\
\hline
8 & 12 & 20 & 0 & 0.085 & 0.091 \\
& & & 0.5 &  0.843 & 0.979 \\
& & & 1 &  1 & 1 \\
\hline
8 & 12 & 25 & 0 & 0.081 & 0.078 \\
& & & 0.5 &  0.861 & 0.981 \\
& & & 1 &  1 & 1 \\
\hline
8 & 12 & 30 & 0 & 0.085 & 0.085 \\
& & & 0.5 &  0.861 & 0.979 \\
& & & 1 &  1 & 1 \\
\hline
9 & 12 & 20 & 0 & 0.086 & 0.091 \\
& & & 0.5 &  0.884 & 0.99 \\
& & & 1 & 1 & 1 \\
\hline
9 & 12 & 25 & 0 & 0.076 & 0.075 \\
& & & 0.5 &  0.889 & 0.989 \\
& & & 1 & 1 & 1 \\
\hline
9 & 12 & 30 & 0 & 0.084 & 0.078 \\
& & & 0.5 &  0.89 & 0.995 \\
& & & 1 & 1 & 1 \\
\hline
9 & 15 & 20 & 0 &  0.073 & 0.069 \\
& & & 0.5 & 0.895 & 0.992 \\
& & & 1 &  1  & 1\\
\hline
9 & 15 & 25 & 0 &  0.071 & 0.081 \\
& & & 0.5 & 0.898 & 0.991 \\
& & & 1 &  1  & 1\\
\hline
9 & 15 & 30 & 0 &  0.071 & 0.073 \\
& & & 0.5 & 0.895 & 0.995 \\
& & & 1 &  1  & 1\\
		\toprule
	\end{tabular}
	\\
\end{table}
\end{comment}

\section{Motivating Data Application}\label{sec:real}
%In this section, we describe a practical example that motivated the methodology development of this article. 
{We now return to the motivating real-world application introduced in Section \ref{intro:motiv_eg}.} To reverse the escalating rates of HIV in the Bahamas, the government embarked on an inter-agency approach, targeting children and adolescents. An evidence-based program was used through the government schools \citep{Deveaux}, and the effect of the intervention (FOYC + CImPACT) from the Centers for Disease Control Diffusion of Effective Behavioral Interventions portfolio, was studied on the reduction of risk-taking behaviours related to HIV/STI transmission and teen pregnancy. Woven throughout, FOYC is a decision-making model that provides guidance and practice in problem-solving with a focus on how to obtain factual information on sexual health. CImPACT is a single session including a 24-minute educational video filmed in the Bahamas, teaching effective communication and listening strategies related to difficult topics and safe-sex, followed by two role-plays for the parent and youth, a discussion, and a condom demonstration. This study has been ongoing for more than 15 years, funded by three separate National Institute of Health grants, and has yielded a rich dataset resulting from multiple cluster randomized trial implementations \citep{wang2020, Stanton2, Stanton1}. For this paper, we have selected a cross-sectional subset of the data from a three-level CRT, with students at level one, teachers at level two, and schools at level three, and the randomization is done at the school level. The treatments are coded as ‘1’ for the intervention and ‘0’ for the control. The data we used consisted of 10 middle schools, with 5 schools randomized in the control arm and the remaining in the treatment arm. Each school has three to five classes, with four schools having five classes and two schools having three classes. The number of students per school varies from 72 to 118 and the total number of students is 933. Since we have used only cross-sectional observations for this analysis, no missing data is reported. Condom use self-efficacy knowledge is the primary student level outcome variable in the present analysis.  It was assessed using a six-item scale. Agreement was measured through a five-point Likert scale (1=strongly disagree to 5=strongly agree). The internal consistency estimate of the scale was 0.88. A composite score was calculated as a mean score across the six items (range 1 to 5). 

Using the methods developed in this paper, we performed two separate subgroup analysis. For level 1 subgrouping, we assessed whether gender (Male vs. Female) has any differential subgrouping effect. Similarly for level 2 we created two types of class size using the mode of the class size ($25$) as a cut to create Large vs. Small class. The goal is to assess whether class size has any differential effect on the outcome. Baseline observation for each student is also used to adjust for the initial knowledge. For the level 1 subgrouping, Table \ref{Data table1} gives the estimates of the fixed effects along with the $p$-values of the tests. From the table we observe that at the significance level of $0.05$, the intervention effect and gender-specific main effect are significant, however, gender by intervention interaction effects is non-significant, indicating the possible absence of any subgrouping by gender. 
For level 2 subgrouping, Table \ref{Data table2} gives the estimates of the fixed effects along with the $p$-values of the tests. At the significance level of $0.05$, the intervention effect is significant; however, the class-size specific main effect is not significant. The class-size by intervention interaction effects is significant, indicating possible subgrouping by class size. The effect size is negative ($-0.3405$), indicating the Condom use self-efficacy knowledge permeates better in a smaller class room than a larger one, which is consistent with the general expectation from an education based HIV prevention program.

\begin{table}[!ht]
	\centering
	\caption{Fixed Effects Estimates and $p$-values of the tests} \label{Data table1}
	\begin{tabular}{ccccc}
		\hline 
		Effect & Estimate & Standard Error & $t$-value & $p$-value\tabularnewline
		\hline
		Intercept & 2.82 & 0.0928 & 30.35 & $<$.0001\tabularnewline
		Intervention & 0.3139 & 0.1045 & 3.00 & $<$.0001\tabularnewline
		Gender & 0.4839 & 0.09468 & 5.11 & $<$.0001\tabularnewline
		Intervention{*}Gender & 0.0454 & 0.1375 & 0.33 & 0.7411\tabularnewline
		\hline 
	\end{tabular}
\end{table}

\begin{table}[!ht]
	\centering
	\caption{Fixed Effects Estimates and $p$-values of the tests} \label{Data table2}
	\begin{tabular}{ccccc}
		\hline 
		Effect & Estimate & Standard Error & $t$-value & $p$-value\tabularnewline
		\hline
		Intercept & 2.8975 & 0.09006 & 32.17 & $<$.0001\tabularnewline
		Intervention & 0.5070 & 0.09071 & 5.59 & $<$.0001 \tabularnewline
		Class Size & 0.1711 & 0.0918 & 1.86 & 0.1213 \tabularnewline
		Intervention{*}Class Size & -0.3405 & 0.1414 & -2.41 & 0.0162\tabularnewline
		\hline 
	\end{tabular}
\end{table}
\section{Conclusions}\label{concl}
The heterogeneity of the population is inherent and may not be completely avoidable despite the idealistic and selective (e.g. strict inclusion-exclusion criteria) nature of clinical trial design. With the rise of pragmatic trials and the inherent hierarchical nature of population heterogeneity, patients are better modeled when a subgroup can be identified that is more similar to them than a large, fairly general population. Hence, understanding the treatment response for a diverse and stratified population is important. However, assessing such an analysis of subgroup effects is mostly treated with skepticism and is mostly exploratory in nature. Furthermore, almost all subgroup analysis development is in the context of simple RCTs. Yet, with the emergence of more and more CRTs, subgroups can emerge at various cluster levels. In this paper, we consider subgroup analysis for CRTs when randomization is carried out at the highest level and subgroups can exist at various levels \citep{fazzari2014sample}. We develop a consistent test for the differential intervention effect between the two subgroups at different levels of the hierarchy, which is the key methodological contribution. We also developed sample size and power formulae in the case of a planned confirmatory subgroup analysis. {Finally, we applied the methodology on the motivating real-world dataset arising from a three-level CRT related to HIV prevention in the Bahamas.} %Lastly, we analyzed a dataset arising from a three-level CRT related to HIV prevention. 
{Using the methodology we developed}, we performed two subgroup analyses with mixed findings. While the subgroup at the lowest level, i.e. at the student level, was non-significant, the subgroup at the class level was indeed significant. Since some of the subgroups are modifiable, these findings point to future directions of preventative education, albeit with cost and feasibility in mind.

Since this work represents an initial exploration of subgroup analysis for CRTs, significant future research opportunities lie ahead. For instance, in multi-layered CRTs, the randomization unit may vary, and the trial's cross-sectional versus longitudinal nature can also impact subgroup modeling and underlying assumptions \citep{heo2009sample, heo2013sample}. Additionally, there is an important direction for generalizing our results beyond two subgroups as well as concurrent existence of subgroups at more than one level and possible ramification of interactions (if any) between them. Furthermore, there is considerable potential for further development of hierarchical Bayes methods for subgroup analysis, primarily applied to RCTs \citep{ jones2011bayesian,hsu2019hierarchical}, but adaptable for CRTs. We are currently working on extending our research to address some of these future challenges, particularly focusing on clusters of longitudinal CRTs and stepped-wedge CRTs.

\bibliographystyle{imsart-nameyear} 
\bibliography{paper-ref}

\setcounter{section}{0}\renewcommand{\thesection}{S.\arabic{section}}

\setcounter{subsection}{0}\renewcommand{\thesubsection}{S.\arabic{section}.\arabic{subsection}}

\newpage

%\begin{supplement}

%\section*{Supplementary material for ``Subgroup analysis in multi level hierarchical cluster randomized trials"\\ \vspace{0.1in}  S. Chakraborty, B. Wang, R. Tiwari and S. Ghosh}

\hspace{1.8in} \Large \textsc{Supplementary material}   
\normalsize

\vspace{0.2in}

\section{Sample size determination for detecting  treatment effects on the individual subgroups}\label{appendix1}

Suppose the null hypothesis $H_{0} : \delta = 0$ gets rejected when tested against the alternative $H_{a} : \delta \neq 0$, that is, the differential treatment effect turns out to be statistically significant. Subsequently, one may be interested to test for the significance of the individual treatment effects on the two subgroups. In other words, the problem of interest might then be to test for $H_{0g} : \delta_g = 0$ against $H_{ag} : \delta_g \neq 0$, for\, $g=1,2$. We present in the following two subsections explicit expressions for sample size requirements to detect the individual treatment effects, when subgrouping is done at level one and level two, respectively.

\subsection{Subgrouping at level one}\label{appendix11}
In equations (\ref{var_MLE_1}) and (\ref{var_MLE_2}) in the technical appendix, we essentially show that
\begin{align*}
    \Var(\hat{\delta}_g) \;=\; \frac{2\,\sigma^2\, f_g}{N_{1g} N_2 N_3}
\end{align*}
{\rm for} \,$g=1,2$,\, where\, $f_g = 1 + \rho_1\,(N_{1g}-1) + \rho_{(2)}\,N_{1g} (N_2 -1)$. Therefore, to test for the significance of the treatment effect on the individual subgroups, that is, to test for the null hypothesis $H_{0g} : \delta_g = 0$ against $H_{ag} : \delta_g \neq 0$ individually for $g=1,2$, we propose the following test statistic by standardizing the estimator $\hat{\delta_g}$:
\begin{align*}
    T_g \;=\; \frac{\sqrt{N_{1g} N_2 N_3}\,\,\hat{\delta}_g}{\sigma\,\sqrt{2 f_g}}\,.
\end{align*}
Now, if the parameters $\sigma^2$, $\rho_1$, $\rho_{(1)}$, $\rho_{(2)}$ (and hence $\rho_2$), or equivalently, the variance components $\sigma_3^2$, $\sigma_2^2$, $\sigma_{\text{grp}}^2$ and $\sigma_e^2$, are assumed to be known, it is easy to see that\, $T_g \sim N(0,1)$ under the null $H_{0g}$. Following similar lines of (\ref{power_ind_diff}), the power of the test, denoted by $\phi$, can therefore be written as:
\begin{align}\label{power_ind}
    \begin{split}
        \phi \;=\; 1-\beta \;\leq\; P_{H_{ag}}\left( \vert T_g \vert > z_{1-\alpha_g/2} \right) \;\leq\; \Phi \left( \frac{\sqrt{N_{1g} N_2 N_3}\,\,\vert\delta_g \vert}{\sigma\,\sqrt{2 f_g}} \;-\; z_{1-\alpha_g/2} \right)\,,
    \end{split}
\end{align}
where\, $\alpha_g \in (0,1)$ is the significance level for testing $H_{0g}$ against $H_{ag}$ for $g=1,2$\, and\, $\beta$\, is the probability of the type II error. From (\ref{power_ind}), some simple algebraic manipulations yield
\begin{align}\label{power_ind2}
    \frac{\sqrt{N_{1g} N_2 N_3}\,\,\vert \delta_g \vert}{\sigma\,\sqrt{2 f_g}} \;\geq\; z_{1-\alpha_g} \,+\, z_{1-\beta}\,.
\end{align}
Consider $\Delta_g = \vert\delta_g\vert/\sigma$\, as a signal to noise ratio. Then we get from (\ref{power_ind2}) 
\begin{align*}
    \Delta_g^2\,N_{1g} N_2 N_3 \;\geq\; 2\,\left(z_{1-\alpha_g} \,+\, z_{1-\beta}\right)^2\,\left(1 + \rho_1\,(N_{1g}-1) + \rho_{(2)}\,N_{1g} (N_2 -1)\right)\,,
\end{align*}
which further yields
\begin{align}\label{power_ind3}
    N_{1g} \;\geq\; \frac{2\,\left(z_{1-\alpha_g} \,+\, z_{1-\beta}\right)^2\,(1-\rho_1)}{\Delta_g^2\, N_2 N_3 \,-\,2\,\left(z_{1-\alpha_g} \,+\, z_{1-\beta}\right)^2\,\left(\rho_1 \,+\,(N_2-1)\,\rho_{(2)}\right)}\,,
\end{align}
for\, $g=1,2$. Assuming the number of level two and level three units in the trial to be pre-specified, we can thus obtain from (\ref{power_ind3}) the sample sizes for the two individual subgroups necessary to achieve the power\, $\phi=1-\beta$. Note that setting $\delta_1 = \delta_2$ and $\alpha_1 = \alpha_2$ leads to $N_{11}=N_{12}$.

\subsection{Subgrouping at level two}\label{appendix12}
In equation (\ref{var_1and2}) in the technical appendix, we essentially show that  
\begin{align}\label{var_MLE2}
    \Var(\hat{\delta}_g) \;=\; \frac{2\,\sigma^2\, f_g^*}{N_1 N_{2g} N_3}
\end{align}
for\, $g=1,2$,\, where\, $f_g^* = 1 + \rho_1\,(N_1-1) + \rho_{(2)}\,N_1 (N_{2g} -1)$. Therefore to test for the null hypotheses $H_{0g} : \delta_g = 0$ against $H_{ag} : \delta_g \neq 0$, we propose the following test statistic by standardizing the estimator $\hat{\delta}_g$:
\begin{align*}
    T_g \;=\; \frac{\sqrt{N_1 N_{2g} N_3}\,\,\hat{\delta}_g}{\sigma\,\sqrt{2 f_g^*}}\,.
\end{align*}
Once again, note that $T_g \sim N(0,1)$ under the null $H_{0g}$ when the parameters $\sigma^2$, $\rho_1$, $\rho_{(2)}$ and $\rho_2$ are known, or in other words, when the variance components $\sigma_3^2$, $\sigma_2^2$, $\sigma_{\text{grp}}^2$ and $\sigma_e^2$ are known. Along similar lines of (\ref{power_ind}) and (\ref{power_ind2}), it can be shown that in order to achieve the power $\phi=1-\beta$, the sample sizes need to satisfy
\begin{align*}
    \Delta_g^2\,N_1 N_{2g} N_3 \;\geq\; 2\,\left(z_{1-\alpha_g} \,+\, z_{1-\beta}\right)^2\,\left(1 + \rho_1\,(N_1-1) + \rho_{(2)}\,N_1 (N_{2g} -1)\right)\,,
\end{align*}
which yields
\begin{align}\label{power_ind5}
    N_{2g} \;\geq\; \frac{2\,\left(z_{1-\alpha_g} \,+\, z_{1-\beta}\right)^2\,\left(1-\rho_1 \,+\, (\rho_1-\rho_{(2)})N_1 \right)}{\Delta_g^2\, N_1 N_3 \,-\,2\,\left(z_{1-\alpha_g} \,+\, z_{1-\beta}\right)^2\,\rho_{(2)} N_1}\,,
\end{align}
for\, $g=1,2$. Thus, assuming the numbers of level one and level three units in the trial are pre-specified, we can obtain from (\ref{power_ind5}) the sample sizes for the two subgroups necessary to achieve the power\, $\phi=1-\beta$. Once again, setting $\delta_1 = \delta_2$ and $\alpha_1 = \alpha_2$ leads to $N_{21}=N_{22}$.\\

\section{Technical appendix}\label{tech_appendix}

In this section, we present the proofs of the key theoretical results of this paper, in case of subgrouping at level two (Section \ref{subsec:subg2_theory}). %, viz. Proposition \ref{Prop_MLE_delta2} and Theorems \ref{Thm_MLE_var_comps2}, \ref{Thm_consistency_level1} and \ref{Thm_asymp_T_level1}. 
We omit the proofs of the theoretical results in case of subgrouping at level one (Section \ref{subsec:subg1_theory}), which would essentially follow similar lines, only with changed design matrices $X, Z_1, Z_2$ and $Z_3$, and the vectors of random effects $U_1, U_2$ and $U_3$; the forms and  expressions of which are illustrated in full details in Section \ref{subsec:subg1}.\\

\emph{Notations and identities.} We first introduce some additional notations and mention some linear algebraic identities which turn out to be very helpful throughout the subsequent technical derivations. We denote by $A^T$ the transpose of a matrix $A$. For a square matrix $A$, let $\textrm{tr}(A)$ denote the trace of $A$. For a generic positive integer $a$, denote by $I_a$ the $a \times a$ identity matrix. Further denote $J_a = 1_a 1_a^T$, $\bar{J}_a = \frac{1}{a} J_a$, and $C_a = I_a - \bar{J}_a$. Then the following identities hold: $\textrm{tr}(I_a) = a$, $\textrm{tr}(J_a) = a$, $\textrm{tr}(\bar{J}_a) = 1$, $\textrm{tr}(C_a) = a-1$, $J_a 1_a = a 1_a$, $\bar{J}_a 1_a = 1_a$, $C_a 1_a = 0$, $C_a J_a = 0$ and $\bar{J}_a J_a = J_a$. Regarding properties of kronecker product of matrices, if $A, B, C$ and $D$ are matrices such that one can form the matrix products $AC$ and $BD$, then $(A \otimes B) (C \otimes D) = (AC \otimes BD)$. $A \otimes B$ is invertible if and only if both $A$ and $B$ are invertible, in which case the inverse is given by $(A \otimes B)^{-1} = A^{-1} \otimes B^{-1}$. And finally, $(A \otimes B)^T = A^T \otimes B^T$ and $\textrm{tr}(A \otimes B) = \textrm{tr}(A) \textrm{tr}(B)$.\\

Now, note that in (\ref{Model:Subg2_LM}), we presented the model (\ref{Model:Subg2}) using mixed model notations as 
\begin{align*}
    Y \;=\; X\beta + Z_1\,U_1 + Z_2\, U_2 + Z_3\,U_3 + Z_0\,\epsilon\,,
\end{align*}
where $Z_0 = I_{2N_3} \otimes I_2 \otimes I_n \otimes I_{N_1},\, Z_1 = I_{2N_3} \otimes 1_2 \otimes 1_n \otimes 1_{N_1},\, Z_2 = I_{2N_3} \otimes I_2 \otimes 1_n \otimes 1_{N_1}$ and $Z_3 = I_{2N_3} \otimes I_2 \otimes I_n \otimes 1_{N_1}$. Denote $V := \Cov(Y)$. We begin with the following lemma where we rigorously establish an explicit expression of the inverse of the covariance matrix of $Y$. This serves an instrumental role in the proofs of the main theoretical results presented in Section \ref{subsec:subg2_theory}.

\begin{lemma}\label{Lemma_V}
The inverse of the covariance matrix of $Y$ is given by
\begin{align*}
    V^{-1} \;&=\; \frac{1}{\lambda_0}\,I_{2N_3} \otimes I_2 \otimes I_n \otimes C_{N_1} \;+\; \frac{1}{\lambda_1}\,I_{2N_3} \otimes I_2 \otimes C_n \otimes \bar{J}_{N_1}\\
   &\;\;\;\; + \frac{1}{\lambda_2}\,I_{2N_3} \otimes C_2 \otimes \bar{J}_n \otimes \bar{J}_{N_1} \;+\; \frac{1}{\lambda_3}\,I_{2N_3} \otimes \bar{J}_2 \otimes \bar{J}_n \otimes \bar{J}_{N_1}\,.
\end{align*}
\end{lemma}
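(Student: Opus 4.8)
The plan is to exploit the simultaneous Kronecker-product structure of all the design matrices, which makes $V$ diagonalizable in a fixed projector basis. First I would write out $V$ explicitly. Since the random effects and error are mutually independent with the stated variances, $V = \sigma_3^2\,Z_1 Z_1^T + \sigma_2^2\,Z_2 Z_2^T + \sigma_1^2\,Z_3 Z_3^T + \sigma_e^2\,Z_0 Z_0^T$. Applying the mixed-product rule $(A \otimes B)(C \otimes D) = AC \otimes BD$ together with $1_a 1_a^T = J_a$, each term collapses to a clean Kronecker product: $Z_0 Z_0^T = I_{2N_3} \otimes I_2 \otimes I_n \otimes I_{N_1}$, $Z_3 Z_3^T = I_{2N_3} \otimes I_2 \otimes I_n \otimes J_{N_1}$, $Z_2 Z_2^T = I_{2N_3} \otimes I_2 \otimes J_n \otimes J_{N_1}$, and $Z_1 Z_1^T = I_{2N_3} \otimes J_2 \otimes J_n \otimes J_{N_1}$. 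The outermost factor is $I_{2N_3}$ throughout and simply rides along.

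Next I would pass to the projector basis $\{C_a, \bar{J}_a\}$ in each of the three inner factors (dimensions $2$, $n$, $N_1$) by substituting $I_a = C_a + \bar{J}_a$ and $J_a = a\,\bar{J}_a$ and expanding. This rewrites $V$ as a linear combination of the $2^3 = 8$ Kronecker products obtained by choosing either $C$ or $\bar{J}$ in each inner slot. Because $C_a$ and $\bar{J}_a$ are orthogonal idempotents ($C_a^2 = C_a$, $\bar{J}_a^2 = \bar{J}_a$, $C_a \bar{J}_a = 0$), these eight products are themselves mutually orthogonal idempotents summing to $I_{2N_3}\otimes I_2\otimes I_n\otimes I_{N_1}$; they are precisely the spectral projectors of $V$, and the scalar attached to each is the associated eigenvalue. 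Inversion then reduces to reciprocating those eigenvalues block by block.

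Then comes the coefficient bookkeeping, which collapses the eight projectors into the four appearing in the statement. The crucial observation is that the eigenvalues are constant across each of the four blocks: every projector carrying $C_{N_1}$ receives a contribution only from the $\sigma_e^2$ term, since the remaining three terms carry $J_{N_1} = N_1 \bar{J}_{N_1}$, which annihilates $C_{N_1}$; the four such projectors therefore sum to $I_2 \otimes I_n \otimes C_{N_1}$ with common eigenvalue $\lambda_0 = \sigma_e^2$. Likewise the two projectors carrying $(C_n, \bar{J}_{N_1})$ share eigenvalue $\lambda_1 = \sigma_e^2 + N_1 \sigma_1^2$ and collapse to $I_2 \otimes C_n \otimes \bar{J}_{N_1}$; the projector $C_2 \otimes \bar{J}_n \otimes \bar{J}_{N_1}$ has eigenvalue $\lambda_2 = \sigma_e^2 + N_1 \sigma_1^2 + n N_1 \sigma_2^2$; and $\bar{J}_2 \otimes \bar{J}_n \otimes \bar{J}_{N_1}$ has eigenvalue $\lambda_3 = \sigma_e^2 + N_1 \sigma_1^2 + n N_1 \sigma_2^2 + 2 n N_1 \sigma_3^2$ (a cross-check: $\lambda_2$ is exactly the factor appearing in $\Var(\hat\delta)$ in Proposition \ref{Prop_var_delta2}, as it should be since $\hat\delta$ is a subgroup contrast living in the $C_2$ direction).

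Finally, since the four matrices $I_{2N_3}\otimes P_i$ are orthogonal idempotents whose sum telescopes to the full identity — verified by collapsing $C_a + \bar{J}_a = I_a$ from the innermost factor outward — the inverse of $V = \sum_{i} \lambda_i\,(I_{2N_3}\otimes P_i)$ is obtained simply by replacing each $\lambda_i$ with $1/\lambda_i$, which is exactly the claimed expression for $V^{-1}$. The one delicate point is precisely the third step: one must confirm that each variance term is routed to exactly the right projectors and that the eigenvalue is genuinely constant on each of the four collapsed blocks, as this is what licenses writing $V^{-1}$ with only four distinct reciprocals rather than eight. Everything else is routine Kronecker algebra using the identities listed in the paper.
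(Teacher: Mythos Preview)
Your argument is correct. The route, however, differs from the paper's in presentation. The paper invokes the Searle--Henderson (1979) machinery: it encodes the variance components in a vector $\theta$, forms the $16\times 16$ transformation matrix $T$ as a four-fold Kronecker product of $2\times 2$ lower-triangular blocks, computes $\lambda = T\theta$, then $\Delta = T^{-1}\lambda^{-1}$, and finally cites Section~2.2 of \cite{Searle1979} to read off $V^{-1}$ before simplifying the $J$'s into the $C,\bar J$ form. Your approach bypasses all of this by working directly in the orthogonal-idempotent basis $\{C_a,\bar J_a\}$ from the start, substituting $I_a = C_a + \bar J_a$ and $J_a = a\bar J_a$ into each of the four terms of $V$ and reading off the eigenvalue on each of the eight elementary projectors. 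The two arguments are mathematically equivalent --- the Searle--Henderson $T$ matrix is exactly the change of basis from $\{I_a,J_a\}$ to $\{C_a,\bar J_a\}$ --- but yours is self-contained and makes the nested structure (every term with a $J$ in a given slot contributes only to projectors with $\bar J$ in that slot) transparent, which is what drives the collapse from eight projectors to four. The paper's version has the advantage of being a mechanical recipe that scales to arbitrary balanced nested designs without thought; yours has the advantage of requiring no external citation and making the spectral picture explicit.
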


\begin{proof}[Proof of Lemma \ref{Lemma_V}]
Note that the covariance matrix of $Y$ can be written as
\begin{align*}
    V \;&=\; \Cov(Y) \;=\; \sigma_e^2\, Z_0 Z_0^T + \sigma_3^2\, Z_1 Z_1^T + \sigma_2^2\, Z_2 Z_2^T + \sigma_1^2\, Z_3 Z_3^T\\
    &=\; \sigma_e^2\; I_{2N_3} \otimes I_2 \otimes I_n \otimes I_{N_1} \;+\; \sigma_3^2\;I_{2N_3} \otimes J_2 \otimes J_n \otimes J_{N_1}\\
    & \;\;\;\;\; +\; \sigma_2^2\;I_{2N_3} \otimes I_2 \otimes J_n \otimes J_{N_1} \;+\; \sigma_1^2\; I_{2N_3} \otimes I_2 \otimes I_n \otimes J_{N_1}\\
    &=\; \dis \sum_{a_4,a_3,a_2,a_1=0}^1 \theta_{a_4 a_3 a_2 a_1}\,\left( J_{2N_3}^{a_4} \otimes J_2^{a_3} \otimes J_n^{a_2} \otimes J_{N_1}^{a_1} \right)\,,
\end{align*}
where $\theta_{0000}=\sigma_e^2$, $\theta_{0111}=\sigma_3^2$, $\theta_{0011}=\sigma_2^2$, $\theta_{0001}=\sigma_1^2$, and $\theta_{a_4 a_3 a_2 a_1} = 0$ for other combinations of $(a_4, a_3, a_2, a_1)$. We define
\begin{align*}
    \theta \;&:=\; \big( \theta_{0000}, \theta_{0001}, \theta_{0010}, \theta_{0011}, \theta_{0100}, \theta_{0101}, \theta_{0110}, \theta_{0111}, \\
    & \qquad\;\; \theta_{1000}, \theta_{1001}, \theta_{1010}, \theta_{1011}, \theta_{1100}, \theta_{1101}, \theta_{1110}, \theta_{1111} \big)\\
    &=\; \big(\sigma_e^2, \sigma_1^2, 0, \sigma_2^2, 0, 0, 0, \sigma_3^2,\, 0_8^T \big)\,.
\end{align*}
Further define
{
\allowdisplaybreaks
\begin{align*}
    T \;&:=\; \begin{pmatrix}
        1 & 0\\
        1 & 2N_3
    \end{pmatrix} \otimes 
    \begin{pmatrix}
        1 & 0\\
        1 & 2
    \end{pmatrix} \otimes 
    \begin{pmatrix}
        1 & 0\\
        1 & n
    \end{pmatrix} \otimes 
    \begin{pmatrix}
        1 & 0\\
        1 & N_1
    \end{pmatrix} \\
    &=\; \begin{pmatrix}
        1 & 0\\
        1 & 2N_3
    \end{pmatrix} \otimes 
    \begin{pmatrix}
        1 & 0\\
        1 & 2
    \end{pmatrix} \otimes 
    \begin{pmatrix}
        1 & 0 & 0 & 0\\
        1 & N_1 & 0 & 0 \\
        1 & 0 & n & 0\\
        1 & N_1 & n & nN_1
    \end{pmatrix}  \\
    &=\; \begin{pmatrix}
        1 & 0\\
        1 & 2N_3
    \end{pmatrix} \otimes 
    \begin{pmatrix}
        1 & 0 & 0 & 0 & 0 & 0 & 0 & 0\\
        1 & N_1 & 0 & 0 & 0 & 0 & 0 & 0 \\
        1 & 0 & n & 0 & 0 & 0 & 0 & 0\\
        1 & N_1 & n & nN_1 & 0 & 0 & 0 & 0 \\
        1 & 0 & 0 & 0 & 2 & 0 & 0 & 0\\
        1 & N_1 & 0 & 0 & 2 & 2N_1 & 0 & 0 \\
        1 & 0 & n & 0 & 2 & 0 & 2n & 0\\
        1 & N_1 & n & nN_1 & 2 & 2N_1 & 2n & 2nN_1
    \end{pmatrix}\; =:\; \begin{pmatrix}
        1 & 0\\
        1 & 2N_3
    \end{pmatrix} \otimes A\\
  &=\; \begin{pmatrix}
        A & O_{8,8}\\
        A & 2N_3\, A
    \end{pmatrix}\,.
\end{align*}
}
\begin{comment}
  & =\;\small \begin{pmatrix}
       1 & 0 & 0 & 0 & 0 & 0 & 0 & 0 & 0 & 0 & 0 & 0 & 0 & 0 & 0 & 0 \\
        1 & N_1 & 0 & 0 & 0 & 0 & 0 & 0 & 0 & 0 & 0 & 0 & 0 & 0 & 0 & 0 \\
        1 & 0 & n & 0 & 0 & 0 & 0 & 0 & 0 & 0 & 0 & 0 & 0 & 0 & 0 & 0\\
        1 & N_1 & n & nN_1 & 0 & 0 & 0 & 0 & 0 & 0 & 0 & 0 & 0 & 0 & 0 & 0 \\
        1 & 0 & 0 & 0 & 2 & 0 & 0 & 0 & 0 & 0 & 0 & 0 & 0 & 0 & 0 & 0\\
        1 & N_1 & 0 & 0 & 2 & 2N_1 & 0 & 0 & 0 & 0 & 0 & 0 & 0 & 0 & 0 & 0 \\
        1 & 0 & n & 0 & 2 & 0 & 2n & 0 & 0 & 0 & 0 & 0 & 0 & 0 & 0 & 0\\
        1 & N_1 & n & nN_1 & 2 & 2N_1 & 2n & 2nN_1 0 & 0 & 0 & 0 & 0 & 0 & 0 & 0 \\
        1 & 0 & 0 & 0 & 0 & 0 & 0 & 0 & 2N_3 & 0 & 0 & 0 & 0 & 0 & 0 & 0 \\
        1 & N_1 & 0 & 0 & 0 & 0 & 0 & 0 & 2N_3 & 2N_3 N_1 & 0 & 0 & 0 & 0 & 0 & 0  \\
        1 & 0 & n & 0 & 0 & 0 & 0 & 0 & 2N_3 & 0 & 2N_3 n & 0 & 0 & 0 & 0 & 0 \\
        1 & N_1 & n & nN_1 & 0 & 0 & 0 & 0 & 2N_3 & 2N_3 N_1 & 2N_3 n & 2N_3 n N_1 & 0 & 0 & 0 & 0 \\
        1 & 0 & 0 & 0 & 2 & 0 & 0 & 0 & 2N_3 & 0 & 0 & 0 & 2N_3 2 & 0 & 0 & 0 \\
        1 & N_1 & 0 & 0 & 2 & 2N_1 & 0 & 0 & & 2N_3 & 2N_3 N_1 & 0 & 0 & 2N_3 2 & 2N_3 2N_1 & 0 & 0 \\
        1 & 0 & n & 0 & 2 & 0 & 2n & 0 & 2N_3 & 0 & 2N_3n & 0 & 2N_3 2 & 0 & 2N_3 2n & 0\\
        1 & N_1 & n & nN_1 & 2 & 2N_1 & 2n & 2nN_1 &  2N_3 & 2N_3 N_1 & 2N_3 n & 2N_3 n N_1 & 2N_3 2 & 2N_3 2N_1 & 2N_3 2n & 2N_2 2nN_1
    \end{pmatrix}
    \end{comment}
Next we define
{
\allowdisplaybreaks
\begin{align*}
\lambda \;&:=\; T \theta \;=\; \begin{pmatrix}
        A & O_{8,8}\\
        A & 2N_3\, A
    \end{pmatrix} \, \left( \sigma_e^2, \sigma_1^2, 0, \sigma_2^2, 0, 0, 0, \sigma_3^2,\, 0_8^T \right)^T\\
    &=\; \big(\sigma_e^2,\, \sigma_e^2 + N_1 \sigma_1^2,\, \sigma_e^2,\, \sigma_e^2 + N_1 \sigma_1^2 + nN_1 \sigma_2^2,\, \sigma_e^2,\,  \sigma_e^2 + N_1 \sigma_1^2,\, \sigma_e^2, \,\sigma_e^2 + N_1 \sigma_1^2 + nN_1 \sigma_2^2 + 2nN_1 \sigma_3^2,\\
    & \qquad \sigma_e^2,\, \sigma_e^2 + N_1 \sigma_1^2,\, \sigma_e^2,\, \sigma_e^2 + N_1 \sigma_1^2 + nN_1 \sigma_2^2,\, \sigma_e^2,\,  \sigma_e^2 + N_1 \sigma_1^2,\, \sigma_e^2, \,\sigma_e^2 + N_1 \sigma_1^2 + nN_1 \sigma_2^2 + 2nN_1 \sigma_3^2 \big)^T\\
    &=:\; \big(\lambda_0, \, \lambda_1, \, \lambda_0, \, \lambda_2, \, \lambda_0, \, \lambda_1, \, \lambda_0, \, \lambda_3, \, \lambda_0, \, \lambda_1, \, \lambda_0, \, \lambda_2, \, \lambda_0, \, \lambda_1, \, \lambda_0, \, \lambda_3 \big)
\end{align*}
}
Finally define $\Delta := T^{-1} \lambda^{-1}$, where
\begin{align*}
    T^{-1} \;=\; \frac{1}{2N_3}\,\begin{pmatrix}
        2N_3 & 0\\
        -1 & 1
    \end{pmatrix} \otimes
    \frac{1}{2}\,\begin{pmatrix}
        2 & 0\\
        -1 & 1
    \end{pmatrix} \otimes
    \frac{1}{n}\,\begin{pmatrix}
        n & 0\\
        -1 & 1
    \end{pmatrix} \otimes
    \frac{1}{N_1}\,\begin{pmatrix}
        N_1 & 0\\
        -1 & 1
    \end{pmatrix} \,.
\end{align*}
Some straightforward algebraic calculations yield
{
\allowdisplaybreaks
\begin{align*}
\Delta \;=\; \begin{pmatrix}
    1/\lambda_0 =: \Delta_{0000}\\
    \frac{1}{N_1} \left( 1/\lambda_1 - 1/\lambda_0 \right) =: \Delta_{0001}\\
    0 \\
     \frac{1}{nN_1} \left( 1/\lambda_2 - 1/\lambda_1 \right) =: \Delta_{0011}\\
     0\\
     0\\
     0\\
      \frac{1}{2nN_1} \left( 1/\lambda_3 - 1/\lambda_2 \right) =: \Delta_{0111}\\
      0_8
\end{pmatrix}
\end{align*}
}
\noindent With the above, and using the result from Section 2.2 in \cite{Searle1979}, we obtain the inverse of the covariance matrix $V$ as
{
\allowdisplaybreaks
\begin{align*}
V^{-1} \;&=\; \Delta_{0000} \, I_{2N_3} \otimes I_2 \otimes I_n \otimes I_{N_1} \;+\; \Delta_{0001} \, I_{2N_3} \otimes I_2 \otimes I_n \otimes J_{N_1}\\
&\qquad + \Delta_{0011} \, I_{2N_3} \otimes I_2 \otimes J_n \otimes J_{N_1} \;+\; \Delta_{0111} \, I_{2N_3} \otimes J_2 \otimes J_n \otimes J_{N_1}\\
&=\; \frac{1}{\lambda_0}\,I_{2N_3} \otimes I_2 \otimes I_n \otimes I_{N_1} \;+\; \left(\frac{1}{\lambda_1} - \frac{1}{\lambda_0}\right)\,I_{2N_3} \otimes I_2 \otimes I_n \otimes \bar{J}_{N_1}\\
& \qquad + \left(\frac{1}{\lambda_2} - \frac{1}{\lambda_1}\right)\,I_{2N_3} \otimes I_2 \otimes \bar{J}_n \otimes \bar{J}_{N_1} \;+\; \left(\frac{1}{\lambda_3} - \frac{1}{\lambda_2}\right)\,I_{2N_3} \otimes \bar{J}_2 \otimes \bar{J}_n \otimes \bar{J}_{N_1}\\
&=\; \frac{1}{\lambda_0}\,I_{2N_3} \otimes I_2 \otimes I_n \otimes C_{N_1} \;+\; \frac{1}{\lambda_1}\,I_{2N_3} \otimes I_2 \otimes C_n \otimes \bar{J}_{N_1}\\
   &\qquad + \frac{1}{\lambda_2}\,I_{2N_3} \otimes C_2 \otimes \bar{J}_n \otimes \bar{J}_{N_1} \;+\; \frac{1}{\lambda_3}\,I_{2N_3} \otimes \bar{J}_2 \otimes \bar{J}_n \otimes \bar{J}_{N_1}\,,
\end{align*}
}
which completes the proof of the lemma.\\
\end{proof}

\begin{proof}[Proof of Proposition \ref{Prop_MLE_delta2}]
Recall from the discussions in Section \ref{subsec:subg2} that model (\ref{Model:Subg2}) can be written as 
\begin{align*}
  Y \;=\; X\beta\,+\, Z_1\, U_1 +\, Z_2\, U_2 +\, Z_3\, U_3 +\, Z_0\, \epsilon\,,   
\end{align*}
where 
\begin{align*}
    X^T \;=\; \begin{pmatrix}
        1_{N_3 2n N_1}^T & 1_{N_3 2n N_1}^T \\
        \underbrace{1_{nN_1}^T\; 0_{nN_1}^T\; \dots \;1_{nN_1}^T\; 0_{nN_1}^T}_{N_3\, \textrm{times}} & \underbrace{1_{nN_1}^T\; 0_{nN_1}^T\; \dots \;1_{nN_1}^T\; 0_{nN_1}^T}_{N_3\, \textrm{times}}\\ 
        1_{N_3 2n N_1}^T & 0_{N_3 2n N_1}^T \\
        \underbrace{1_{nN_1}^T\; 0_{nN_1}^T\; \dots \;1_{nN_1}^T\; 0_{nN_1}^T}_{N_3\, \textrm{times}} & 0_{N_3 2n N_1}^T
    \end{pmatrix}\,.
\end{align*}
The maximum likelihood (ML) estimates of the fixed effects components are obtained by equating the following (see, for example, Section 6.2 in \cite{searle2009}):
\begin{align}\label{prop3.2.1_eq1}
   X^T V^{-1} X\beta \;=\; X^T V^{-1} Y\,, 
\end{align}
where the explicit form of $V^{-1}$ is obtained in Lemma \ref{Lemma_V} as
\begin{align*}
    V^{-1} \;&=\;  \frac{1}{\lambda_0}\,I_{2N_3} \otimes I_2 \otimes I_n \otimes C_{N_1} \;+\; \frac{1}{\lambda_1}\,I_{2N_3} \otimes I_2 \otimes C_n \otimes \bar{J}_{N_1}\\
   &\;\;\;\; + \frac{1}{\lambda_2}\,I_{2N_3} \otimes C_2 \otimes \bar{J}_n \otimes \bar{J}_{N_1} \;+\; \frac{1}{\lambda_3}\,I_{2N_3} \otimes \bar{J}_2 \otimes \bar{J}_n \otimes \bar{J}_{N_1}\,,\\
   &=:\; I + II + III + IV\,.
\end{align*}
Thus we can write
\begin{align}\label{prop3.2.1_eq2}
    X^T V^{-1} \;=\; X^T I \,+\, X^T II \,+\, X^T III \,+\, X^T IV\,.
\end{align}
Note that
{
\allowdisplaybreaks
\begin{align*}
I \;&=\; \frac{1}{\lambda_0}\,I_{2N_3 2n} \otimes C_{N_1} \;=\; \frac{1}{\lambda_0}\; diag\left(\underbrace{I_{N_1} - \frac{1}{N_1}J_{N_1}\,,\, \dots,\, I_{N_1} - \frac{1}{N_1}J_{N_1}}_{2N_3 2n\;\; \textrm{times}} \right)\,,
\end{align*}
}
and therefore $X^T I = 0$. Likewise, we have
{
\allowdisplaybreaks
\begin{align*}
II \;&=\; \frac{1}{\lambda_1}\,\left(I_{2N_3} \otimes I_2 \otimes I_n \otimes \bar{J}_{N_1}\;-\; I_{2N_3} \otimes I_2 \otimes \bar{J}_n \otimes \bar{J}_{N_1} \right) \\
&=\; \frac{1}{N_1\,\lambda_1}\, \left( I_{2N_3 2n} \otimes J_{N_1} \;-\; \frac{1}{n}\,I_{2N_3 2} \otimes J_{nN_1} \right)\\
&=\; \frac{1}{N_1\,\lambda_1}\, \left\{ diag\left(\underbrace{ J_{N_1}, \, \dots,\,  J_{N_1}}_{2N_3 2n \; \textrm{times}}\right) \;-\; \frac{1}{n}\; diag\left(\underbrace{ J_{nN_1}, \, \dots,\,  J_{nN_1}}_{2N_3 2 \; \textrm{times}}\right) \right\}\\
&=\; \frac{1}{N_1\,\lambda_1}\; diag\left( \underbrace{ diag\left(\underbrace{ J_{N_1}, \, \dots,\,  J_{N_1}}_{n \; \textrm{times}}\right) - \frac{1}{n}\,J_{nN_1},\, \dots\,,\, diag\left(\underbrace{ J_{N_1}, \, \dots,\,  J_{N_1}}_{n \; \textrm{times}}\right) - \frac{1}{n}\,J_{nN_1} }_{2N_3 2 \; \textrm{times}}\right)\,,
\end{align*}
}
and therefore $X^T II = 0$ as well. Thus equation (\ref{prop3.2.1_eq2}) boils down to
\begin{align}\label{prop3.2.1_eq3}
    X^T V^{-1} \;=\;  X^T III \,+\, X^T IV\,.
\end{align}
Now observe that
{
\allowdisplaybreaks
\begin{align*}
III \;&=\; \frac{1}{2nN_1\,\lambda_2}\, I_{2N_3} \otimes \begin{pmatrix}
    1 & -1 \\
    -1 & 1
\end{pmatrix} \otimes J_{nN_1} \;=\; \frac{1}{2nN_1\,\lambda_2}\, I_{2N_3} \otimes \begin{pmatrix}
    J_{nN_1} & -J_{nN_1} \\
    -J_{nN_1} & J_{nN_1}
\end{pmatrix}\\
&=\; \frac{1}{2nN_1\,\lambda_2}\; diag\left(\underbrace{\begin{pmatrix}
    J_{nN_1} & -J_{nN_1} \\
    -J_{nN_1} & J_{nN_1}
\end{pmatrix}\,,\, \dots\,,\, \begin{pmatrix}
    J_{nN_1} & -J_{nN_1} \\
    -J_{nN_1} & J_{nN_1}
\end{pmatrix}}_{2N_3\; \textrm{times}} \right)\,.
\end{align*}
}
Therefore we have
{
\allowdisplaybreaks
\begin{align*}
X^T III \;&=\; \frac{1}{2nN_1\,\lambda_2}\;\begin{pmatrix}
        0_{2N_3 2n N_1}^T \\
        \underbrace{nN_1\,1_{nN_1}^T\; -nN_1\,1_{nN_1}^T\; \dots \;nN_1\,1_{nN_1}^T\; -nN_1\,1_{nN_1}^T\;}_{2N_3\, \textrm{times}} \\ 
        0_{2N_3 2n N_1}^T \\
        \underbrace{nN_1\,1_{nN_1}^T\; -nN_1\,1_{nN_1}^T\; \dots \;nN_1\,1_{nN_1}^T\; -nN_1\,1_{nN_1}^T\;}_{N_3\, \textrm{times}} & 0_{N_3 2n N_1}^T
    \end{pmatrix}\\
    &=\;\frac{1}{2\,\lambda_2}\;\begin{pmatrix}
        0_{2N_3 2n N_1}^T \\
        \underbrace{1_{nN_1}^T\; -1_{nN_1}^T\; \dots \;\,1_{nN_1}^T\; -1_{nN_1}^T\;}_{2N_3\, \textrm{times}} \\ 
        0_{2N_3 2n N_1}^T \\
        \underbrace{1_{nN_1}^T\; -1_{nN_1}^T\; \dots \;\,1_{nN_1}^T\; -1_{nN_1}^T\;}_{N_3\, \textrm{times}} & 0_{N_3 2n N_1}^T
    \end{pmatrix}\,.
\end{align*}
}
Some straightforward algebraic calculations yield
{
\allowdisplaybreaks
\begin{align*}
X^T III\, X \;&=\; \frac{1}{2\,\lambda_2}\; \begin{pmatrix}
    0 & 0 & 0 & 0\\
    0 & 2N_3 nN_1 & 0 & N_3 nN_1\\
    0 & 0 & 0 & 0\\
    0 & N_3 nN_1 & 0 & N_3 nN_1
\end{pmatrix} \;=\; \frac{N_3\,nN_1}{2\,\lambda_2}\; \begin{pmatrix}
    0 & 0 & 0 & 0\\
    0 & 2 & 0 & 1\\
    0 & 0 & 0 & 0\\
    0 & 1 & 0 & 1
\end{pmatrix}\\
\textrm{and} \qquad X^T III\; Y \;&=\; \frac{N_3\,nN_1}{2\,\lambda_2}\; \begin{pmatrix}
    0\\
   2\left( \bar{Y}_{\cdot 1 \cdot \cdot} - \bar{Y}_{\cdot 2 \cdot \cdot}\right)\\
   0\\
   \bar{Y}_{\cdot 1 \cdot \cdot} - \bar{Y}_{\cdot 2 \cdot \cdot}
\end{pmatrix}\,.
\end{align*}
}
Finally, note that
{
\allowdisplaybreaks
\begin{align*}
IV \;&=\; \frac{1}{2nN_1\,\lambda_3}\; I_{2N_3} \otimes J_{2nN_1} \;=\;  \frac{1}{2nN_1\,\lambda_3}\; diag\left( \underbrace{J_{2nN_1}\,, \,\dots\,,\, J_{2nN_1}}_{2N_3\; \textrm{times}} \right)\,,
\end{align*}
}
and therefore
\begin{align*}
    X^T IV \;=\; \frac{1}{2nN_1\,\lambda_3}\; \begin{pmatrix}
        2nN_1\, 1_{2N_3 2n N_1}^T\\
        nN_1\, 1_{2N_3 2n N_1}^T \\
        2nN_1\, 1_{N_3 2n N_1}^T & 0_{N_3 2n N_1}^T\\
        nN_1\, 1_{N_3 2n N_1}^T & 0_{N_3 2n N_1}^T
    \end{pmatrix} \;=\; \frac{1}{\lambda_3}\;\begin{pmatrix}
        1_{2N_3 2n N_1}^T\\
        \frac{1}{2}\, 1_{2N_3 2n N_1}^T \\
        1_{N_3 2n N_1}^T & 0_{N_3 2n N_1}^T\\
        \frac{1}{2}\, 1_{N_3 2n N_1}^T & 0_{N_3 2n N_1}^T
    \end{pmatrix}\,.
\end{align*}
Once again, some straightforward calculations yield
{
\allowdisplaybreaks
\begin{align*}
X^T IV X \;&=\; \frac{1}{\lambda_3}\; \begin{pmatrix}
    2N_3 2n N_1 & 2N_3 n N_1 & N_3 2n N_1 & N_3 n N_1\\
    2N_3 n N_1 & N_3 n N_1 & N_3 n N_1 & \frac{1}{2}\,N_3 n N_1 \\
    N_3 2n N_1 & N_3 n N_1 & N_3 2n N_1 & N_3 n N_1\\
    N_3 n N_1 & \frac{1}{2}\,N_3 n N_1 & N_3 n N_1 & \frac{1}{2}\,N_3 n N_1
\end{pmatrix} \;=\; \frac{N_3 n N_1}{2\lambda_3}\;\begin{pmatrix}
    8 & 4 & 4 & 2\\
    4 & 2 & 2 & 1\\
    4 & 2 & 4 & 2\\
    2 & 1 & 2 & 1
\end{pmatrix}\,,\\
\textrm{and} \qquad X^T IV\, Y \;&=\; \frac{1}{\lambda_3}\; \begin{pmatrix}
    2N_3 2n N_1\, \bar{Y}_{\cdot \cdot \cdot \cdot}\\
    N_3 2n N_1\, \bar{Y}_{\cdot \cdot \cdot \cdot}\\
    N_3 2n N_1\, \bar{Y}_{\cdot \cdot \cdot \cdot}\\
    N_3 n N_1\, \bar{Y}_{\cdot \cdot \cdot \cdot}
\end{pmatrix} \;=\; \frac{N_3 2n N_1}{2\lambda_3}\; \begin{pmatrix}
    4 \bar{Y}_{\cdot \cdot \cdot \cdot}\\
    2 \bar{Y}_{\cdot \cdot \cdot \cdot}\\
    2 \bar{Y}_{\cdot \cdot \cdot \cdot}\\
    \bar{Y}_{\cdot \cdot \cdot \cdot}
\end{pmatrix}\,.
\end{align*}
}
Combining all the above, the left hand side of (\ref{prop3.2.1_eq1}) is given by
\begin{align}\label{prop3.2.1_eq4}
\begin{split}
 X^T V^{-1} X\beta \;&=\; X^T III\,X\beta\,  +\, X^T IV\, X\beta   \\
 &=\; \left\{\frac{N_3\,nN_1}{2\,\lambda_2}\; \begin{pmatrix}
    0 & 0 & 0 & 0\\
    0 & 2 & 0 & 1\\
    0 & 0 & 0 & 0\\
    0 & 1 & 0 & 1
\end{pmatrix} \;+\; \frac{N_3 n N_1}{2\lambda_3}\;\begin{pmatrix}
    8 & 4 & 4 & 2\\
    4 & 2 & 2 & 1\\
    4 & 2 & 4 & 2\\
    2 & 1 & 2 & 1
\end{pmatrix} \right\} \begin{pmatrix}
    \beta_0 \\ \tau \\ \xi \\ \delta
\end{pmatrix}\\
&=\; \frac{N_3\,nN_1}{2\,\lambda_2}\;\begin{pmatrix}
    0 \\ 2\tau + \delta \\ 0 \\ \tau + \delta
\end{pmatrix} \;+\; \frac{N_3\,n N_1}{2\lambda_3}\; \begin{pmatrix}
    2\,(4\beta_0 + 2\tau + 2\xi + \delta)\\
    4\beta_0 + 2\tau + 2\xi + \delta\\
    2\,(2\beta_0 + \tau + 2\xi + \delta)\\
    2\beta_0 + \tau + 2\xi + \delta
\end{pmatrix}\,.
\end{split}
\end{align}
And the right hand side of (\ref{prop3.2.1_eq1}) is given by
\begin{align}\label{prop3.2.1_eq5}
\begin{split}
X^T V^{-1} Y \;&=\; X^T III\,Y\,  +\, X^T IV\, Y   \\
 &=\; \frac{N_3\,nN_1}{2\,\lambda_2}\;
  \begin{pmatrix}
    0\\
   2\left( \bar{Y}_{\cdot 1 \cdot \cdot} - \bar{Y}_{\cdot 2 \cdot \cdot}\right)\\
   0\\
   \bar{Y}_{\cdot 1 \cdot \cdot} - \bar{Y}_{\cdot 2 \cdot \cdot}
\end{pmatrix} \;+\; \frac{N_3\,n N_1}{2\lambda_3}\; \begin{pmatrix}
    8 \bar{Y}_{\cdot \cdot \cdot \cdot}\\
    4 \bar{Y}_{\cdot \cdot \cdot \cdot}\\
    4 \bar{Y}_{\cdot \cdot \cdot \cdot}\\
    2 \bar{Y}_{\cdot \cdot \cdot \cdot}
\end{pmatrix}\,.  
\end{split}
\end{align}
From (\ref{prop3.2.1_eq1}), (\ref{prop3.2.1_eq4}) and (\ref{prop3.2.1_eq5}), we get
{
\allowdisplaybreaks
\begin{align}
& 2\tau + \delta \;=\; 2\left( \bar{Y}_{\cdot 1 \cdot \cdot} - \bar{Y}_{\cdot 2 \cdot \cdot}\right) \;=\; \bar{Y}_{\cdot 1 \cdot \cdot}^T + \bar{Y}_{\cdot 1 \cdot \cdot}^C - \bar{Y}_{\cdot 2 \cdot \cdot}^T - \bar{Y}_{\cdot 2 \cdot \cdot}^C\,, \label{prop3.2.1_eq6a}\\
& \tau + \delta \;=\; \bar{Y}_{\cdot 1 \cdot \cdot}^T - \bar{Y}_{\cdot 2 \cdot \cdot}^T\,, \label{prop3.2.1_eq6b}\\
& 4\beta_0 + 2\tau + 2\xi + \delta \;=\; 4 \bar{Y}_{\cdot \cdot \cdot \cdot}\,, \label{prop3.2.1_eq6c}\\
\textrm{and} \qquad & 2\beta_0 + \tau + 2\xi + \delta \;=\; 2 \bar{Y}_{\cdot \cdot \cdot \cdot}^T\,. \label{prop3.2.1_eq6d}
\end{align}
}
Subtracting (\ref{prop3.2.1_eq6a}) from $2\times$(\ref{prop3.2.1_eq6b}), we get
\begin{align}\label{prop3.2.1_eq7}
\begin{split}
  \delta \;&=\;  2\bar{Y}_{\cdot 1 \cdot \cdot}^T - 2\bar{Y}_{\cdot 2 \cdot \cdot}^T - \bar{Y}_{\cdot 1 \cdot \cdot}^T - \bar{Y}_{\cdot 1 \cdot \cdot}^C + \bar{Y}_{\cdot 2 \cdot \cdot}^T + \bar{Y}_{\cdot 2 \cdot \cdot}^C \\
  &=\; (\bar{Y}_{\cdot 1 \cdot \cdot}^T - \bar{Y}_{\cdot 1 \cdot \cdot}^C) \,-\, (\bar{Y}_{\cdot 2 \cdot \cdot}^T - \bar{Y}_{\cdot 2 \cdot \cdot}^C)\,,
  \end{split}
\end{align}
which completes the proof of the proposition. We further note the following, which turn out to be useful in the proofs of the subsequent theorems. Subtracting (\ref{prop3.2.1_eq6d}) from (\ref{prop3.2.1_eq6c}), we get
\begin{align}\label{prop3.2.1_eq8}
  2\beta_0 + \tau \;&=\; 4\bar{Y}_{\cdot \cdot \cdot \cdot}  - 2\bar{Y}_{\cdot \cdot \cdot \cdot}^T \;=\; 2(\bar{Y}_{\cdot \cdot \cdot \cdot}^T + \bar{Y}_{\cdot \cdot \cdot \cdot}^C) - \bar{Y}_{\cdot \cdot \cdot \cdot}^T \;=\; 2\bar{Y}_{\cdot \cdot \cdot \cdot}^C\,.
\end{align}
And finally, subtracting (\ref{prop3.2.1_eq6b}) from (\ref{prop3.2.1_eq6a}), we get
\begin{align}\label{prop3.2.1_eq9}
\tau \;=\; \bar{Y}_{\cdot 1 \cdot \cdot}^C - \bar{Y}_{\cdot 2 \cdot \cdot}^C\,.
\end{align}

\end{proof}

\begin{proof}[Proof of Proposition \ref{Prop_var_delta2}]
Recall that, from Proposition \ref{Prop_MLE_delta2}, we have
\begin{align*}
    \begin{split}
        \hat{\delta} \;&=\; \left( \frac{1}{N_3\,n\,N_1} \dis \sum_{i=1}^{N_3} \sum_{j=1}^{n} \sum_{k=1}^{N_1} Y_{i1jk}\,\mathbbm{1}(X_i=1) \,-\, \frac{1}{N_3\,n\,N_1} \dis \sum_{i=1}^{N_3} \sum_{j=1}^{n} \sum_{k=1}^{N_1} Y_{i1jk}\,\mathbbm{1}(X_i=0) \right)\\
        & \qquad - \; \left( \frac{1}{N_3\,n\,N_1} \dis \sum_{i=1}^{N_3} \sum_{j=1}^{n} \sum_{k=1}^{N_1} Y_{i2jk}\,\mathbbm{1}(X_i=1) \,-\, \frac{1}{N_3\,n\,N_1} \dis \sum_{i=1}^{N_3} \sum_{j=1}^{n} \sum_{k=1}^{N_1} Y_{i2jk}\,\mathbbm{1}(X_i=0) \right)\\
        &=\; \left(A_2 - B_2\right) \;-\; \left(C_2 - D_2\right)\,.
    \end{split}
\end{align*}
On similar lines of (\ref{var1}) and (\ref{var_MLE_1}), 
\begin{align}\label{var_1and2}
    \Var(A_2-B_2) \;=\; \Var(C_2-D_2) \; =\;\frac{2\,\sigma^2\, f^*}{N_3 \,n\, N_1}\,,
\end{align}
where\, $f^* = 1 \,+\, \rho_1\,(N_1-1) \,+\, \rho_{(2)}\,N_1\, (n -1)$. Now observe that
\begin{align}
    \begin{split}
        \Cov(A_2,C_2) \;&=\; \frac{1}{N_3^2\,n^2\, N_1^2}\, \sum_{i=1}^{N_3} \sum_{j,j'=1}^{n} \sum_{k,k'=1}^{N_1} \Cov(Y_{i1jk},Y_{i2j'k'})\\
        &=\; \frac{\sigma^2\,\rho_2}{N_3}\,,
    \end{split}
\end{align}
and\, $\Cov(B_2,D_2) = \Cov(A_2,C_2)$. %It is to be noted that the first summand in the first equality of (\ref{cov_terms}) does not appear here, as we cannot have $j=j'$ when $g\neq g'$. 
Thus, combining all the above, we have 
\begin{align}\label{var_delta_hat2}
    \begin{split}
        \Var(\hat{\delta}) \;&=\; \Var(A_2-B_2) \,+\, \Var(C_2-D_2) \,-\,\, 2\,\Cov(A_2-B_2,\, C_2-D_2)\\
        &=\; \frac{2\,\sigma^2\, f^*}{N_3 \,n\, N_1}\,+\, \frac{2\,\sigma^2\, f^*}{N_3 \,n\, N_1} \,\,-\,\, 2\,\Cov(A_2,C_2) \,-\, 2\,\Cov(B_2,D_2)\\
        &=\; \frac{4\,\sigma^2}{N_3 \,n\, N_1}\,\left( 1 + \rho_1\,(N_1 -1) + \rho_{(2)}\,N_1 (n -1) \right) \, - \, \frac{4\,\sigma^2\,\rho_2}{N_3}\\
        &=\; \frac{4}{N_3 \,n\, N_1}\,\left( \sigma^2\,(1-\rho_1)\,+\, N_1\,(\sigma^2\,\rho_1 - \sigma^2\,\rho_{(2)}) \,+\, n\,N_1\,(\sigma^2\,\rho_{(2)} - \sigma^2\,\rho_2) \right)\\
        &=\; \frac{4}{N_3 \,n\, N_1}\,\left( \sigma_e^2 \,+\, N_1\,\sigma_1^2 \,+\, n\,N_1\,\sigma_2^2 \right)\,.
    \end{split}
\end{align}

\end{proof}

\begin{proof}[Proof of Theorem \ref{Thm_MLE_var_comps2}]
We first establish that the model (\ref{Model:Subg2})
\begin{align*}
    Y_{igjk} \;&=\; \beta_0 \,+\, \tau\,L_g \,+\, \xi\,X_{ijk} \,+\, \delta\, L_g\,X_{ijk} \,+\, u_i \,+\, u_{g(i)} \,+\, u_{j(ig)} \,+\, \epsilon_{igjk}\,,
    %\textrm{that is,} \qquad Y_{igjk} \;&=\; \beta_0 \,+\, \tau\,L_g \,+\, \xi\,X_{i} \,+\, \delta\, L_g\,X_{i} \,+\, u_i \,+\, u_{g(i)} \,+\, u_{j(ig)} \,+\, \epsilon_{igjk}\,,
\end{align*}
admits explicit, closed form maximum likelihood (ML) estimates of the variance components. In view of Theorem 3 in \cite{szatrowski1980}, let $r$ be the number of random factors in the above mixed model (excluding the error term), and $s$ be the number of different symbols used as subscripts in the model equation. Thus, here we have $r=3$ and $s=4$. 

Let $t_p^T$ be a row vector of order $s$ that is null except for unity at its $q^{th}$ element, when the $p^{th}$ random factor of the model has the $q^{th}$ subscript; and define $t_0 = 1_s$. Let $T$ be a $(r+1)\times s$ matrix defined as $T= (t_0, \dots , t_r)^T$. Let $w_1, \dots, w_s$ be the columns of $T$ and $w_0 =1_{r+1}$. Define $\mathcal{W}$ as the smallest set containing $w_0, w_1, \dots, w_s$ closed under Hadamard multiplication of vectors. Observe that here we have 
\begin{align*}
    T \;=\; \begin{pmatrix} 
         1 & 1 & 1 & 1\\
         1 & 0 & 0 & 0 \\
         1 & 1 & 0 & 0 \\
         1 & 1 & 1 & 0
    \end{pmatrix}
   \qquad \textrm{and} \qquad
   \mathcal{W} \;=\; \begin{pmatrix} 
         1 & 1 & 1 & 1\\
         1 & 0 & 0 & 0 \\
         1 & 1 & 0 & 0 \\
         1 & 1 & 1 & 0
    \end{pmatrix}\,.
\end{align*}
Let $n(\mathcal{W})$ denote the number of distinct columns in $\mathcal{W}$. Clearly $n(\mathcal{W}) = 4 = r+1$. Therefore, by Theorem 3 in \cite{szatrowski1980}, the model admits explicit ML estimates of the variance components. 

Now, we turn to proving the second part of the theorem. For the estimation of the variance components, we need to satisfy
\begin{align}\label{eq3.2.1.1}
    \textrm{tr}\left( V^{-1} Z_i Z_i^T \right) \;=\; \Vert  Z_i^T V^{-1} (Y-X\beta) \Vert^2\,, 
\end{align}
for $i=0, 1, 2, 3$. Recall that we have from Lemma \ref{Lemma_V} and the discussions in Section \ref{subsec:subg2} of the main paper, that
\begin{align*}
    V^{-1} \;&=\;  \frac{1}{\lambda_0}\,I_{2N_3} \otimes I_2 \otimes I_n \otimes C_{N_1} \;+\; \frac{1}{\lambda_1}\,I_{2N_3} \otimes I_2 \otimes C_n \otimes \bar{J}_{N_1}\\
   &\;\;\;\; + \frac{1}{\lambda_2}\,I_{2N_3} \otimes C_2 \otimes \bar{J}_n \otimes \bar{J}_{N_1} \;+\; \frac{1}{\lambda_3}\,I_{2N_3} \otimes \bar{J}_2 \otimes \bar{J}_n \otimes \bar{J}_{N_1}\,,
\end{align*}
and $Z_0 = I_{2N_3} \otimes I_2 \otimes I_n \otimes I_{N_1},\, Z_1 = I_{2N_3} \otimes 1_2 \otimes 1_n \otimes 1_{N_1},\, Z_2 = I_{2N_3} \otimes I_2 \otimes 1_n \otimes 1_{N_1}$ and $Z_3 = I_{2N_3} \otimes I_2 \otimes I_n \otimes 1_{N_1}$. Hence we have 
\begin{align*}
    Z_0 Z_0^T \;&=\; I_{2N_3} \otimes I_2 \otimes I_n \otimes I_{N_1}\,, \qquad
    Z_1 Z_1^T \;=\; I_{2N_3} \otimes J_2 \otimes J_n \otimes J_{N_1}\,, \\
    Z_2 Z_2^T \;&=\; I_{2N_3} \otimes I_2 \otimes J_n \otimes J_{N_1}\,, \qquad
    Z_3 Z_3^T \;=\; I_{2N_3} \otimes I_2 \otimes I_n \otimes J_{N_1}\,.
\end{align*}
For the left hand side of (\ref{eq3.2.1.1}), note that
{
\allowdisplaybreaks
\begin{align}\label{eq3.2.1.0.5}
\begin{split}
\textrm{tr}(V^{-1} Z_0 Z_0^T) \;&=\; \frac{1}{\lambda_0}\, 2N_3\, 2n\, (N_1 -1) \;+\; \frac{1}{\lambda_1}\, 2N_3\, 2\,(n-1) \;+\; \frac{1}{\lambda_2}\, 2N_3 \;+\; \frac{1}{\lambda_3}\, 2N_3\,,\\
\textrm{tr}(V^{-1} Z_1 Z_1^T) \;&=\; \frac{1}{\lambda_3}\,2N_3\,2n N_1\,,\\
\textrm{tr}(V^{-1} Z_2 Z_2^T) \;&=\;  2N_3\,n N_1\, \left(\frac{1}{\lambda_2} + \frac{1}{\lambda_3}\right)\,,\\
\textrm{tr}(V^{-1} Z_3 Z_3^T) \;&=\; \frac{1}{\lambda_1}\, 2N_3\,2(n-1)\,N_1 \;+\; \frac{1}{\lambda_2}\, 2N_3\,N_1 \;+\; \frac{1}{\lambda_3}\, 2N_3 \,N_1\,.
\end{split}
\end{align}
}
And for the right hand side of (\ref{eq3.2.1.1}), note that
{
\allowdisplaybreaks
\begin{align*}
Z_0 V^{-1} \;&=\; V^{-1}\,,\\
Z_1 V^{-1} \;&=\; \frac{1}{\lambda_3}\,I_{2N_3} \otimes 1_2^T \otimes 1_n^T \otimes 1_{N_1}^T \,,\\
Z_2 V^{-1} \;&=\; \frac{1}{\lambda_2}\,I_{2N_3} \otimes C_2 \otimes 1_n^T \otimes 1_{N_1}^T \;+\; \frac{1}{\lambda_3}\,I_{2N_3} \otimes \bar{J}_2 \otimes 1_n^T \otimes 1_{N_1}^T \,,\\
Z_3 V^{-1} \;&=\; \frac{1}{\lambda_1}\,I_{2N_3} \otimes I_2 \otimes C_n \otimes 1_{N_1}^T \;+\; \frac{1}{\lambda_2}\,I_{2N_3} \otimes C_2 \otimes \bar{J}_n \otimes 1_{N_1}^T \;+\; \frac{1}{\lambda_3}\,I_{2N_3} \otimes \bar{J}_2 \otimes \bar{J}_n \otimes 1_{N_1}^T \,.
\end{align*}
}
With all the above, we have
\begin{comment}
    \; \frac{1}{\lambda_3}\, \Bigg.\underbrace{\begin{pmatrix}
   1_{2nN_1}^T & &\\
   & \ddots & \\
   & & 1_{2nN_1}^T
\end{pmatrix}}_{2N_3}
\Bigg\}{2N_3} \; (Y-X\beta)
\end{comment}
{
\allowdisplaybreaks
\begin{align*}
Z_1 V^{-1} (Y-X\beta) \;&=\;  \frac{1}{\lambda_3}\,I_{2N_3} \otimes 1_2^T \otimes 1_n^T \otimes 1_{N_1}^T \, (Y-X\beta)\;=\; \frac{1}{\lambda_3}\; diag\Big(\underbrace{1_{2nN_1}^T, \;\dots\;, 1_{2nN_1}^T }_{2N_3\, \textrm{times}}\Big)\, (Y-X\beta)\\
&=\; Z_1^T V^{-1} Y \;-\; Z_1^T V^{-1} X\beta \\
&=\; \frac{2nN_1}{\lambda_3}\begin{pmatrix}
    \bar{Y}_{1\cdot\cdot\cdot} \\
    \vdots \\
    \bar{Y}_{2N_3\cdot\cdot\cdot}
\end{pmatrix}  \;-\; \frac{nN_1}{\lambda_3} \begin{pmatrix}
    (2\beta_0 + \tau + 2\xi +\delta)\,1_{N_3} \\
    (2\beta_0 + \tau)\,1_{N_3}
\end{pmatrix}\\
&=\; \frac{2nN_1}{\lambda_3}\begin{pmatrix}
    \bar{Y}_{1\cdot\cdot\cdot} \\
    \vdots \\
    \bar{Y}_{2N_3\cdot\cdot\cdot}
\end{pmatrix}  \;-\; \frac{2nN_1}{\lambda_3} \begin{pmatrix}
    \bar{Y}_{\cdot\cdot\cdot\cdot}^T\,1_{N_3} \\
    \bar{Y}_{\cdot\cdot\cdot\cdot}^C\,1_{N_3}
\end{pmatrix} \;\;=\;\; \frac{2nN_1}{\lambda_3}\begin{pmatrix}
    \bar{Y}_{1\cdot\cdot\cdot} - \bar{Y}_{\cdot\cdot\cdot\cdot}^T \\
    \vdots \\
    \bar{Y}_{N_3\cdot\cdot\cdot} - \bar{Y}_{\cdot\cdot\cdot\cdot}^T \\
    \bar{Y}_{N_3+1\cdot\cdot\cdot} - \bar{Y}_{\cdot\cdot\cdot\cdot}^C \\
    \vdots \\
    \bar{Y}_{2N_3\cdot\cdot\cdot} - \bar{Y}_{\cdot\cdot\cdot\cdot}^C 
\end{pmatrix}\,.
\end{align*}
}
Therefore, to satisfy (\ref{eq3.2.1.1}), we need to have
{
\allowdisplaybreaks
\begin{align}\label{eq3.2.1.0.6}
\begin{split}
\textrm{tr}\left( V^{-1} Z_1 Z_1^T \right) \;&=\; \Vert  Z_1^T V^{-1} (Y-X\beta) \Vert^2\,, \\
\textrm{i.e.,} \qquad \frac{1}{\lambda_3} 2N_3 2n N_1 \;&=\; \frac{(2nN_1)^2}{\lambda_3^2} \left[ \dis \sum_{i=1}^{N_3} (\bar{Y}_{i\cdot\cdot\cdot} - \bar{Y}_{\cdot\cdot\cdot\cdot}^T)^2 \,+\, \dis \sum_{i=N_3+1}^{2N_3} (\bar{Y}_{i\cdot\cdot\cdot} - \bar{Y}_{\cdot\cdot\cdot\cdot}^C)^2  \right] \;=:\; \frac{2nN_1}{\lambda_3^2}\, SS_3\,,
\end{split}
\end{align}
}
where 
\begin{align}\label{eq3.2.1.2}
    SS_3 &:=\; 2n\,N_1\, \left[ \dis \sum_{i=1}^{N_3} (\bar{Y}_{i\cdot\cdot\cdot} - \bar{Y}_{\cdot\cdot\cdot\cdot}^T)^2 \,+\, \dis \sum_{i=N_3+1}^{2N_3} (\bar{Y}_{i\cdot\cdot\cdot} - \bar{Y}_{\cdot\cdot\cdot\cdot}^C)^2  \right]\,.
\end{align}
This yields
\begin{align}\label{eq3.2.1.3}
    \lambda_3 \;=\; \frac{1}{2N_3}\, SS_3\,.
\end{align}
Next, observe that
{
\allowdisplaybreaks
\begin{align*}
Z_2 V^{-1} (Y-X\beta) \;&=\; \frac{1}{\lambda_2}\,I_{2N_3} \otimes C_2 \otimes 1_n^T \otimes 1_{N_1}^T \; (Y-X\beta) \;+\; \frac{1}{\lambda_3}\,I_{2N_3} \otimes \bar{J}_2 \otimes 1_n^T \otimes 1_{N_1}^T \; (Y-X\beta)\\
&=:\; C_1 -D_1 \,,
\end{align*}
}
where 
{
\allowdisplaybreaks
\begin{align*}
C_1 \;&=\; \frac{1}{2\lambda_2} \,I_{2N_3} \otimes \begin{pmatrix}
   1_{nN_1}^T & -1_{nN_1}^T \\
   -1_{nN_1}^T & 1_{nN_1}^T
\end{pmatrix}\, Y \;+\; \frac{1}{2\lambda_3} \,I_{2N_3} \otimes \begin{pmatrix}
   1_{nN_1}^T & 1_{nN_1}^T \\
   1_{nN_1}^T & 1_{nN_1}^T
\end{pmatrix}\, Y\\
&=\; \frac{1}{2\lambda_2} \; \,diag\left(\underbrace{\begin{pmatrix}
   1_{nN_1}^T & -1_{nN_1}^T \\
   -1_{nN_1}^T & 1_{nN_1}^T
\end{pmatrix}, \, \dots,\, \begin{pmatrix}
   1_{nN_1}^T & -1_{nN_1}^T \\
   -1_{nN_1}^T & 1_{nN_1}^T
\end{pmatrix}}_{2N_3 \, \textrm{times}}\right)\, Y\\
& \qquad + \; \frac{1}{2\lambda_3} \; \,diag\left(\underbrace{\begin{pmatrix}
   1_{2nN_1}^T  \\
   1_{2nN_1}^T
\end{pmatrix}, \, \dots,\, \begin{pmatrix}
   1_{2nN_1}^T  \\
   1_{2nN_1}^T
\end{pmatrix}}_{2N_3 \, \textrm{times}}\right)\,Y\\
&=\; \frac{1}{2\lambda_2}\,nN_1\, \begin{pmatrix}
    \bar{Y}_{11\cdot\cdot} - \bar{Y}_{12\cdot\cdot}\\
    - (\bar{Y}_{11\cdot\cdot} - \bar{Y}_{12\cdot\cdot})\\
    \vdots \\
    \bar{Y}_{2N_3\,1\cdot\cdot} - \bar{Y}_{2N_3\,2\cdot\cdot}\\
    - (\bar{Y}_{2N_3\,1\cdot\cdot} - \bar{Y}_{2N_3\,2\cdot\cdot})
\end{pmatrix} \;+\; \frac{1}{2\lambda_3}\, 2nN_1\, \begin{pmatrix}
    \bar{Y}_{1\cdot\cdot\cdot}\,1_2\\
    \vdots\\
    \bar{Y}_{2N_3\cdot\cdot\cdot}\,1_2
\end{pmatrix}\,,
\end{align*}
}
and 
{
\allowdisplaybreaks
\begin{align*}
D_1 \;&=\; \frac{nN_1}{2\lambda_2}\, \begin{pmatrix}
    (\tau + \delta)\, 1_{N_3} \otimes \begin{pmatrix}
        1\\
        -1
    \end{pmatrix}\\
    \vdots \\
     \tau\, 1_{N_3} \otimes \begin{pmatrix}
        1\\
        -1
    \end{pmatrix} 
\end{pmatrix} \;+\; \frac{nN_1}{2\lambda_3}\, \begin{pmatrix}
    (2\beta_0 + \tau + 2\xi + \delta) \,1_{N_3} \otimes 1_2 \\
    \vdots \\
    (2\beta_0 + \tau) \,1_{N_3} \otimes 1_2
\end{pmatrix}\\
&=\; \frac{nN_1}{2\lambda_2} \begin{pmatrix}
    (\bar{Y}_{\cdot 1 \cdot \cdot}^T - \bar{Y}_{\cdot 2 \cdot \cdot}^T)\, 1_{N_3} \otimes \begin{pmatrix}
        1\\
        -1
    \end{pmatrix}\\
   (\bar{Y}_{\cdot 1 \cdot \cdot}^C - \bar{Y}_{\cdot 2 \cdot \cdot}^C)\, 1_{N_3} \otimes \begin{pmatrix}
        1\\
        -1
    \end{pmatrix}
\end{pmatrix} \;+\; \frac{2nN_1}{2\lambda_3}\, \begin{pmatrix}
    \bar{Y}_{\cdot \cdot\cdot\cdot}^T\; 1_{2N_3}\\
    \bar{Y}_{\cdot \cdot\cdot\cdot}^C\; 1_{2N_3}
\end{pmatrix}\,.
\end{align*}
}
Thus, to satisfy (\ref{eq3.2.1.1}), we need to have
{
\allowdisplaybreaks
\begin{align*}
\textrm{tr}\left( V^{-1} Z_2 Z_2^T \right) \;&=\; \Vert  Z_2^T V^{-1} (Y-X\beta) \Vert^2\,, \\
\textrm{i.e.,}\qquad \frac{1}{\lambda_2}\, 2N_3 n N_1 \;+\; \frac{1}{\lambda_3}\, 2N_3 n N_1 \;&=\; \frac{(nN_1)^2}{(2\lambda_2)^2}\,2\Big[ \dis \sum_{i=1}^{N_3} \left\{ (\bar{Y}_{i1\cdot\cdot} - \bar{Y}_{i2\cdot\cdot}) - (\bar{Y}_{\cdot 1 \cdot\cdot}^T - \bar{Y}_{\cdot 2 \cdot\cdot})^T \right\}^2 \\
& \qquad \qquad \qquad \;+\; \sum_{i=N_3+1}^{2N_3} \left\{ (\bar{Y}_{i1\cdot\cdot} - \bar{Y}_{i2\cdot\cdot}) - (\bar{Y}_{\cdot 1 \cdot\cdot}^C - \bar{Y}_{\cdot 2 \cdot\cdot}^C) \right\}^2 \Big]\\
& \qquad +\; \frac{(2nN_1)^2}{(2\lambda_3)^2}\; 2\left[ \dis \sum_{i=1}^{N_3} (\bar{Y}_{i\cdot\cdot\cdot} - \bar{Y}_{\cdot\cdot\cdot\cdot}^T)^2 \,+\, \dis \sum_{i=N_3+1}^{2N_3} (\bar{Y}_{i\cdot\cdot\cdot} - \bar{Y}_{\cdot\cdot\cdot\cdot}^C)^2  \right]\,,\\
\textrm{i.e.,}\qquad \frac{2N_3}{\lambda_2} \,+\, \frac{2N_3}{\lambda_3}  \;&=:\; \frac{1}{2\lambda_2^2}\,SS_2 \;+\; \frac{1}{\lambda_3^2}\,SS_3\,,
\end{align*}
}
where 
\begin{align}\label{eq3.2.1.4}
   SS_2 \;&:=\; nN_1\,  \Big[ \dis \sum_{i=1}^{N_3} \left\{ (\bar{Y}_{i1\cdot\cdot} - \bar{Y}_{i2\cdot\cdot}) - (\bar{Y}_{\cdot 1 \cdot\cdot}^T - \bar{Y}_{\cdot 2 \cdot\cdot}^T) \right\}^2 \;+\; \sum_{i=N_3+1}^{2N_3} \left\{ (\bar{Y}_{i1\cdot\cdot} - \bar{Y}_{i2\cdot\cdot}) - (\bar{Y}_{\cdot 1 \cdot\cdot}^C - \bar{Y}_{\cdot 2 \cdot\cdot}^C) \right\}^2 \Big]\,.
\end{align}
This, together with (\ref{eq3.2.1.3}), yields
\begin{align}\label{eq3.2.1.5}
    \frac{2N_3}{\lambda_2} \;=\; \frac{1}{2\lambda_2^2}\, SS_2\,, \qquad i.e., \qquad \lambda_2 \;=\; \frac{1}{2N_3\,2}\, SS_2\,.
\end{align}

\noindent Next observe that 
\begin{align*}
    Z_3^T V^{-1} (Y-X\beta) \;=\; Z_3^T V^{-1} Y \,-\, Z_3^T V^{-1} X\beta \;=:\; C_3 - D_3\,,
\end{align*}
where
{
\allowdisplaybreaks
\begin{align*}
C_3 \;&:=\; \left( \frac{1}{\lambda_1}\,I_{2N_3} \otimes I_2 \otimes C_n \otimes 1_{N_1}^T + \frac{1}{\lambda_2}\,I_{2N_3} \otimes C_2 \otimes \bar{J}_n \otimes 1_{N_1}^T + \frac{1}{\lambda_3}\,I_{2N_3} \otimes \bar{J}_2 \otimes \bar{J}_n \otimes 1_{N_1}^T \right)\,Y\\
&=\; \frac{1}{\lambda_1} \left( I_{2N_3} \otimes I_2 \otimes I_n \otimes 1_{N_1}^T \;-\; \frac{1}{n}\,I_{2N_3} \otimes I_2 \otimes J_n \otimes 1_{N_1}^T \right)\,Y\\
& \qquad +\; \frac{1}{2n\lambda_2}\, I_{2N_3} \otimes \begin{pmatrix}
   1 & -1\\
   -1 & 1
\end{pmatrix} \otimes \left.\begin{pmatrix}
    1_{nN_1}^T\\
    \vdots\\
    1_{nN_1}^T
\end{pmatrix} \right\}{\small{n\,\textrm{times}}}\;\;\;\;\ Y \\
& \qquad +\; \frac{1}{2n\lambda_3}\, I_{2N_3} \otimes \left.\begin{pmatrix}
    1_{2nN_1}^T\\
    \vdots\\
    1_{2nN_1}^T
\end{pmatrix} \right\}{\small{2n\,\textrm{times}}}\;\;\;\; Y \\
&=\; \frac{1}{\lambda_1} \left( I_{2N_3 2n} \otimes 1_{N_1}^T \;-\; \frac{1}{n}\,I_{2N_3 2} \otimes  \left.\begin{pmatrix}
    1_{nN_1}^T\\
    \vdots\\
    1_{nN_1}^T
\end{pmatrix} \right\}{\small{n\,\textrm{times}}} \right)\,Y\\
& \qquad +\; \frac{1}{2n\lambda_2}\, I_{2N_3} \otimes 
\begin{pmatrix}
    \left.\begin{matrix}
        1_{nN_1}^T & -1_{nN_1}^T\\
        \vdots\\
        1_{nN_1}^T & -1_{nN_1}^T
    \end{matrix}\right\}{\small{n\,\textrm{times}}}\\ \\
    \left.\begin{matrix}
        -1_{nN_1}^T & 1_{nN_1}^T\\
        \vdots\\
        -1_{nN_1}^T & 1_{nN_1}^T
    \end{matrix}\right\}{\small{n\,\textrm{times}}}
\end{pmatrix}\,Y \;\;+\;\; \frac{1}{2n\lambda_3}\, I_{2N_3} \otimes \left.\begin{pmatrix}
    1_{2nN_1}^T\\
    \vdots\\
    1_{2nN_1}^T
\end{pmatrix} \right\}{\small{2n\,\textrm{times}}}\;\;\;\; Y\\
&=\; \frac{1}{\lambda_1} \; diag\Big(\underbrace{1_{N_1}^T, \;\dots\;, 1_{N_1}^T }_{2N_3 2n\; \textrm{times}}\Big)\,Y \;\;-\;\; \frac{1}{n\lambda_1} \; diag\left(\underbrace{ \begin{pmatrix}
    1_{nN_1}^T\\
    \vdots\\
    1_{nN_1}^T
\end{pmatrix}\,,\, \dots \,, \left.\begin{pmatrix}
    1_{nN_1}^T\\
    \vdots\\
    1_{nN_1}^T
\end{pmatrix} \right\}{\small{n\,\textrm{times}}}}_{2N_3 2\;\textrm{times}} \right)\,Y \\
& \qquad +\; \frac{1}{2n\lambda_2}\; diag \left(\underbrace{ \begin{pmatrix}
    \left.\begin{matrix}
        1_{nN_1}^T & -1_{nN_1}^T\\
        \vdots\\
        1_{nN_1}^T & -1_{nN_1}^T
    \end{matrix}\right\}{\small{n\,\textrm{times}}}\\ \\
    \left.\begin{matrix}
        -1_{nN_1}^T & 1_{nN_1}^T\\
        \vdots\\
        -1_{nN_1}^T & 1_{nN_1}^T
    \end{matrix}\right\}{\small{n\,\textrm{times}}}
\end{pmatrix}\,, \;\dots\;, \begin{pmatrix}
    \left.\begin{matrix}
        1_{nN_1}^T & -1_{nN_1}^T\\
        \vdots\\
        1_{nN_1}^T & -1_{nN_1}^T
    \end{matrix}\right\}{\small{n\,\textrm{times}}}\\ \\
    \left.\begin{matrix}
        -1_{nN_1}^T & 1_{nN_1}^T\\
        \vdots\\
        -1_{nN_1}^T & 1_{nN_1}^T
    \end{matrix}\right\}{\small{n\,\textrm{times}}}
\end{pmatrix}}_{2N_3\; \textrm{times}} \right)\,Y \\
& \qquad +\; \frac{1}{2n\lambda_3}\; diag \left(\underbrace{ \left.\begin{pmatrix}
    1_{2nN_1}^T\\
    \vdots\\
    1_{2nN_1}^T
\end{pmatrix} \right\}{\small{2n\,\textrm{times}}}\;\;, \; \dots\;, \left.\begin{pmatrix}
    1_{2nN_1}^T\\
    \vdots\\
    1_{2nN_1}^T
\end{pmatrix} \right\}{\small{2n\,\textrm{times}}} }_{2N_3\; \textrm{times}}\right)\,Y \\
&=\; \frac{N_1}{\lambda_1} \begin{pmatrix}
    \bar{Y}_{111\cdot}\\
    \bar{Y}_{112\cdot}\\
    \vdots \\
    \bar{Y}_{2N_3\,2\,n\,\cdot}
\end{pmatrix} \;\;-\;\; \frac{N_1}{\lambda_1} \begin{pmatrix}
\bar{Y}_{11\cdot\cdot}\, 1_n\\
\bar{Y}_{12\cdot\cdot}\, 1_n\\
\vdots\\
\bar{Y}_{2N_3\,1\cdot\cdot}\, 1_n\\
\bar{Y}_{2N_3\,2\cdot\cdot}\, 1_n\\
\end{pmatrix} \;\;+\;\; \frac{N_1}{2\lambda_2} \begin{pmatrix}
   (\bar{Y}_{11\cdot\cdot} - \bar{Y}_{12\cdot\cdot} )\,\begin{pmatrix}
       1_n\\
       -1_n
   \end{pmatrix}\\
   \vdots\\
   (\bar{Y}_{2N_3\,1\cdot\cdot} - \bar{Y}_{2N_3\,2\cdot\cdot} )\,\begin{pmatrix}
       1_n\\
       -1_n
   \end{pmatrix}
\end{pmatrix}\\
& \qquad +\; \frac{N_1}{\lambda_3} \begin{pmatrix}
    \bar{Y}_{1\cdot\cdot\cdot}\, 1_{2n}\\
    \vdots \\
    \bar{Y}_{2N_3\cdot\cdot\cdot}\, 1_{2n}
\end{pmatrix}\,,
\end{align*}
}
and
{
\allowdisplaybreaks
\begin{align*}
D_3 \;&:=\; \frac{N_1}{2\lambda_2} \begin{pmatrix}
    1_{N_3} \otimes\, \begin{matrix}
    \left(\tau + \delta\right)\begin{pmatrix}
    1_{n}\\
    -1_{n}
    \end{pmatrix}
    \end{matrix}\\
    1_{N_3} \otimes \,\begin{matrix}
    \tau \begin{pmatrix}
    1_{n}\\
    -1_{n}
    \end{pmatrix}
    \end{matrix}
\end{pmatrix} \;\;+\;\; \frac{N_1}{2\lambda_3} \begin{pmatrix}
    (2\beta_0 + \tau + 2\xi + \delta)\, 1_{N_3} \otimes 1_{2n}\\
    (2\beta_0 + \tau)\, 1_{N_3} \otimes 1_{2n}
\end{pmatrix}\\
&=\; \frac{N_1}{2\lambda_2} \begin{pmatrix}
    1_{N_3} \otimes\, \begin{matrix}    \left(\bar{Y}_{\cdot 1 \cdot\cdot}^T - \bar{Y}_{\cdot 2 \cdot\cdot}^T\right)\begin{pmatrix}
    1_{n}\\
    -1_{n}
    \end{pmatrix}
    \end{matrix}\\
    1_{N_3} \otimes \,\begin{matrix}
    \left(\bar{Y}_{\cdot 1 \cdot\cdot}^C - \bar{Y}_{\cdot 2 \cdot\cdot}^C\right) \begin{pmatrix}
    1_{n}\\
    -1_{n}
    \end{pmatrix}
    \end{matrix}
\end{pmatrix} \;\;+\;\; \frac{N_1}{2\lambda_3} \begin{pmatrix}
    2\bar{Y}_{\cdot \cdot \cdot\cdot}^T\; 1_{N_3} \otimes 1_{2n}\\
    2\bar{Y}_{\cdot \cdot \cdot\cdot}^C\; 1_{N_3} \otimes 1_{2n}
\end{pmatrix}\,.
\end{align*}
}
To satisfy (\ref{eq3.2.1.1}), we need to have
{
\allowdisplaybreaks
\begin{align*}
\textrm{tr}\left( V^{-1} Z_3 Z_3^T \right) \;=\; \Vert  Z_3^T V^{-1} (Y-X\beta) \Vert^2\,, 
\end{align*}
}
that is,
{
\allowdisplaybreaks
\begin{align*}
&\frac{1}{\lambda_1}\, 2N_3\,2(n-1)\,N_1 \;+\; \frac{1}{\lambda_2}\, 2N_3\,N_1 \;+\; \frac{1}{\lambda_3}\, 2N_3 \,N_1 \\
&=\; \Bigg\Vert\, \frac{N_1}{\lambda_1} \begin{pmatrix}
    \bar{Y}_{111\cdot} - \bar{Y}_{11\cdot\cdot} \\
    \vdots \\
    \bar{Y}_{2N_3\,2\,n\,\cdot} - \bar{Y}_{2N_3\,2\cdot\cdot}
\end{pmatrix} \;+\; \frac{N_1}{2\lambda_2} \begin{pmatrix}
   \left((\bar{Y}_{11\cdot\cdot} - \bar{Y}_{12\cdot\cdot} ) - (\bar{Y}_{\cdot 1\cdot\cdot}^T - \bar{Y}_{\cdot 2\cdot\cdot}^T ) \right)\,\begin{pmatrix}
       1_n\\
       -1_n
   \end{pmatrix}\\
   \vdots\\
   \left((\bar{Y}_{2N_3\,1\cdot\cdot} - \bar{Y}_{2N_3\,2\cdot\cdot} ) - (\bar{Y}_{\cdot \,1\cdot\cdot} - \bar{Y}_{\cdot\,2\cdot\cdot} )\right)\,\begin{pmatrix}
       1_n\\
       -1_n
   \end{pmatrix}
\end{pmatrix}\\
& \qquad +\; \frac{N_1}{\lambda_3} \begin{pmatrix}
    \left(\bar{Y}_{1\cdot\cdot\cdot} - \bar{Y}_{\cdot\cdot\cdot\cdot}^T\right)\, 1_{2n}\\
    \vdots \\
     \left(\bar{Y}_{N_3\cdot\cdot\cdot} - \bar{Y}_{\cdot\cdot\cdot\cdot}^T\right)\, 1_{2n}\\
     \vdots\\
    \left(\bar{Y}_{N_3+1\,\cdot\cdot\cdot} - \bar{Y}_{\cdot\cdot\cdot\cdot}^C\right)\, 1_{2n}\\
    \vdots \\
     \left(\bar{Y}_{2N_3\cdot\cdot\cdot} - \bar{Y}_{\cdot\cdot\cdot\cdot}^C\right)\, 1_{2n}
\end{pmatrix}\, \Bigg\Vert^2\\
&=\; \frac{N_1^2}{\lambda_1^2} \,\dis \sum_{i=1}^{2N_3}\sum_{g=1}^2 \sum_{j=1}^n \left( \bar{Y}_{igj \cdot} - \bar{Y}_{ig\cdot\cdot}\right)^2 \;+\; \frac{N_1}{2\lambda_2^2}\, SS_2 \;+\;  \frac{N_1}{\lambda_3^2}\, SS_3\,.
\end{align*}
}
\noindent This, together with (\ref{eq3.2.1.3}) and (\ref{eq3.2.1.5}), imply that
\begin{align}\label{eq3.2.1.6}
    \frac{2N_3\,2(n-1)}{\lambda_1} \;=\; \frac{SS_1}{\lambda_1^2}\,, \qquad \textrm{where}\;\; SS_1 \;=\; N_1\,\dis \sum_{i=1}^{2N_3}\sum_{g=1}^2 \sum_{j=1}^n \left( \bar{Y}_{igj \cdot} - \bar{Y}_{ig\cdot\cdot}\right)^2\,.
\end{align}
This yields
\begin{align}\label{eq3.2.1.7}
    \lambda_1 \;=\; \frac{1}{2N_3\,2(n-1)}\, SS_1\,.
\end{align}

\noindent And finally, note that
{
\allowdisplaybreaks
\begin{align*}
Z_0 V^{-1} \;=\; V^{-1} \;&=\; \frac{1}{\lambda_0}\,I_{2N_3\,2n} \otimes \left(I_{N_1} - \frac{1}{N_1} J_{N_1} \right) \;+\; \frac{1}{N_1\,\lambda_1}\,I_{2N_3\,2} \otimes \left( I_n - \frac{1}{n} J_n \right) \otimes J_{N_1}\\
   &\qquad + \frac{1}{2nN_1\,\lambda_2}\,I_{2N_3} \otimes \begin{pmatrix}
       1 & -1\\
       -1 & 1 
   \end{pmatrix} \otimes J_{n N_1} \;+\; \frac{1}{2nN_1\,\lambda_3}\,I_{2N_3} \otimes J_{2nN_1} \\
   &=\; \frac{1}{\lambda_0}\,I_{2N_3 2n N_1} - \frac{1}{N_1 \lambda_0}\,I_{2N_3 2n} \otimes J_{N_1} + \frac{1}{N_1 \lambda_1}\,I_{2N_3 2n} \otimes J_{N_1} - \frac{1}{nN_1 \lambda_1}\,I_{2N_3 2} \otimes J_{nN_1}\\
   & \qquad + \frac{1}{2nN_1\,\lambda_2}\,I_{2N_3} \otimes \begin{pmatrix}
       J_{n N_1} & -J_{n N_1}\\
       -J_{n N_1} & J_{n N_1}
   \end{pmatrix} +  \frac{1}{2nN_1\,\lambda_3}\,I_{2N_3} \otimes J_{2nN_1}\\
   &=\; \frac{1}{\lambda_0}\,I_{2N_3 2n N_1} - \frac{1}{N_1 \lambda_0}\; \,diag\left(\underbrace{J_{N_1}, \dots, J_{N_1}}_{2N_3 2n \; \textrm{times}}\right) + \frac{1}{N_1 \lambda_1}\; \,diag\left(\underbrace{J_{N_1}, \dots, J_{N_1}}_{2N_3 2n \; \textrm{times}}\right)\\
   & \qquad - \frac{1}{nN_1 \lambda_1}\; \,diag\left(\underbrace{J_{nN_1}, \dots, J_{nN_1}}_{2N_3 2 \; \textrm{times}}\right)\\
   & \qquad + \frac{1}{2nN_1\,\lambda_2}\; \,diag\left(\underbrace{ \begin{pmatrix}
     J_{n N_1} & -J_{n N_1}\\
    -J_{n N_1} & J_{n N_1}  
   \end{pmatrix}, \dots, \begin{pmatrix}
     J_{n N_1} & -J_{n N_1}\\
    -J_{n N_1} & J_{n N_1}  
   \end{pmatrix}}_{2N_3 \; \textrm{times}}\right)\\
   &\qquad +  \frac{1}{2nN_1\,\lambda_3}\; \,diag\left(\underbrace{J_{2nN_1}, \dots, J_{2nN_1}}_{2N_3 \; \textrm{times}}\right)\,,
\end{align*}
}
which yields
{
\allowdisplaybreaks
\begin{align*}
Z_0 V^{-1} Y \;&=\; \frac{1}{\lambda_0}\, \bar{Y}_{\cdot \cdot \cdot \cdot} \;-\;  \frac{1}{\lambda_0}\, \begin{pmatrix}
    \bar{Y}_{111\cdot}\,1_{N_1}\\
    \vdots \\
    \bar{Y}_{11n\cdot}\,1_{N_1}\\
    \bar{Y}_{121\cdot}\,1_{N_1}\\
    \vdots \\
    \bar{Y}_{12n\cdot}\,1_{N_1}\\
    \vdots\\ \vdots \\
    \bar{Y}_{2N_3 11\cdot}\,1_{N_1}\\
    \vdots \\
    \bar{Y}_{2N_3 1n\cdot}\,1_{N_1}\\
    \vdots\\
    \bar{Y}_{2N_3 21\cdot}\,1_{N_1}\\
    \vdots \\
    \bar{Y}_{2N_3 2n\cdot}\,1_{N_1}
\end{pmatrix} \;+\; \frac{1}{\lambda_1}\, \begin{pmatrix}
    \bar{Y}_{111\cdot}\,1_{N_1}\\
    \vdots \\
    \bar{Y}_{11n\cdot}\,1_{N_1}\\
    \bar{Y}_{121\cdot}\,1_{N_1}\\
    \vdots \\
    \bar{Y}_{12n\cdot}\,1_{N_1}\\
    \vdots \\ \vdots \\
    \bar{Y}_{2N_3 11\cdot}\,1_{N_1}\\
    \vdots \\
    \bar{Y}_{2N_3 1n\cdot}\,1_{N_1}\\
    \bar{Y}_{2N_3 21\cdot}\,1_{N_1}\\
    \vdots \\
    \bar{Y}_{2N_3 2n\cdot}\,1_{N_1}
\end{pmatrix} \;-\; \frac{1}{\lambda_1}\, \begin{pmatrix}
    \bar{Y}_{11\cdot\cdot}\,1_{nN_1}\\
    \bar{Y}_{12\cdot\cdot}\,1_{nN_1}\\
    \vdots \\
    \bar{Y}_{2N_3\,1\cdot\cdot}\,1_{nN_1}\\
    \bar{Y}_{2N_3\,2\cdot\cdot}\,1_{nN_1}\\
    \vdots
\end{pmatrix}\\
& \;\; + \; \frac{1}{2\lambda_2}\, \begin{pmatrix}
    \left(\bar{Y}_{11\cdot\cdot} - \bar{Y}_{12\cdot\cdot}\right)\,\begin{pmatrix}
    1_{nN_1}\\
    -1_{nN_1}
    \end{pmatrix}\\
    \vdots \\
    \left(\bar{Y}_{2N_3\,1\cdot\cdot} - \bar{Y}_{2N_3\,2\cdot\cdot}\right)\,\begin{pmatrix}
    1_{nN_1}\\
    -1_{nN_1}
    \end{pmatrix}
\end{pmatrix} \;+\; \frac{1}{\lambda_3}\,\begin{pmatrix}
    \bar{Y}_{1\cdot\cdot\cdot}\,1_{2nN_1}\\
    \vdots\\
    \bar{Y}_{2N_3\cdot\cdot\cdot}\,1_{2nN_1}
\end{pmatrix}
\end{align*}
}
and 
{
\allowdisplaybreaks
\begin{align*}
Z_0 V^{-1} X\beta \;&=\; \frac{1}{\lambda_0} 
\begin{pmatrix}
1_{N_3} \otimes \begin{pmatrix}
    (\beta_0 + \tau + \xi + \delta)\, 1_{nN_1}\\
    (\beta_0 + \xi)\, 1_{nN_1}
\end{pmatrix}\\ 
    1_{N_3} \otimes \begin{pmatrix}
    (\beta_0 + \tau)\, 1_{nN_1}\\
    \beta_0 \,1_{nN_1}
    \end{pmatrix}
\end{pmatrix}
%\begin{matrix*}[1]
    %\left. \vphantom{(\beta_0 + \tau + \xi + \delta)\, 1_{nN_1}\\
    %(\beta_0 + \xi)\, 1_{nN_1}} \right\} N_3\\
    %\vphantom{\vdots}\\
    % \left. \vphantom{(\beta_0 + \tau)\, 1_{nN_1}\\
    %\beta_0 \,1_{nN_1} } \right\} N_3
%\end{matrix*}
\;-\; \frac{1}{\lambda_0} \begin{pmatrix}
1_{N_3} \otimes \begin{pmatrix}
    (\beta_0 + \tau + \xi + \delta)\, 1_{nN_1}\\
    (\beta_0 + \xi)\, 1_{nN_1}
    \end{pmatrix}\\
    1_{N_3} \otimes \begin{pmatrix}
    (\beta_0 + \tau)\, 1_{nN_1}\\
    \beta_0 \,1_{nN_1}
    \end{pmatrix}
\end{pmatrix}\\
& \qquad +\; \frac{1}{\lambda_1} \begin{pmatrix}
    1_{N_3} \otimes \begin{pmatrix}
    (\beta_0 + \tau + \xi + \delta)\, 1_{nN_1}\\
    (\beta_0 + \xi)\, 1_{nN_1}
    \end{pmatrix}\\
    1_{N_3} \otimes \begin{pmatrix}
    (\beta_0 + \tau)\, 1_{nN_1}\\
    \beta_0 \,1_{nN_1}
    \end{pmatrix}
\end{pmatrix} \;-\; \frac{1}{\lambda_1} \begin{pmatrix}
     1_{N_3} \otimes \begin{pmatrix}
     (\beta_0 + \tau + \xi + \delta)\, 1_{nN_1}\\
    (\beta_0 + \xi)\, 1_{nN_1}
    \end{pmatrix}\\
     1_{N_3} \otimes \begin{pmatrix}
    (\beta_0 + \tau)\, 1_{nN_1}\\
    \beta_0 \,1_{nN_1}
    \end{pmatrix}
\end{pmatrix}\\
& \qquad +\; \frac{1}{2\lambda_2}\, \begin{pmatrix}
    1_{N_3} \otimes\, \begin{matrix}
    \left(\tau + \delta\right)\begin{pmatrix}
    1_{nN_1}\\
    -1_{nN_1}
    \end{pmatrix}
    \end{matrix}\\
    1_{N_3} \otimes \,\begin{matrix}
    \tau \begin{pmatrix}
    1_{nN_1}\\
    -1_{nN_1}
    \end{pmatrix}
    \end{matrix}
\end{pmatrix} \;+\; \frac{1}{2\lambda_3} \begin{pmatrix}
    (2\beta_0 + \tau + 2\xi + \delta)\, 1_{N_3} \otimes 1_{2nN_1}\\
    (2\beta_0 + \tau)\, 1_{N_3} \otimes 1_{2nN_1}
\end{pmatrix}\\
%\begin{matrix*}
    %\left. \vphantom{(2\beta_0 + \tau + 2\xi + \delta)\, 1_{2nN_1} } \right\} N_3\; \textrm{times}\\
    %\left. \vphantom{(2\beta_0 + \tau)\, 1_{2nN_1} } \right\} N_3\; \textrm{times}
%\end{matrix*}\\
&=\;  \frac{1}{2\lambda_2}\, \begin{pmatrix}
    1_{N_3} \otimes\;\left(\bar{Y}_{\cdot 1 \cdot\cdot}^T - \bar{Y}_{\cdot 2 \cdot\cdot}^T \right)\,\begin{pmatrix}
    1_{nN_1}\\
    -1_{nN_1}
    \end{pmatrix}\\
    1_{N_3} \otimes\;  \left(\bar{Y}_{\cdot 1 \cdot\cdot}^C - \bar{Y}_{\cdot 2 \cdot\cdot}^C \right)\,\begin{pmatrix}
    1_{nN_1}\\
    -1_{nN_1}
    \end{pmatrix}
\end{pmatrix} \;+\; \frac{1}{\lambda_3} \begin{pmatrix}
    \bar{Y}_{\cdot \cdot \cdot\cdot}^T\; 1_{N_3} \otimes 1_{2nN_1}\\
    \bar{Y}_{\cdot \cdot \cdot\cdot}^C\; 1_{N_3} \otimes 1_{2nN_1}
\end{pmatrix}\;.
\end{align*}
}
Therefore we have
{
\allowdisplaybreaks
\begin{align*}
&\Vert Z_0 V^{-1} (Y-X\beta) \Vert^2 \;=\; \Vert -\,\frac{1}{\lambda_0}\, \begin{pmatrix}
    (\bar{Y}_{111\cdot} - \bar{Y}_{\cdot \cdot \cdot\cdot}) \,1_{N_1}\\
    \vdots \\
    (\bar{Y}_{2N_3\,2\,n\cdot} - \bar{Y}_{\cdot \cdot \cdot\cdot})\,1_{N_1}\\
\end{pmatrix} \;+\; \frac{1}{\lambda_1}\, \begin{pmatrix}
    (\bar{Y}_{111\cdot} - \bar{Y}_{11 \cdot\cdot}) \,1_{N_1}\\
    \vdots \\
    (\bar{Y}_{11n\cdot} - \bar{Y}_{11 \cdot\cdot}) \,1_{N_1}\\
    \vdots \\
    (\bar{Y}_{2N_3\,2\,1\cdot} - \bar{Y}_{2N_3\,2 \cdot\cdot})\,1_{N_1}\\
    \vdots\\
    (\bar{Y}_{2N_3\,2\,n\cdot} - \bar{Y}_{2N_3\,2 \cdot\cdot})\,1_{N_1}\\
\end{pmatrix}\\
&\qquad + \frac{1}{2\lambda_2}\, \begin{pmatrix}
    \left\{(\bar{Y}_{11 \cdot\cdot} - \bar{Y}_{12 \cdot\cdot}) - (\bar{Y}_{\cdot 1 \cdot\cdot}^T - \bar{Y}_{\cdot 2 \cdot\cdot}^T)\right\}\,\begin{pmatrix}
    1_{nN_1}\\
    -1_{nN_1}
    \end{pmatrix}\\
    \vdots \\
    \left\{(\bar{Y}_{2N_3\,1 \cdot\cdot} - \bar{Y}_{2N_3\,2 \cdot\cdot}) - (\bar{Y}_{\cdot 1 \cdot\cdot}^C - \bar{Y}_{\cdot 2 \cdot\cdot}^C)\right\}\,\begin{pmatrix}
    1_{nN_1}\\
    -1_{nN_1}
    \end{pmatrix}
\end{pmatrix} \; +\; \frac{1}{\lambda_3} \begin{pmatrix}
    (\bar{Y}_{1\cdot \cdot\cdot} - \bar{Y}_{\cdot \cdot \cdot\cdot}^T)\, 1_{2nN_1}\\
    (\bar{Y}_{2N_3\cdot \cdot\cdot} - \bar{Y}_{\cdot \cdot \cdot\cdot}^C)\, 1_{2nN_1} 
\end{pmatrix} \; \Vert^2\\
&=\; \frac{1}{\lambda_0^2}\,N_1\, \dis \sum_{i=1}^{2N_3}\sum_{g=1}^2 \sum_{j=1}^n \left( \bar{Y}_{igj\cdot} - \bar{Y}_{\cdot\cdot \cdot\cdot} \right)^2 \;+\; \frac{1}{\lambda_1^2}\,N_1\, \dis \sum_{i=1}^{2N_3}\sum_{g=1}^2 \sum_{j=1}^n \left( \bar{Y}_{igj\cdot} - \bar{Y}_{ig \cdot\cdot} \right)^2\\
& \qquad +\; \frac{1}{2\lambda_2^2}\; nN_1\,  \Big[ \dis \sum_{i=1}^{N_3} \left\{ (\bar{Y}_{i1\cdot\cdot} - \bar{Y}_{i2\cdot\cdot}) - (\bar{Y}_{\cdot 1 \cdot\cdot}^T - \bar{Y}_{\cdot 2 \cdot\cdot}^T) \right\}^2 \;+\; \sum_{i=N_3+1}^{2N_3} \left\{ (\bar{Y}_{i1\cdot\cdot} - \bar{Y}_{i2\cdot\cdot}) - (\bar{Y}_{\cdot 1 \cdot\cdot}^C - \bar{Y}_{\cdot 2 \cdot\cdot}^C) \right\}^2 \Big]\\
& \qquad +\; \frac{1}{\lambda_3^2}\; 2n\,N_1\, \left[ \dis \sum_{i=1}^{N_3} (\bar{Y}_{i\cdot\cdot\cdot} - \bar{Y}_{\cdot\cdot\cdot\cdot}^T)^2 \,+\, \dis \sum_{i=N_3+1}^{2N_3} (\bar{Y}_{i\cdot\cdot\cdot} - \bar{Y}_{\cdot\cdot\cdot\cdot}^C)^2  \right]\\
&=\; \frac{SS_0}{\lambda_0^2} \;+\; \frac{SS_1}{\lambda_1^2} \;+\; \frac{SS_2}{2\lambda_2^2} \;+\; \frac{SS_3}{\lambda_3^2}\,,
\end{align*}
}
where
\begin{align}\label{eq3.2.1.8}
    SS_0 \;:=\; N_1\, \dis \sum_{i=1}^{2N_3}\sum_{g=1}^2 \sum_{j=1}^n \left( \bar{Y}_{igj\cdot} - \bar{Y}_{\cdot\cdot \cdot\cdot} \right)^2\,.
\end{align}
To satisfy (\ref{eq3.2.1.1}), we need to have
{
\allowdisplaybreaks
\begin{align*}
\textrm{tr}\left( V^{-1} Z_0 Z_0^T \right) \;&=\; \Vert  Z_0^T V^{-1} (Y-X\beta) \Vert^2\,.  
\end{align*}
}
This yields
{
\allowdisplaybreaks
\begin{align*}
&\frac{1}{\lambda_0}\, 2N_3\, 2n\, (N_1 -1) \;+\; \frac{1}{\lambda_1}\, 2N_3\, 2\,(n-1) \;+\; \frac{1}{\lambda_2}\, 2N_3 \;+\; \frac{1}{\lambda_3}\, 2N_3\\
&=\; \frac{SS_0}{\lambda_0^2} \;+\; \frac{SS_1}{\lambda_1^2} \;+\; \frac{SS_2}{2\lambda_2^2} \;+\; \frac{SS_3}{\lambda_3^2}\,.
\end{align*}
}
With this, and combining (\ref{eq3.2.1.3}), (\ref{eq3.2.1.5}) and (\ref{eq3.2.1.7}), we have
\begin{align}\label{eq3.2.1.9}
\frac{2N_3\,2n\,(N_1-1)}{\lambda_0} \;=\; \frac{SS_0}{\lambda_0^2}\,, \qquad \textrm{i.e.,} \qquad \lambda_0 \;=\; \frac{1}{2N_3\,2n\,(N_1-1)}\,SS_0.
\end{align}
Thus the ML estimates of the variance components are:
{
\allowdisplaybreaks
\begin{align}
\hat{\lambda}_0 \;&=\; \hat{\sigma}_e^2 \;=\;\frac{1}{2N_3\,2n\,(N_1-1)}\,SS_0 \;=\; \frac{1}{2N_3\,2n\,(N_1-1)}\; N_1\, \dis \sum_{i=1}^{2N_3}\sum_{g=1}^2 \sum_{j=1}^n \left( \bar{Y}_{igj\cdot} - \bar{Y}_{\cdot\cdot \cdot\cdot} \right)^2\,, \label{eq3.2.1.10} \\
\hat{\lambda}_1 \;&=\; \hat{\sigma}_e^2 + N_1\, \hat{\sigma}_1^2 \;=\; \frac{1}{2N_3\,2(n-1)}\,SS_1 \;=\; \frac{1}{2N_3\,2(n-1)}\, N_1\,\dis \sum_{i=1}^{2N_3}\sum_{g=1}^2 \sum_{j=1}^n \left( \bar{Y}_{igj\cdot} - \bar{Y}_{ig \cdot\cdot} \right)^2\,, \label{eq3.2.1.11} \\
\hat{\lambda}_2 \;&=\; \hat{\sigma}_e^2 + N_1\, \hat{\sigma}_1^2 + nN_1\, \hat{\sigma}_2^2 \;=\; \frac{1}{2N_3\,2}\,SS_2 \notag \\
&=\; \frac{1}{2N_3\,2}\; nN_1\, \left[ \dis \sum_{i=1}^{N_3} \left\{ (\bar{Y}_{i1\cdot\cdot} - \bar{Y}_{i2\cdot\cdot}) - (\bar{Y}_{\cdot 1 \cdot\cdot}^T - \bar{Y}_{\cdot 2 \cdot\cdot}^T) \right\}^2 \;+\; \sum_{i=N_3+1}^{2N_3} \left\{ (\bar{Y}_{i1\cdot\cdot} - \bar{Y}_{i2\cdot\cdot}) - (\bar{Y}_{\cdot 1 \cdot\cdot}^C - \bar{Y}_{\cdot 2 \cdot\cdot}^C) \right\}^2 \right] \label{eq3.2.1.12} \\
\hat{\lambda}_3 \;&=\; \hat{\sigma}_e^2 + N_1\, \hat{\sigma}_1^2 + nN_1\, \hat{\sigma}_2^2 + 2nN_1\, \hat{\sigma}_3^2 \;=\; \frac{1}{2N_3}\,SS_3 \notag \\
&=\; \frac{1}{2N_3}\; 2nN_1\,\left[ \dis \sum_{i=1}^{N_3} (\bar{Y}_{i\cdot\cdot\cdot} - \bar{Y}_{\cdot\cdot\cdot\cdot}^T)^2 \,+\, \dis \sum_{i=N_3+1}^{2N_3} (\bar{Y}_{i\cdot\cdot\cdot} - \bar{Y}_{\cdot\cdot\cdot\cdot}^C)^2  \right]\,. \label{eq3.2.1.13}
\end{align}
}
Subtracting (\ref{eq3.2.1.10}) from (\ref{eq3.2.1.11}), we get
\begin{align}\label{eq3.2.1.14}
  \hat{\sigma}_1^2 \;=\; \frac{1}{N_1}\, \left(  \frac{1}{2N_3\,2(n-1)}\,SS_1 \;-\; \frac{1}{2N_3\,2n\,(N_1-1)}\,SS_0 \right)\,.
\end{align}
Subtracting (\ref{eq3.2.1.11}) from (\ref{eq3.2.1.12}), we get
\begin{align}\label{eq3.2.1.15}
  \hat{\sigma}_2^2 \;=\; \frac{1}{nN_1}\, \left(  \frac{1}{2N_3\,2}\,SS_2 \;-\; \frac{1}{2N_3\,2(n-1)}\,SS_1 \right)\,.
\end{align}
And finally, subtracting (\ref{eq3.2.1.12}) from (\ref{eq3.2.1.13}), we get
\begin{align}\label{eq3.2.1.16}
  \hat{\sigma}_3^2 \;=\; \frac{1}{2nN_1}\, \left(  \frac{1}{2N_3}\,SS_3 \;-\; \frac{1}{2N_3\,2}\,SS_2 \right)\,.
\end{align}
This completes the proof of the theorem.

\end{proof}

\begin{proof}[Proof of Theorem \ref{Thm_consistency_level2}]
We first show that the variance component estimators obtained in Theorem \ref{Thm_MLE_var_comps2} are unbiased of their true parameter counterparts. We will only sketch a proof for $E(\hat{\lambda}_3) = \lambda_3$; similar arguments will show that $\hat{\lambda}_0, \hat{\lambda}_1$ and $\hat{\lambda}_2$ are unbiased as well, which will in turn imply that $\hat{\sigma}_e^2, \hat{\sigma}_1^2, \hat{\sigma}_2^2$ and $\hat{\sigma}_3^2$ are unbiased estimators.

Towards that, note that from (\ref{eq3.2.1.0.6}) we have
{
\allowdisplaybreaks
\begin{align*}
  \frac{2nN_1}{\lambda_3^2}\, SS_3 \;&=\; \Vert  Z_1^T V^{-1} (Y-X\beta)\Vert^2  \;=\; \Vert (Y-X\beta)^T V^{-1} Z_1 \Vert^2\\
  &=\; (Y-X\beta)^T V^{-1} Z_1 Z_1^T V^{-1} (Y-X\beta) \;=\; (Y-X\beta)^T D\, (Y-X\beta)^T\,,
\end{align*}
}
where $D:=V^{-1} Z_1 Z_1^T V^{-1}$. With this, and observing the fact that $Y \sim N(X\beta, V)$, we have
{
\allowdisplaybreaks
\begin{align*}
E\left( \frac{2nN_1}{\lambda_3^2}\, SS_3 \right) \;&=\; E\left( (Y-X\beta)^T D (Y-X\beta)^T\right) \;=\; \textrm{tr}(DV) \;=\; \textrm{tr}(V^{-1} Z_1 Z_1^T)\\
&=\; \frac{2N_3\,2n\,N_1}{\lambda_3}\,,
\end{align*}
}
where the last inequality follows from (\ref{eq3.2.1.0.5}). Some simple algebraic manipulations yield that
\begin{align*}
    E\left( \frac{SS_3}{2N_3}\right) \;=\; E(\hat{\lambda}_3) \;=\; \lambda_3\,,
\end{align*}
which completes the proof that $\hat{\lambda}_3$ is an unbiased estimator of $\lambda_3$.\\

Therefore to show the consistency of the variance component estimators, it is enough to show that the variances of $\hat{\sigma}_e^2, \hat{\sigma}_1^2, \hat{\sigma}_2^2$ and $\hat{\sigma}_3^2$ are asymptotically negligible. Towards that, note that by Section 6.3 of \cite{searle2009}, the asymptotic covariance matrix of $(\hat{\sigma_e}^2, \hat{\sigma_3}^2, \hat{\sigma_2}^2, \hat{\sigma_1}^2)$\, is\, $2B^{-1}$, where 
\begin{align*}
    B \;=\; \begin{pmatrix}
        b_{00} & b_{01} & b_{02} & b_{03}\\
        b_{01} & b_{11} & b_{12} & b_{13}\\
        b_{02} & b_{12} & b_{22} & b_{23}\\
        b_{03} & b_{13} & b_{23} & b_{33}
    \end{pmatrix}\,,
\end{align*}
with $b_{ij} := \textrm{tr}(V^{-1} Z_i Z_i^T\, V^{-1} Z_j Z_j^T)$\, for\, $0\leq i\leq j \leq 3$. Now observe that
{
\allowdisplaybreaks
\begin{align*}
   V^{-1} Z_0 Z_0^T \;&=\; V^{-1} \;=\;  \frac{1}{\lambda_0}\,I_{2N_3} \otimes I_2 \otimes I_n \otimes C_{N_1} \;+\; \frac{1}{\lambda_1}\,I_{2N_3} \otimes I_2 \otimes C_n \otimes \bar{J}_{N_1}\\
   &\qquad \qquad + \frac{1}{\lambda_2}\,I_{2N_3} \otimes C_2 \otimes \bar{J}_n \otimes \bar{J}_{N_1} \;+\; \frac{1}{\lambda_3}\,I_{2N_3} \otimes \bar{J}_2 \otimes \bar{J}_n \otimes \bar{J}_{N_1}\,,\\
   V^{-1} Z_1 Z_1^T \;&=\; \frac{1}{\lambda_3}\,I_{2N_3} \otimes J_2 \otimes J_n \otimes J_{N_1} \,,\\
   V^{-1} Z_2 Z_2^T \;&=\; \frac{1}{\lambda_2}\,I_{2N_3} \otimes C_2 \otimes J_n \otimes J_{N_1} \;+\; \frac{1}{\lambda_3}\,I_{2N_3} \otimes \bar{J}_2 \otimes J_n \otimes J_{N_1} \,,\\
   \textrm{and} \;\; V^{-1} Z_3 Z_3^T \;&=\; \frac{1}{\lambda_1}\,I_{2N_3} \otimes I_2 \otimes C_n \otimes J_{N_1} \;+\; \frac{1}{\lambda_2}\,I_{2N_3} \otimes C_2 \otimes \bar{J}_n \otimes J_{N_1} \;+\; \frac{1}{\lambda_3}\,I_{2N_3} \otimes \bar{J}_2 \otimes \bar{J}_n \otimes J_{N_1} \,.\\
\end{align*}
}
Therefore
{
\allowdisplaybreaks
\begin{align*}
   b_{00} \;&=\; \textrm{tr}(V^{-1} Z_0 Z_0^T\, V^{-1} Z_0 Z_0^T) \;=\; \textrm{tr} \big(\, \frac{1}{\lambda_0^2}\,I_{2N_3} \otimes I_2 \otimes I_n \otimes C_{N_1} \,+\, \frac{1}{\lambda_1^2}\,I_{2N_3} \otimes I_2 \otimes C_n \otimes \bar{J}_{N_1}\\
   &\qquad \qquad \qquad \qquad \qquad \qquad \qquad + \frac{1}{\lambda_2^2}\,I_{2N_3} \otimes C_2 \otimes \bar{J}_n \otimes \bar{J}_{N_1} \,+\, \frac{1}{\lambda_3^2}\,I_{2N_3} \otimes \bar{J}_2 \otimes \bar{J}_n \otimes \bar{J}_{N_1}\,\big)\\
   &=\; \frac{1}{\lambda_0^2}\, 2N_3\, 2n\, (N_1 -1) \;+\; \frac{1}{\lambda_1^2}\, 2N_3\, 2\,(n-1) \;+\; \frac{1}{\lambda_2^2}\, 2N_3 \;+\; \frac{1}{\lambda_3^2}\, 2N_3\,,\\
   b_{11} \;&=\; \textrm{tr}(V^{-1} Z_1 Z_1^T\, V^{-1} Z_1 Z_1^T) \;=\; \textrm{tr} \big(\,\frac{2n\,N_1}{\lambda_3^2}\,I_{2N_3} \otimes J_2 \otimes J_n \otimes J_{N_1}\,\big) \;=\; \frac{1}{\lambda_3^2}\,2N_3\,(2n N_1)^2\,,\\
   b_{22} \;&=\; \textrm{tr}(V^{-1} Z_2 Z_2^T\, V^{-1} Z_2 Z_2^T) \;=\; \textrm{tr} \big(\,\frac{n\,N_1}{\lambda_2^2}\,I_{2N_3} \otimes C_2 \otimes J_n \otimes J_{N_1} \;+\; \frac{n\,N_1}{\lambda_3^2}\,I_{2N_3} \otimes \bar{J}_2 \otimes J_n \otimes J_{N_1}\,\big)\\
   &=\; \frac{1}{\lambda_2^2}\, 2N_3\,(n N_1)^2 \;+\; \frac{1}{\lambda_3^2}\, 2N_3\,(n N_1)^2 \;=\; 2N_3\,(n N_1)^2\, \left(\frac{1}{\lambda_2^2} + \frac{1}{\lambda_3^2}\right)\,,\\
   b_{33} \;&=\; \textrm{tr}(V^{-1} Z_3 Z_3^T\, V^{-1} Z_3 Z_3^T) \;=\; \textrm{tr} \big(\,\frac{N_1}{\lambda_1^2}\,I_{2N_3} \otimes I_2 \otimes C_n \otimes J_{N_1} \;+\; \frac{N_1}{\lambda_2^2}\,I_{2N_3} \otimes C_2 \otimes \bar{J}_n \otimes J_{N_1} \\
   & \qquad \qquad \qquad \qquad \qquad \qquad \qquad \;\; \;+\; \frac{N_1}{\lambda_3^2}\,I_{2N_3} \otimes \bar{J}_2 \otimes \bar{J}_n \otimes J_{N_1}\,\big)\\
   &=\; \frac{1}{\lambda_1^2}\, 2N_3\,2(n-1)\,N_1^2 \;+\; \frac{1}{\lambda_2^2}\, 2N_3\,N_1^2 \;+\; \frac{1}{\lambda_3^2}\, 2N_3 \,N_1^2\,,\\
   b_{01} \;&=\; \textrm{tr}(V^{-1} Z_0 Z_0^T\, V^{-1} Z_1 Z_1^T) \;=\; \textrm{tr} \big(\, 0 + 0 + 0 + \frac{1}{\lambda_3^2}\,I_{2N_3} \otimes J_2 \otimes J_n \otimes J_{N_1}\,\big) \;=\; \frac{1}{\lambda_3^2}\, 2N_3\,2n\,N_1\,,\\
   b_{02} \;&=\; \textrm{tr}(V^{-1} Z_0 Z_0^T\, V^{-1} Z_2 Z_2^T) \;=\; \textrm{tr} \big(\,0 + 0 + \frac{1}{\lambda_2^2}\,I_{2N_3} \otimes C_2 \otimes J_n \otimes J_{N_1} + 0 + 0 + 0\\
   & \qquad \qquad \qquad \qquad \qquad \qquad \qquad \;\; + \frac{1}{\lambda_3^2}\,I_{2N_3} \otimes \bar{J}_2 \otimes J_n \otimes J_{N_1}\,\big)\\
   &=\; \frac{1}{\lambda_2^2}\, 2N_3\,n\,N_1 \;+\; \frac{1}{\lambda_3^2}\, 2N_3\,n\,N_1\,,\\
   b_{03} \;&=\; \textrm{tr}(V^{-1} Z_0 Z_0^T\, V^{-1} Z_3 Z_3^T) \;=\; \textrm{tr} \big(\,\frac{1}{\lambda_1^2}\,I_{2N_3} \otimes I_2 \otimes C_n \otimes J_{N_1} \;+\; \frac{1}{\lambda_2^2}\,I_{2N_3} \otimes C_2 \otimes \bar{J}_n \otimes J_{N_1} \\
   & \qquad \qquad \qquad \qquad \qquad \qquad \qquad \;\; +\; \frac{1}{\lambda_3^2}\,I_{2N_3} \otimes \bar{J}_2 \otimes \bar{J}_n \otimes J_{N_1}\,\big)\\
   &=\; \frac{1}{\lambda_1^2}\, 2N_3\, 2\,(n-1)\,N_1 \;+\; \frac{1}{\lambda_2^2}\, 2N_3\,N_1 \;+\; \frac{1}{\lambda_3^2}\, 2N_3\,N_1\,,\\
   b_{12} \;&=\; \textrm{tr}(V^{-1} Z_1 Z_1^T\, V^{-1} Z_2 Z_2^T) \;=\; \textrm{tr} \big(\,\frac{n N_1}{\lambda_3^2}\,I_{2N_3} \otimes J_2 \otimes J_n \otimes J_{N_1}\,\big) \;=\; \frac{1}{\lambda_3^2}\,2N_3\,2\,(n N_1)^2\,,\\
   b_{13} \;&=\; \textrm{tr}(V^{-1} Z_1 Z_1^T\, V^{-1} Z_3 Z_3^T) \;=\; \textrm{tr} \big(\,\frac{N_1}{\lambda_3^2}\,I_{2N_3} \otimes J_2 \otimes J_n \otimes J_{N_1}\,\big) \;=\; \frac{1}{\lambda_3^2}\,2N_3\,2n\, N_1^2\,,\\
   b_{23} \;&=\; \textrm{tr}(V^{-1} Z_2 Z_2^T\, V^{-1} Z_3 Z_3^T) \;=\; \textrm{tr} \big(\,\frac{N_1}{\lambda_2^2}\,I_{2N_3} \otimes C_2 \otimes J_n \otimes J_{N_1} \;+\; \frac{N_1}{\lambda_3^2}\,I_{2N_3} \otimes \bar{J}_2 \otimes J_n \otimes J_{N_1}\,\big)\\
   &=\; \frac{1}{\lambda_2^2}\,2N_3\,n\,N_1^2 \;+\; \frac{1}{\lambda_3^2}\,2N_3\,n\,N_1^2\,.
\end{align*}
}
Now $B^{-1} = \frac{1}{\vert B \vert}\, adj(B)$. Let 
\begin{align*}
    adj(B) \;&:=\; \begin{pmatrix}
       B_{00} & -B_{01} & B_{02} & -B_{03}\\
        -B_{01} & B_{11} & -B_{12} & B_{13}\\
        B_{02} & -B_{12} & B_{22} & -B_{23}\\
        -B_{03} & B_{13} & -B_{23} & B_{33} 
    \end{pmatrix}\,.
\end{align*}
Then we have
\begin{align}\label{supp_cons_1}
    \vert B \vert \;&=\; b_{00} B_{00} \,-\, b_{01} B_{01}  \,+\, b_{02} B_{02} \,-\, b_{03} B_{03}\,. 
\end{align}
Essentially we need to evaluate $B_{00}, B_{11}, B_{22}, B_{33}$ and $B_{01}, B_{02}, B_{03}$ (to compute $\vert B \vert$), and to establish that $B_{00}/\vert B \vert$, $B_{11}/\vert B \vert$, $B_{22}/\vert B \vert$ and $B_{33}/\vert B \vert$ are all $o(1)$ terms. Towards that, note that
{
\allowdisplaybreaks
\begin{align}\label{supp_cons_2}
    B_{00} \;&=\; \begin{vmatrix}
       b_{11} & b_{12} & b_{13}\\
       b_{12} & b_{22} & b_{23} \\
       b_{13} & b_{23} & b_{33} 
    \end{vmatrix} 
    \;=\; b_{11}\,\left(b_{22} b_{33} - b_{23}^2\right) \;-\; b_{12}\,\left(b_{12} b_{33} - b_{13} b_{23}\right) \;+\; b_{13}\,\left(b_{12} b_{23} - b_{22} b_{13}\right) \notag \\
    &=\; \frac{1}{\lambda_3^2}\,2N_3\,(2n N_1)^2\,\Big[ 2N_3\,(n N_1)^2\,\left(\frac{1}{\lambda_2^2} + \frac{1}{\lambda_3^2}\right) \left\{ \frac{1}{\lambda_1^2}\,2N_3\,2(n-1)\,N_1^2 \,+\, 2N_3\,N_1^2\,\left(\frac{1}{\lambda_2^2} + \frac{1}{\lambda_3^2}\right) \right\} \notag \\
    &\; \qquad \qquad \qquad \qquad \qquad  -\,\left( 2N_3\,nN_1^2 \right)^2\,\left(\frac{1}{\lambda_2^2} + \frac{1}{\lambda_3^2}\right)^2 \Big] \notag \\
    &\qquad -\, \frac{1}{\lambda_3^2}\,2N_3\,2 (n N_1)^2\,\Big[\,\frac{1}{\lambda_3^2}\,2N_3\,2 (n N_1)^2 \, \left\{ \frac{1}{\lambda_1^2}\,2N_3\,2(n-1)\,N_1^2 \,+\, \frac{1}{\lambda_2^2}\,2N_3\,N_1^2 \,+\, \frac{1}{\lambda_3^2}\,2N_3\,N_1^2 \right\} \notag \\
    & \qquad \qquad \qquad \qquad \qquad \qquad - \, \frac{1}{\lambda_3^2}\, 2N_3\,2n\,N_1^2\,\cdot\,2N_3\,n N_1^2 \,\left(\frac{1}{\lambda_2^2} + \frac{1}{\lambda_3^2}\right) \Big] \notag \\
    & \qquad +  \frac{1}{\lambda_3^2}\,2N_3\,2n\,N_1^2\,\Big[\, \frac{1}{\lambda_3^2}\,2N_3\,2 (n N_1)^2 \, \cdot\, 2N_3\,nN_1^2\,\left(\frac{1}{\lambda_2^2} + \frac{1}{\lambda_3^2}\right) \notag \\
    & \qquad \qquad \qquad \qquad \qquad -\, 2N_3\,(nN_1)^2\,\left(\frac{1}{\lambda_2^2} + \frac{1}{\lambda_3^2}\right)\,\cdot\, \frac{1}{\lambda_3^2}\,2N_3\,2n\,N_1^2\,\big] \notag \\
    &=\; \frac{1}{\lambda_3^2}\,2N_3\,(2nN_1)^2\,\cdot 2N_3\,(nN_1)^2\, \cdot\, 2N_3\,2(n-1)\,N_1^2\,\frac{1}{\lambda_1^2}\,\left(\frac{1}{\lambda_2^2} + \frac{1}{\lambda_3^2}\right) \notag \\
    & \qquad -\, \frac{1}{\lambda_3^2}\,2N_3\,2(nN_1)^2\,\cdot\,\frac{1}{\lambda_1^2 \lambda_3^2}\,(2N_3)^2\,2^2\,(nN_1^2)^2\,(n-1)\;+\;0 \notag \\
    &=\; \frac{8}{\lambda_1^2 \lambda_3^2}\,(2N_3)^3 (nN_1)^2 (nN_1^2)^2 (n-1) \left( \frac{1}{\lambda_2^2} + \frac{1}{\lambda_3^2} - \frac{1}{\lambda_3^2} \right)\;=\; \frac{8 (2N_3)^3 n^4 (n-1) N_1^6}{\lambda_1^2 \lambda_2^2 \lambda_3^2}\,,
\end{align}
}

{
\allowdisplaybreaks
\begin{align}\label{supp_cons_3}
    B_{01} \;&=\; \begin{vmatrix}
       b_{01} & b_{12} & b_{13}\\
       b_{02} & b_{22} & b_{23} \\
       b_{03} & b_{23} & b_{33} 
    \end{vmatrix} 
    \;=\; b_{01}\,\left(b_{22} b_{33} - b_{23}^2\right) \;-\; b_{12}\,\left(b_{02} b_{33} - b_{03} b_{23}\right) \;+\; b_{13}\,\left(b_{02} b_{23} - b_{03} b_{22}\right) \notag \\
    &=\; \frac{1}{\lambda_3^2}\,2N_3\,2n\,N_1\,\Big[ 2N_3\,(n N_1)^2\,\left(\frac{1}{\lambda_2^2} + \frac{1}{\lambda_3^2}\right) \left\{ \frac{1}{\lambda_1^2}\,2N_3\,2(n-1)\,N_1^2 \,+\, 2N_3\,N_1^2\,\left(\frac{1}{\lambda_2^2} + \frac{1}{\lambda_3^2}\right) \right\} \notag \\
    &\; \qquad \qquad    -\,\left( 2N_3\,nN_1^2 \right)^2\,\left(\frac{1}{\lambda_2^2} + \frac{1}{\lambda_3^2}\right)^2 \Big] \notag \\
    &\qquad -\, \frac{1}{\lambda_3^2}\,2N_3\,2 (n N_1)^2\,\Big[\,2N_3\,n N_1\,\left(\frac{1}{\lambda_2^2} + \frac{1}{\lambda_3^2}\right)\, \left\{ \frac{1}{\lambda_1^2}\,2N_3\,2(n-1)\,N_1^2 \,+\, 2N_3\,N_1^2\,\left(\frac{1}{\lambda_2^2} + \frac{1}{\lambda_3^2}\right) \right\} \notag \\
    & \qquad \qquad \qquad   - \, \left\{ \frac{1}{\lambda_1^2}\,2N_3\,2(n-1)\,N_1 \,+\, 2N_3\,N_1\,\left(\frac{1}{\lambda_2^2} + \frac{1}{\lambda_3^2}\right) \right\}\,2N_3\,n N_1^2\,\left(\frac{1}{\lambda_2^2} + \frac{1}{\lambda_3^2}\right) \Big] \notag \\
    & \qquad +  \frac{1}{\lambda_3^2}\,2N_3\,2n\,N_1^2\,\Big[\, 2N_3\,n N_1\,\left(\frac{1}{\lambda_2^2} + \frac{1}{\lambda_3^2}\right) \, \cdot\, 2N_3\,n N_1^2\,\left(\frac{1}{\lambda_2^2} + \frac{1}{\lambda_3^2}\right) \notag  \\
    & \qquad \qquad \qquad  -\, \left\{ \frac{1}{\lambda_1^2}\,2N_3\,2(n-1)\,N_1 \,+\, 2N_3\,N_1\,\left(\frac{1}{\lambda_2^2} + \frac{1}{\lambda_3^2}\right) \right\}\,2N_3\,(nN_1)^2\,\left(\frac{1}{\lambda_2^2} + \frac{1}{\lambda_3^2}\right) \big] \notag \\
    &=\;\frac{1}{\lambda_3^2}\,2N_3\,2n\,N_1\,\cdot \, \frac{1}{\lambda_1^2}\,(2N_3)^2\,2(n-1)\,n^2 N_1^4\,\left(\frac{1}{\lambda_2^2} + \frac{1}{\lambda_3^2}\right) \notag \\
    & \;\; -\, \frac{1}{\lambda_3^2}\,2N_3\,2 (n N_1)^2\,\Big[\,(2N_3)^2\,2(n-1)\,nN_1^3\,\frac{1}{\lambda_1^2}\left(\frac{1}{\lambda_2^2} + \frac{1}{\lambda_3^2}\right) \,+\, (2N_3)^2\,nN_1^3\,\left(\frac{1}{\lambda_2^2} + \frac{1}{\lambda_3^2}\right)^2 \notag \\
    & \qquad \qquad \qquad  -\, (2N_3)^2\,2(n-1)\,nN_1^3\,\frac{1}{\lambda_1^2}\left(\frac{1}{\lambda_2^2} + \frac{1}{\lambda_3^2}\right) \,-\, (2N_3)^2\,nN_1^3\,\left(\frac{1}{\lambda_2^2} + \frac{1}{\lambda_3^2}\right)^2 \Big] \notag \\
    & \;\; -\, (2N_3)^3\,2(n-1)\,2 n^3 N_1^5\,\frac{1}{\lambda_1^2} \frac{1}{\lambda_3^2}\left(\frac{1}{\lambda_2^2} + \frac{1}{\lambda_3^2}\right) \notag \\
    &=\; 0\,,
\end{align}
}

{
\allowdisplaybreaks
\begin{align}\label{supp_cons_4}
    B_{02} \;&=\; \begin{vmatrix}
       b_{01} & b_{11} & b_{13}\\
       b_{02} & b_{12} & b_{23} \\
       b_{03} & b_{13} & b_{33} 
    \end{vmatrix} 
    \;=\; b_{01}\,\left(b_{12} b_{33} - b_{13} b_{23}\right) \;-\; b_{11}\,\left(b_{02} b_{33} - b_{03} b_{23}\right) \;+\; b_{13}\,\left(b_{02} b_{13} - b_{03} b_{12}\right) \notag \\
    &=\; \frac{1}{\lambda_3^2}\,2N_3\,2n\,N_1\,\Big[ \frac{1}{\lambda_3^2}\,2N_3\,2(n N_1)^2\, \left\{ \frac{1}{\lambda_1^2}\,2N_3\,2(n-1)\,N_1^2 \,+\, 2N_3\,N_1^2\,\left(\frac{1}{\lambda_2^2} + \frac{1}{\lambda_3^2}\right) \right\} \notag \\
    &\; \qquad \qquad    -\,\frac{1}{\lambda_3^2}\,2N_3\,2n N_1^2\,\cdot 2N_3\,n N_1^2\,\left(\frac{1}{\lambda_2^2} + \frac{1}{\lambda_3^2}\right) \Big] \notag \\
    &\;\; -\, \frac{1}{\lambda_3^2}\,2N_3\,(2n N_1)^2\,\Big[\,2N_3\,n N_1\,\left(\frac{1}{\lambda_2^2} + \frac{1}{\lambda_3^2}\right)\, \left\{ \frac{1}{\lambda_1^2}\,2N_3\,2(n-1)\,N_1^2 \,+\, 2N_3\,N_1^2\,\left(\frac{1}{\lambda_2^2} + \frac{1}{\lambda_3^2}\right) \right\} \notag \\
    & \qquad \qquad \qquad   - \, \left\{ \frac{1}{\lambda_1^2}\,2N_3\,2(n-1)\,N_1 \,+\, 2N_3\,N_1\,\left(\frac{1}{\lambda_2^2} + \frac{1}{\lambda_3^2}\right) \right\}\,2N_3\,n N_1^2\,\left(\frac{1}{\lambda_2^2} + \frac{1}{\lambda_3^2}\right) \Big] \notag \\
    & \; +  \frac{1}{\lambda_3^2}\,2N_3\,2n\,N_1^2\,\Big[\, 2N_3\,n N_1\,\left(\frac{1}{\lambda_2^2} + \frac{1}{\lambda_3^2}\right) \, \cdot\, \frac{1}{\lambda_3^2}\, 2N_3\,2n N_1^2 \notag  \\
    & \qquad \qquad  -\, \left\{ \frac{1}{\lambda_1^2}\,2N_3\,2(n-1)\,N_1 \,+\, 2N_3\,N_1\,\left(\frac{1}{\lambda_2^2} + \frac{1}{\lambda_3^2}\right) \right\}\,\frac{1}{\lambda_3^2}\,2N_3\,2(nN_1)^2 \big] \notag  \\
    &=\;\frac{1}{\lambda_1^2 \lambda_3^4}\,(2N_3)^3 2^3 (n-1) n^3 N_1^5 \,-\,\frac{1}{\lambda_1^2 \lambda_3^4}\,(2N_3)^3 2^3 (n-1) n^3 N_1^5 \notag  \\
    &=\; 0\,,
\end{align}
}

{
\allowdisplaybreaks
\begin{align}\label{supp_cons_5}
    B_{03} \;&=\; \begin{vmatrix}
       b_{01} & b_{11} & b_{12}\\
       b_{02} & b_{12} & b_{22} \\
       b_{03} & b_{13} & b_{23} 
    \end{vmatrix} 
    \;=\; b_{01}\,\left(b_{12} b_{23} - b_{22} b_{13}\right) \;-\; b_{11}\,\left(b_{02} b_{23} - b_{22} b_{03}\right) \;+\; b_{12}\,\left(b_{02} b_{13} - b_{03} b_{12}\right) \notag \\
    &=\; \frac{1}{\lambda_3^2}\,2N_3\,2n\,N_1\,\Big[ \frac{1}{\lambda_3^2}\,2N_3\,2(n N_1)^2\,\cdot\,2N_3\,n N_1^2\,\left(\frac{1}{\lambda_2^2} + \frac{1}{\lambda_3^2}\right) \notag \\
    & \qquad \qquad -\, 2N_3\,(nN_1)^2\,\left(\frac{1}{\lambda_2^2} + \frac{1}{\lambda_3^2}\right)\,\cdot\,\frac{1}{\lambda_3^2}\,2N_3\,2n N_1^2 \Big] \notag \\
    &\;\; -\, \frac{1}{\lambda_3^2}\,2N_3\,(2n N_1)^2\,\Big[\,2N_3\,n N_1\,\left(\frac{1}{\lambda_2^2} + \frac{1}{\lambda_3^2}\right)\, \cdot\,2N_3\,n N_1^2\,\left(\frac{1}{\lambda_2^2} + \frac{1}{\lambda_3^2}\right) \notag \\
    & \qquad \qquad   - \, 2N_3\,(n N_1)^2 \left(\frac{1}{\lambda_2^2} + \frac{1}{\lambda_3^2}\right) \,\cdot\, \left\{\frac{1}{\lambda_1^2}\, 2N_3\,2(n-1)\,N_1 \,+\, 2N_3 N_1\,\left(\frac{1}{\lambda_2^2} + \frac{1}{\lambda_3^2}\right) \right\} \Big] \notag \\
    & \; +  \frac{1}{\lambda_3^2}\,2N_3\,2(n N_1)^2\,\Big[\, 2N_3\,n N_1\,\left(\frac{1}{\lambda_2^2} + \frac{1}{\lambda_3^2}\right) \, \cdot\, \frac{1}{\lambda_3^2}\, 2N_3\,2n N_1^2 \notag \\
    & \qquad \qquad \qquad \qquad -\, \left\{ \frac{1}{\lambda_1^2}\,2N_3\,2(n-1)\,N_1 \,+\, 2N_3\,N_1\,\left(\frac{1}{\lambda_2^2} + \frac{1}{\lambda_3^2}\right) \right\}\,\frac{1}{\lambda_3^2}\,2N_3\,2(nN_1)^2 \big] \notag \\
    &=\;0 + \frac{1}{\lambda_1^2 \lambda_3^2}\,\left(\frac{1}{\lambda_2^2} + \frac{1}{\lambda_3^2}\right)\,(2N_3)^3 2^3 (n-1) n^4 N_1^5 \,-\,\frac{1}{\lambda_1^2 \lambda_3^4}\,(2N_3)^3 2^3 (n-1) n^4 N_1^5 \notag  \\
    &=\; \frac{1}{\lambda_1^2 \lambda_2^2 \lambda_3^2}\,(2N_3)^3 2^3 (n-1) n^4 N_1^5\,.
\end{align}
}
With the above, we have from (\ref{supp_cons_1})
\begin{align}\label{supp_cons_6}
    \vert B \vert \;&=\; b_{00}B_{00} \,-\, b_{03} B_{03} \notag \\
    &=\; \left( \frac{1}{\lambda_0^2}\, 2N_3\, 2n\, (N_1 -1) \;+\; \frac{1}{\lambda_1^2}\, 2N_3\, 2\,(n-1) \;+\; \frac{1}{\lambda_2^2}\, 2N_3 \;+\; \frac{1}{\lambda_3^2}\, 2N_3 \right)\, \frac{8 (2N_3)^3 n^4 (n-1) N_1^6}{\lambda_1^2 \lambda_2^2 \lambda_3^2} \notag \\
    & \qquad -\, \left( \frac{1}{\lambda_1^2}\, 2N_3\, 2\,(n-1)\,N_1 \;+\; \frac{1}{\lambda_2^2}\, 2N_3\,N_1 \;+\; \frac{1}{\lambda_3^2}\, 2N_3\,N_1 \right)\, \frac{1}{\lambda_1^2 \lambda_2^2 \lambda_3^2}\,(2N_3)^3 2^3 (n-1) n^4 N_1^5 \notag \\
    &=\; \frac{1}{\lambda_0^2 \lambda_1^2 \lambda_2^2 \lambda_3^2}\,16\,(2N_3)^4 (n-1)\,n^5\,(N_1-1)\,N_1^6\,.
\end{align}
From (\ref{supp_cons_2}) and (\ref{supp_cons_6}), we have using some straightforward algebra
\begin{align}\label{supp_cons_7}
    \frac{B_{00}}{\vert B \vert} \;=\; \frac{\lambda_0^2}{4 N_3\,n\,(N_1-1)} \;=\; O\left(\frac{1}{N_3\,n\,N_1}\right)\,,
\end{align}
as $\lambda_0^2$ is a constant. Likewise, observe that

{
\allowdisplaybreaks
\begin{align}\label{supp_cons_8}
    B_{11} \;&=\; \begin{vmatrix}
       b_{00} & b_{02} & b_{03}\\
       b_{02} & b_{22} & b_{23} \\
       b_{03} & b_{23} & b_{33} 
    \end{vmatrix} 
    \;=\; b_{00}\,\left(b_{22} b_{33} - b_{23}^2\right) \;-\; b_{02}\,\left(b_{02} b_{33} - b_{03} b_{23}\right) \;+\; b_{03}\,\left(b_{02} b_{23} - b_{03} b_{22}\right) \notag \\
    &=\; \left\{\frac{1}{\lambda_0^2}\,2N_3\,2n\,(N_1-1) \,+\, \frac{1}{\lambda_1^2}\,2N_3\,2(n-1) \,+\, 2N_3\,\left(\frac{1}{\lambda_2^2} + \frac{1}{\lambda_3^2}\right) \right\} \notag \\
    & \qquad \Big[ 2N_3\,(n N_1)^2\, \,\left(\frac{1}{\lambda_2^2} + \frac{1}{\lambda_3^2}\right)\,\left\{ \frac{1}{\lambda_1^2}\,2N_3\,2(n-1)\,N_1^2 \,+\, 2N_3\,N_1^2\,\left(\frac{1}{\lambda_2^2} + \frac{1}{\lambda_3^2}\right) \right\} \notag \\
    &\; \qquad \qquad    -\,(2N_3\,n N_1^2)^2\,\left(\frac{1}{\lambda_2^2} + \frac{1}{\lambda_3^2}\right)^2 \Big] \notag \\
    &\;\; -\, 2N_3\,n N_1\,\left(\frac{1}{\lambda_2^2} + \frac{1}{\lambda_3^2}\right)\,\Big[\,2N_3\,n N_1\,\left(\frac{1}{\lambda_2^2} + \frac{1}{\lambda_3^2}\right)\, \left\{ \frac{1}{\lambda_1^2}\,2N_3\,2(n-1)\,N_1^2 \,+\, 2N_3\,N_1^2\,\left(\frac{1}{\lambda_2^2} + \frac{1}{\lambda_3^2}\right) \right\} \notag \\
    & \qquad \qquad \qquad   - \, \left\{ \frac{1}{\lambda_1^2}\,2N_3\,2(n-1)\,N_1 \,+\, 2N_3\,N_1\,\left(\frac{1}{\lambda_2^2} + \frac{1}{\lambda_3^2}\right) \right\}\,2N_3\,n N_1^2\,\left(\frac{1}{\lambda_2^2} + \frac{1}{\lambda_3^2}\right) \Big] \notag \\
    & \;\; +  \left\{ \frac{1}{\lambda_1^2}\,2N_3\,2(n-1)\,N_1 \,+\, 2N_3\,N_1\,\left(\frac{1}{\lambda_2^2} + \frac{1}{\lambda_3^2}\right) \right\}\,\Big[\, 2N_3\,n N_1\,\left(\frac{1}{\lambda_2^2} + \frac{1}{\lambda_3^2}\right) \, 2N_3\,n N_1^2\,\left(\frac{1}{\lambda_2^2} + \frac{1}{\lambda_3^2}\right) \notag \\
    & \qquad \qquad  -\, \left\{ \frac{1}{\lambda_1^2}\,2N_3\,2(n-1)\,N_1 \,+\, 2N_3\,N_1\,\left(\frac{1}{\lambda_2^2} + \frac{1}{\lambda_3^2}\right) \right\}\,2N_3\,(nN_1)^2\,\left(\frac{1}{\lambda_2^2} + \frac{1}{\lambda_3^2}\right) \big] \notag \\
    &=\;  \left\{\frac{1}{\lambda_0^2}\,2N_3\,2n\,(N_1-1) \,+\, \frac{1}{\lambda_1^2}\,2N_3\,2(n-1) \,+\, 2N_3\,\left(\frac{1}{\lambda_2^2} + \frac{1}{\lambda_3^2}\right) \right\} \,(2N_3)^2\, 2(n-1)\,n^2N_1^4\,\frac{1}{\lambda_1^2}\,\left(\frac{1}{\lambda_2^2} + \frac{1}{\lambda_3^2}\right) \notag \\
    &\;\;- \left\{ \frac{1}{\lambda_1^2}\,2N_3\,2(n-1)\,N_1^2 \,+\, 2N_3\,N_1^2\,\left(\frac{1}{\lambda_2^2} + \frac{1}{\lambda_3^2}\right) \right\}\, \,(2N_3)^2\, 2(n-1)\,n^2 N_1^4\,\frac{1}{\lambda_1^2}\,\left(\frac{1}{\lambda_2^2} + \frac{1}{\lambda_3^2}\right) \notag \\
    &=\;  \,(2N_3)^3\, 4(n-1)\,n^3 N_1^4\,\frac{1}{\lambda_0^2 \lambda_1^2}\,\left(\frac{1}{\lambda_2^2} + \frac{1}{\lambda_3^2}\right)\,.
\end{align}
}
From (\ref{supp_cons_8}) and (\ref{supp_cons_6}), we have using some straightforward algebra
\begin{align}\label{supp_cons_9}
\frac{B_{11}}{\vert B \vert} \;=\; \frac{\left( \lambda_2^2 \,+\, \lambda_3^2 \right)}{8\,N_3\,n^2\,N_1^2}  \;=\; O\left(\frac{n^2\,N_1^2}{N_3\,n^2\,N_1^2}\right)\;=\; O\left( 
\frac{1}{N_3} \right)\,. 
\end{align}
Similarly we have
{
\allowdisplaybreaks
\begin{align}\label{supp_cons_10}
    B_{22} \;&=\; \begin{vmatrix}
       b_{00} & b_{01} & b_{03}\\
       b_{01} & b_{11} & b_{13} \\
       b_{03} & b_{13} & b_{33} 
    \end{vmatrix} 
    \;=\; b_{00}\,\left(b_{11} b_{33} - b_{13}^2\right) \;-\; b_{01}\,\left(b_{01} b_{33} - b_{03} b_{13}\right) \;+\; b_{03}\,\left(b_{01} b_{13} - b_{03} b_{11}\right) \notag \\
    &=\; \left\{\frac{1}{\lambda_0^2}\,2N_3\,2n\,(N_1-1) \,+\, \frac{1}{\lambda_1^2}\,2N_3\,2(n-1) \,+\, 2N_3\,\left(\frac{1}{\lambda_2^2} + \frac{1}{\lambda_3^2}\right) \right\} \notag \\
    & \qquad \Big[ \frac{1}{\lambda_3^2}\,2N_3\,(2n N_1)^2\, \,\left\{ \frac{1}{\lambda_1^2}\,2N_3\,2(n-1)\,N_1^2 \,+\, 2N_3\,N_1^2\,\left(\frac{1}{\lambda_2^2} + \frac{1}{\lambda_3^2}\right) \right\} \,-\,\left(\frac{1}{\lambda_3^2}\,2N_3\,2n N_1^2\right)^2 \Big] \notag \\
    &\;\; -\, \frac{1}{\lambda_3^2}\,2N_3\,2n N_1\,\Big[\frac{1}{\lambda_3^2}\,\,2N_3\,2n N_1\, \left\{ \frac{1}{\lambda_1^2}\,2N_3\,2(n-1)\,N_1^2 \,+\, 2N_3\,N_1^2\,\left(\frac{1}{\lambda_2^2} + \frac{1}{\lambda_3^2}\right) \right\} \notag \\
    & \qquad \qquad \qquad   - \, \left\{ \frac{1}{\lambda_1^2}\,2N_3\,2(n-1)\,N_1 \,+\, 2N_3\,N_1\,\left(\frac{1}{\lambda_2^2} + \frac{1}{\lambda_3^2}\right) \right\}\,\frac{1}{\lambda_3^2}\,2N_3\,2n N_1^2 \Big] \notag \\
    & \;\; +  \left\{ \frac{1}{\lambda_1^2}\,2N_3\,2(n-1)\,N_1 \,+\, 2N_3\,N_1\,\left(\frac{1}{\lambda_2^2} + \frac{1}{\lambda_3^2}\right) \right\}\,\Big[\, \frac{1}{\lambda_3^2}\,2N_3\,2n N_1 \,\cdot\,\frac{1}{\lambda_3^2}\, 2N_3\,2n N_1^2 \notag \\
    & \qquad \qquad  -\, \left\{ \frac{1}{\lambda_1^2}\,2N_3\,2(n-1)\,N_1 \,+\, 2N_3\,N_1\,\left(\frac{1}{\lambda_2^2} + \frac{1}{\lambda_3^2}\right) \right\}\,\frac{1}{\lambda_3^2}\,2N_3\,(2nN_1)^2 \big] \notag \\
    &=\;  \left\{\frac{1}{\lambda_0^2}\,2N_3\,2n\,(N_1-1) \,+\, \frac{1}{\lambda_1^2}\,2N_3\,2(n-1) \,+\, 2N_3\,\left(\frac{1}{\lambda_2^2} + \frac{1}{\lambda_3^2}\right) \right\} \,\frac{1}{\lambda_3^2}\,2N_3\,(2nN_1)^2 \notag \\
    & \qquad \qquad \cdot \,\left\{ \frac{1}{\lambda_1^2}\,2N_3\,2(n-1)\,N_1^2 \,+\, \frac{1}{\lambda_2^2}\,2N_3\,N_1^2 \right\} \notag \\
    &\;\;- \left\{ \frac{1}{\lambda_1^2}\,2N_3\,2(n-1)\,N_1 \,+\, 2N_3\,N_1\,\left(\frac{1}{\lambda_2^2} + \frac{1}{\lambda_3^2}\right) \right\}\, \frac{1}{\lambda_3^2}\,2N_3\,(2nN_1)^2 \,\left\{ \frac{1}{\lambda_1^2}\,2N_3\,2(n-1)\,N_1 \,+\, \frac{1}{\lambda_2^2}\,2N_3\,N_1 \right\}  \notag \\
    &=\;  \,(2N_3)^3\, 4(n-1)\,n^3 N_1^4\,\frac{1}{\lambda_0^2 \lambda_1^2}\,\left(\frac{1}{\lambda_2^2} + \frac{1}{\lambda_3^2}\right) \notag \\
    &=\; \frac{2N_3\,2n\,(N_1-1)}{\lambda_0^2}\,\cdot\, \frac{2N_3\,(2nN_1)^2}{\lambda_3^2}\,2N_3N_1^2\,\left\{ \frac{2(n-1)}{\lambda_1^2} + \frac{1}{\lambda_2^2} \right\} \notag \\
    &=\; \frac{(2N_3)^3\,8n^3\,(N_1-1)N_1^4}{\lambda_0^2 \lambda_1^2 \lambda_2^2 \lambda_3^2}\,\left( 2(n-1)\lambda_2^2 + \lambda_1^2 \right)\,.
\end{align}
}
With some straightforward algebraic manipulations, (\ref{supp_cons_10}) and (\ref{supp_cons_6}) together yield
\begin{align}\label{supp_cons_11}
\frac{B_{22}}{\vert B \vert} \;=\; \frac{\left( 2(n-1)\lambda_2^2 \,+\, \lambda_1^2 \right)}{4\,N_3\,n^2(n-1)\,N_1^2}  \;=\; O\left(\frac{n^3\,N_1^2}{N_3\,n^3\,N_1^2}\right)\;=\; O\left( 
\frac{1}{N_3} \right)\,. 
\end{align}

\noindent And finally, 
{
\allowdisplaybreaks
\begin{align}
    B_{33} \;&=\; \begin{vmatrix}
       b_{00} & b_{01} & b_{02}\\
       b_{01} & b_{11} & b_{12} \\
       b_{02} & b_{12} & b_{22} 
    \end{vmatrix} 
    \;=\; b_{00}\,\left(b_{11} b_{22} - b_{12}^2\right) \;-\; b_{01}\,\left(b_{01} b_{22} - b_{02} b_{12}\right) \;+\; b_{02}\,\left(b_{01} b_{12} - b_{02} b_{11}\right) \notag \\
    &=\; \left\{\frac{1}{\lambda_0^2}\,2N_3\,2n\,(N_1-1) \,+\, \frac{1}{\lambda_1^2}\,2N_3\,2(n-1) \,+\, 2N_3\,\left(\frac{1}{\lambda_2^2} + \frac{1}{\lambda_3^2}\right) \right\} \notag \\
    & \qquad \Big[\, \frac{1}{\lambda_3^2}\,2N_3\,(2n N_1)^2\, \, 2N_3\,(nN_1)^2\,\left(\frac{1}{\lambda_2^2} + \frac{1}{\lambda_3^2}\right)  \,-\,\left(\frac{1}{\lambda_3^2}\,2N_3\,2(n N_1)^2\right)^2 \Big] \notag \\
    & \;\;\;\; -\; \frac{2N_3\,2n\,N_1}{\lambda_3^2} \Big[\,\frac{2N_3\,2n\,N_1}{\lambda_3^2}\, 2N_3\,(nN_1)^2\,\left(\frac{1}{\lambda_2^2} + \frac{1}{\lambda_3^2}\right) \,-\, 2N_3\,nN_1\,\left(\frac{1}{\lambda_2^2} + \frac{1}{\lambda_3^2}\right)\,\frac{2N_3\,2\,(n N_1)^2}{\lambda_3^2}  \Big] \notag \\
    &\;\;\;\; +\; 2N_3\,nN_1\,\left(\frac{1}{\lambda_2^2} + \frac{1}{\lambda_3^2}\right)\, \Big[\,\frac{2N_3\,2\,n N_1}{\lambda_3^2}\,\cdot\, \frac{2N_3\,2\,(n N_1)^2}{\lambda_3^2}\;-\; \frac{2N_3\,(2n N_1)^2}{\lambda_3^2}\,\cdot\,2N_3\,nN_1\,\left(\frac{1}{\lambda_2^2} + \frac{1}{\lambda_3^2}\right) \Big] \notag \\
    &=\; \left\{\frac{1}{\lambda_0^2}\,2N_3\,2n\,(N_1-1) \,+\, \frac{1}{\lambda_1^2}\,2N_3\,2(n-1) \,+\, 2N_3\,\left(\frac{1}{\lambda_2^2} + \frac{1}{\lambda_3^2}\right) \right\}\,\frac{2N_3\,(2nN_1)^2}{\lambda_3^2}\,\cdot\,\frac{2N_3\,(n N_1)^2}{\lambda_2^2}\, \notag \\
    & \;\;\;\; -\; 2N_3\,nN_1\,\left(\frac{1}{\lambda_2^2} + \frac{1}{\lambda_3^2}\right)\,\cdot\, \frac{2N_3\,(2n N_1)^2}{\lambda_3^2}\,\cdot\, \frac{2N_3\,nN_1}{\lambda_2^2} \notag \\
    &=\; 2N_3\,2\left\{ \frac{n(N_1-1)}{\lambda_0^2} + \frac{n-1}{\lambda_1^2} \right\}\,\frac{(2N_3)^2\,4(nN_1)^4}{\lambda_2^2\,\lambda_3^2} \notag \\
    &=\; \frac{(2N_3)^3\,8\,n^4 N_1^4}{\lambda_0^2\,\lambda_1^2\,\lambda_2^2\,\lambda_3^2}\,\left\{ n(N_1-1)\,\lambda_1^2 + (n-1)\,\lambda_0^2\right\} \label{supp_cons_13}
\end{align}
}
With some straightforward algebraic manipulations, (\ref{supp_cons_13}) and (\ref{supp_cons_6}) together yield
\begin{align}\label{supp_cons_14}
\frac{B_{33}}{\vert B \vert} \;=\; O\left(\frac{n\,N_1^3}{N_3\,n^2\,N_1^3}\right)\;=\; O\left( 
\frac{1}{N_3\,n} \right)\,. 
\end{align}
Combining (\ref{supp_cons_7}), (\ref{supp_cons_9}), (\ref{supp_cons_11}) and (\ref{supp_cons_14}), we have that $B_{00}/\vert B \vert$, $B_{11}/\vert B \vert$, $B_{22}/\vert B \vert$ and $B_{33}/\vert B \vert$ are all $o(1)$ quantities as $N_3, n$ and $N_1 \to \infty$. This completes the proof that $\hat{\sigma}_e^2, \hat{\sigma}_1^2, \hat{\sigma}_2^2$ and $\hat{\sigma}_3^2$ are consistent estimators of their respective population parameters.

\end{proof}

\begin{proof}[Proof of Theorem \ref{Thm_asymp_T_level2}]
   Define $Z_{igjk} := Y_{i'gjk}$\, for $i=i' - N_3,\, i'=N_3+1, \dots, 2N_3$. Then from Proposition \ref{Prop_MLE_delta2} we have
   {
    \allowdisplaybreaks
   \begin{align}\label{supp_cons_12}
   \begin{split}
     \hat{\delta} \;&=\; \frac{1}{N_3 n N_1} \dis \sum_{i=1}^{N_3} \sum_{j=1}^n \sum_{k=1}^{N_1} (Y_{i1jk} - Y_{i2jk}) \,-\,  \frac{1}{N_3 n N_1} \dis \sum_{i=N_3+1}^{2N_3} \sum_{j=1}^n \sum_{k=1}^{N_1} (Y_{i1jk} - Y_{i2jk}) \\
     &=\; \frac{1}{N_3 n N_1} \dis \sum_{i=1}^{N_3} \sum_{j=1}^n \sum_{k=1}^{N_1} \left((Y_{i1jk} - Y_{i2jk}) - (Z_{i1jk} - Z_{i2jk})\right)\;=:\; \frac{1}{N_3} \dis \sum_{i=1}^{N_3} U_i\,,\\
     \textrm{and} \qquad \hat{\delta} - \delta \;&=\; \frac{1}{N_3} \dis \sum_{i=1}^{N_3} (U_i - \delta)\,,
     \end{split}
   \end{align}
   }
   where\, $U_i := \frac{1}{n N_1} \sum_{j=1}^n \sum_{k=1}^{N_1} \left((Y_{i1jk} - Y_{i2jk}) - (Z_{i1jk} - Z_{i2jk})\right)$ for $i=1, \dots, N_3$. Now note that
   {
    \allowdisplaybreaks
   \begin{align*}
     var(U_i) \;&=\; \frac{1}{(nN_1)^2} \dis \sum_{j,j'=1}^n \sum_{k,k'=1}^{N_1} cov \left( (Y_{i1jk} - Y_{i2jk}) - (Z_{i1jk} - Z_{i2jk})\,,\, (Y_{i1j'k'} - Y_{i2j'k'}) - (Z_{i1j'k'} - Z_{i2j'k'}) \right)\\
     &=\; \frac{4}{(nN_1)^2}\,\left[ nN_1 (\sigma_e^2 + \sigma_1^2 + \sigma_2^2)\,+\, nN_1(N_1-1)(\sigma_1^2 + \sigma_2^2) \,+\, n(n-1)N_1^2 \sigma_2^2  \right]\\
     &=\; \frac{4}{nN_1}\,(\sigma_e^2 + N_1 \sigma_1^2 + nN_1 \sigma_2^2) \\
     &=:\; \Delta\,.
   \end{align*}
    }
    Define $V_i := \frac{U_i - \delta}{\sqrt{\Delta}}$ for $i=1, \dots, N_3$. Clearly $E(V_i)=0$ and $var(V_i)=1$. Combining this and (\ref{supp_cons_12}), we have $\frac{\hat{\delta} - \delta}{\sqrt{\Delta}} = \frac{1}{N_3} \sum_{i=1}^{N_3} V_i$. Then recalling the expression of $\Var(\hat{\delta})$ from Proposition \ref{Prop_var_delta2}, we have using the central limit theorem
    {
    \allowdisplaybreaks
   \begin{align*}
     \frac{\sqrt{N_3}\left(\hat{\delta} - \delta\right)}{\sqrt{\Delta}} \;=\; \frac{\hat{\delta} - \delta}{\sqrt{\Var(\hat{\delta})}} \;=\; \sqrt{N_3}\,\frac{1}{N_3} \dis \sum_{i=1}^{N_3} V_i \; \overset{d}{\longrightarrow}\; N(0,1)
   \end{align*}
   }
as $N_3 \to \infty$, which completes the proof of the theorem. 

\end{proof}

\begin{proof}[Proof of Proposition \ref{Prop_samp_size2}]
  Note that from Proposition \ref{Prop_var_delta2}, we have  
\begin{align}\label{Prop2_eq1}
    \begin{split}
        \Var(\hat{\delta}) \;&=\; \frac{4}{N_3 \,n\, N_1}\,\left( \sigma_e^2 \,+\, N_1\,\sigma_1^2 \,+\, n\,N_1\,\sigma_2^2 \right)\\
        &=\; \frac{1}{N_1}\,A' \,\,+ \, B'\,,
    \end{split}
\end{align}
where\, $A' = \frac{4\,\sigma_e^2}{N_3\,n}$\, and\, $B'= \frac{4\,\left(\sigma_1^2 + n\,\sigma_2^2\right)}{N_3\,n}$. Combining (\ref{power_ind42}) and (\ref{Prop2_eq1}), we have 
\begin{align*}
    \frac{1}{N_1}\,A' \,\,+ \, B' \;\leq\; \frac{\delta^2}{\big(z_{1-\alpha/2} \,+\, z_{1-\beta} \big)^2}\;,
\end{align*}
which implies
\begin{align*}
        N_1 \;\geq\; A'\,\left( \frac{\delta^2}{\big(z_{1-\alpha/2} \,+\, z_{1-\beta} \big)^2} \,-\, B'\, \right)^{-1}\,.
\end{align*}

\end{proof}

\begin{proof}[Proof of Proposition \ref{Prop_var_delta1}]
From Proposition \ref{Prop_MLE_delta1}, we have
\begin{align*}
    \begin{split}
        \hat{\delta} \;&=\; \left( \frac{1}{N_3 N_2 \,n} \dis \sum_{i=1}^{N_3} \sum_{j=1}^{N_2} \sum_{k=1}^{n} Y_{ij1k}\,\mathbbm{1}(X_i=1) \;-\; \frac{1}{N_3 N_2 \,n} \dis \sum_{i=1}^{N_3} \sum_{j=1}^{N_2} \sum_{k=1}^{n} Y_{ij1k}\,\mathbbm{1}(X_i=0) \right)\\
        & \qquad - \; \left( \frac{1}{N_3 N_2\, n} \dis \sum_{i=1}^{N_3} \sum_{j=1}^{N_2} \sum_{k=1}^{n} Y_{ij2k}\,\mathbbm{1}(X_i=1) \;-\; \frac{1}{N_3 N_2 \,n} \dis \sum_{i=1}^{N_3} \sum_{j=1}^{N_2} \sum_{k=1}^{n} Y_{ij2k}\,\mathbbm{1}(X_i=0) \right)\\
        &=:\; \left(I - II\right) \;-\; \left(III - IV\right)\,.
    \end{split}
\end{align*}
Observe that 
\begin{align}\label{var1}
    \begin{split}
        \Var(I) \;&=\; \frac{1}{(N_3 N_2\, n)^2}\,\times \, \Big[ N_3 N_2\, n\; \sigma^2 \,+\, \sum_{i=1}^{N_3} \sum_{j=1}^{N_2} \sum_{1\leq k\neq k' \leq n} \Cov(Y_{ij1k}, Y_{ij1k'}) \\
        & \qquad +\, \sum_{i=1}^{N_3} \sum_{1\leq j\neq j' \leq N_2} \sum_{k,k'=1}^{n} \Cov(Y_{ij1k}, Y_{ij'1k'}) \Big]\\
        &=\; \frac{1}{(N_3 N_2\, n)^2}\,\times \, \Big[ N_3 N_2\, n\; \sigma^2 \,+\,  N_3 N_2\, n (n -1)\, \sigma^2\,\rho_1 \,+\, N_3 N_2 (N_2 -1) n^2 \, \sigma^2\,\rho_{(2)} \Big]\\
        &= \; \frac{\sigma^2}{N_3 N_2\, n} \,\times \, \Big[ 1 \,+\, \rho_1\,(n-1) \,+\, \rho_{(2)}\,n\, (N_2 -1) \Big] \;=:\;  \frac{\sigma^2\, f}{N_3 N_2\, n} \,.
    \end{split}
\end{align}
Moreover, due to randomization at level three, the terms I and II are independent. This yields 
\begin{align}\label{var_MLE_1}
    %\Var(\hat{\delta_g}) 
    \Var(I-II)\;=\; \frac{2\,\sigma^2\, f}{N_3 N_2\, n}\,.
\end{align}
On similar lines,  
\begin{align}\label{var_MLE_2}
    %\Var(\hat{\delta_g}) 
    \Var(III-IV)\;=\; \frac{2\,\sigma^2\, f}{N_3 N_2\, n}\,.
\end{align}
Thus, we have
\begin{align}\label{var_delta_hat}
    \begin{split}
        \Var(\hat{\delta}) \;&=\; \Var(I-II) \,+\, \Var(III-IV) \,-\,\, 2\,\Cov(I-II,\, III-IV)\\
        &=\; \frac{2\,\sigma^2\, f}{N_3 N_2\, n}\,+\, \frac{2\,\sigma^2\, f}{N_3 N_2\, n} \,\,-\,\, 2\,\Cov(I,III) \,-\, 2\,\Cov(II,IV)\,,
    \end{split}
\end{align}
where
\begin{align}\label{cov_terms}
    \begin{split}
        \Cov(I,III) \;&=\; \frac{1}{N_3^2 N_2^2\, n^2}\,\left[ \sum_{i=1}^{N_3} \sum_{j=1}^{N_2} \sum_{k,k'=1}^{n} \Cov(Y_{ij1k},Y_{ij2k'})\,+\, \sum_{i=1}^{N_3} \sum_{1\leq j\neq j'\leq N_2} \sum_{k,k'=1}^n \Cov(Y_{ij1k},Y_{ij'2k'}) \right]\\
        &=\; \frac{1}{N_3^2 N_2^2\, n^2}\,\left( N_3 N_2\, n^2\,\,\sigma^2 \rho_{(1)} \,+\, N_3 N_2 (N_2 -1) \,n^2\,\,\sigma^2 \rho_2  \right)\\
        &=\; \frac{\sigma^2}{N_3 N_2} \left(\, \rho_{(1)} \,+\, (N_2-1)\rho_2\,\right)\,.
    \end{split}
\end{align}
Along similar lines, it can be shown that $\Cov(II,IV) = \Cov(I,III)$. Thus, combining (\ref{var_delta_hat}) and (\ref{cov_terms}), we have 
\begin{align}\label{var_final}
    \begin{split}
        \Var(\hat{\delta}) \;&=\; \frac{4\,\sigma^2}{N_3 N_2\, n}\,\left( 1 + \rho_1\,(n-1) + \rho_{(2)}\,n\, (N_2 -1) \right)\; - \; \frac{4\sigma^2}{N_3 N_2} \left(\, \rho_{(1)} \,+\, (N_2-1)\rho_2\,\right)\\
        &=\; \frac{4}{N_3 N_2\, n}\, \left( \sigma^2 (1-\rho_1) \,+\, n\,\sigma^2\, (\rho_1 - \rho_{(1)}) \,+\, n\,(N_2-1)\,\sigma^2\,(\rho_{(2)} - \rho_2) \right)\\
        &=\; \frac{4}{N_3 N_2\, n}\, \left( \sigma_e^2 \,+\, n\,\sigma_{\text{grp}}^2 \,+\, \,+\, n\,(N_2-1)\,\sigma_{\text{grp}}^2 \right)\\
        &=\; \frac{4}{N_3 N_2\, n}\, \left( \sigma_e^2 \,+\, n\,N_2\,\sigma_{\text{grp}}^2 \right)\,.
    \end{split}
\end{align}

\end{proof}

\begin{proof}[Proof of Proposition \ref{Prop_samp_size1}]
Note that with $N_{11} = N_{12} = n$, we have from Proposition \ref{Prop_var_delta1}
\begin{align} \label{Prop_samp_size1_eq1}
    \begin{split}
        \Var(\hat{\delta}) \;&=\; \frac{4}{N_3 N_2\, n}\, \left( \sigma_e^2 \,+\,n\,N_2\,\sigma_{\text{grp}}^2 \right) \;   =:\; \frac{1}{n}\, A \;+\; B\,.
    \end{split}
\end{align}
Combining (\ref{power_ind4}) and (\ref{Prop_samp_size1_eq1}), we have 
\begin{align*}
 \frac{1}{n}\,A \;+\; B \;\leq\;  \frac{\delta^2}{\big(z_{1-\alpha/2} \,+\, z_{1-\beta} \big)^2}\;,  
\end{align*}
which implies
\begin{align*}
        n \;\geq\; A\,\left( \frac{\delta^2}{\big(z_{1-\alpha/2} \,+\, z_{1-\beta} \big)^2} \,-\, B\, \right)^{-1}\,.
    \end{align*}
    
\end{proof}

%\end{supplement}
\end{document}